\newcommand{\poly}{\textrm{poly}}
\newcommand{\row}{\textrm{row}}
\newtheorem{theorem}{Theorem}[section]
\newtheorem{lemma}[theorem]{Lemma}
\newtheorem{corollary}[theorem]{Corollary}
\theoremstyle{definition}
\newtheorem{definition}[theorem]{Definition}
\title{Massively Parallel Algorithms for the Stochastic Block Model} %TODO Please add
\newcommand*\samethanks[1][\value{footnote}]{\footnotemark[#1]}
\author{Zelin Li\thanks{School of Computer Science and Technology, University of Science and Technology of China, China. Supported in part by NSFC grant 62272431 and ``the Fundamental Research Funds for the Central Universities''. Email: \tt{meguru@mail.ustc.edu.cn, ppeng@ustc.edu.cn}. } \and Pan Peng\samethanks
\and Xianbin Zhu\thanks{Department of Computer Science, City University of Hong Kong, Hong Kong SAR. Supported in part by a grant from the Research Grants Council of the Hong Kong Special Administrative Region, China [Project No. CityU 11213620]. Email: \tt{xianbin.aaron-zhu@my.cityu.edu.hk}.}}
\begin{document}
	
	\date{}
	\maketitle
	
	\begin{abstract}
Learning the community structure of a large-scale graph is a fundamental problem in machine learning, computer science and statistics. Among others, the Stochastic Block Model (SBM) serves a canonical model for community detection and clustering, and the Massively Parallel Computation (MPC) model is a mathematical abstraction of real-world parallel computing systems, which provides a powerful computational framework for handling large-scale datasets. We study the problem of exactly recovering the communities in a graph generated from the SBM in the MPC model. Specifically, given $kn$ vertices that are partitioned into $k$ equal-sized clusters (i.e., each has size $n$),
a graph on these $kn$ vertices is randomly generated such that each pair of vertices is connected with probability~$p$ if they are in the same cluster and with probability $q$ if not, where $p > q > 0$. 

We give MPC algorithms for the SBM in the (very general) \emph{$s$-space MPC model}, where each machine is guaranteed to have memory $s=\Omega(\log n)$. Under the condition that\footnote{$\tilde{\Omega}(\cdot)$ hides $\poly(\log kn)$ factors.} 
$\frac{p-q}{\sqrt{p}}\geq \tilde{\Omega}(k^{\frac12}n^{-\frac12+\frac{1}{2(r-1)}})$ for any integer $r\in [3,O(\log n)]$, our first algorithm exactly recovers all the $k$ clusters in $O(kr\log_s n)$ rounds using $\tilde{O}(m)$ total space, or in $O(r\log_s n)$ rounds using $\tilde{O}(km)$ total space. If
$\frac{p-q}{\sqrt{p}}\geq \tilde{\Omega}(k^{\frac34}n^{-\frac14})$, our second algorithm achieves $O(\log_s n)$ rounds and $\tilde{O}(m)$ total space complexity. Both algorithms significantly improve upon a recent result of Cohen-Addad et al. [PODC'22], who gave  algorithms that only work in the \emph{sublinear space MPC model}, where each machine has local memory~$s=O(n^{\delta})$ for some constant $\delta>0$, with a much stronger condition on $p,q,k$. 
Our algorithms are based on collecting the $r$-step neighborhood of each vertex and comparing the difference of some statistical information generated from the local neighborhoods for each pair of vertices. To implement the clustering algorithms in parallel, we present efficient approaches for implementing some basic graph operations in the $s$-space MPC model.

%in~$O(kr/\delta)$ rounds in the \emph{sublinear space MPC model}, where each machine has local memory~$O(n^{\delta})$ for some constant $\delta>0$. When $\frac{p-q}{\sqrt{p}} \geq\tilde\Omega(k^{\frac34}n^{-\frac14})$, we also give an MPC clustering algorithm that works in $O(\log_s n)$ rounds in the \emph{$s$-space MPC model} where each machine is only guaranteed to have memory $s=\Omega(\log n)$. 
\end{abstract}

\section{Introduction}

Graph clustering is a fundamental task in machine learning, computer science and statistics. In this task, given a graph that may represent a social/information/biological network, the goal is to partition its vertex set into a few maximal subsets (called \emph{clusters} or \emph{communities}) of similar vertices. Depending on the context, a cluster may correspond to a social group of people with the same hobbies, a group of web-pages with similar contents or a set of proteins that interact very frequently. Intuitively, in a good clustering of a graph, there are few edges between different clusters while there are relatively many edges inside each cluster. There is no unified formalization on the notions of graph clustering and clusters. Here we focus on a natural and widely-used model for graph clustering, the \emph{stochastic block model (SBM)}. In the SBM, we are given a set $V$ of $N=kn$ vertices such that there is a hidden partition of $V$ with~$V=\bigcup_{i=1}^k V_i$, $V_i\cap V_j=\emptyset$ for any $1\leq i<j\leq k$, where each set $V_i$ is called a \emph{cluster} (or \emph{community}). For simplicity, we assume that each cluster has an equal size, i.e., $|V_i|=n$. We say a graph $G=(V,E)$ is generated from the SBM with parameters $n,p,q,k$, abbreviated as SBM($n,p,q,k$), if for any two vertices $u,v$ that belong to the same cluster, the edge $(u,v)$ appears in $G$ with probability $p$; for any two vertices $u,v$ that belong to two different clusters, the edge $(u,v)$ appears with probability $q$, where $0<q<p<1$.

Thanks to its simplicity and its ability in explaining the community structures in real world data, the SBM has been extensively studied in the computer science literature. Most previous work has been focusing on algorithms that work on a single machine, with the goal of extracting the communities with the optimal (computational and/or statistical) trade-offs between parameters $n,p,q,k$, for different types of recoveries (i.e., exact, weak, and partial recovery). Significant progress has been made on such algorithms (and their limitations) in the past decades (see the survey \cite{Abbe18}). However, most of these algorithms are essentially sequential and cannot be adapted to the parallel or distributed environment, which is unsatisfactory as modern graphs are becoming massive and most of them cannot be fitted into the main memory of a single machine.

We study the problem of exactly recovering communities of a graph from the SBM in the massively parallel computation (MPC) model \cite{KarloffSV10,GoodrichSZ11,BeameKS13}, which is a mathematical abstraction of
modern frameworks of real-world parallel computing systems like MapReduce \cite{DeanG08}, Hadoop \cite{Hadoop}, Spark \cite{ZahariaCFSS10} and Dryad \cite{IsardBYBF07}. In this model, there are $M$ machines that communicate in synchronous rounds, where the local memory of each machine is limited to $s$ words, each of $O(\log n)$ bits. A word is enough to store a node
or a machine identifier from a polynomial (in $n$) domain. Communication is the largest bottleneck in the MPC model. Take the graph problem as an example. The edges of the input graph are arbitrarily distributed across the $M$ machines initially. Ideally, we would like to use minimal number of rounds of computation while using small (say sublinear) space per machine and small total space (i.e., the sum of space used by all machines). 
%In this paper, all the MPC algorithms use total space $\tilde{O}(m)$, where $m$ is the number of edges in the graph.

Recently, Cohen-Addad et al. \cite{Cohen-AddadMS22} gave two algorithms \textsc{Majority}
and \textsc{Louvain} that recover the communities in a graph generated from SBM($n,p,q,k$) when $\frac{p-q}{\sqrt{p}}\geq \Omega(n^{-\frac14+\varepsilon})$ and $k$ is constant. They work in $O(\frac{1}{\varepsilon\cdot \delta})$ rounds in the \emph{sublinear space} MPC model, i.e., each machine has local memory $s=O(n^\delta)$, for any constant $\delta>0$.  Their algorithms and analysis improve upon previous sequential versions of \textsc{Majority} and \textsc{Louvain} given by Cohen-Addad et al \cite{Cohen-AddadKMS20}. Note that for any sequential algorithm, it is known that $\frac{p-q}{\sqrt{p}}=\Omega(\sqrt{\frac{\log n}{n}})$ is necessary for exact recovery even for $k=2$ \cite{AbbeBH16}; there exist spectral algorithms and SDP-based algorithms that find all clusters and achieve this parameter threshold \cite{Abbe18}. Therefore, it is natural to ask \emph{if one can obtain a round-efficient MPC algorithm in the sublinear space model with roughly the same parameter threshold}.

In this paper, we consider a more general setting that we call \emph{the $s$-space MPC model} in which the local memory $s$ is only guaranteed to satisfy that $s=\Omega(\log n)$. Nowadays, the growth rate of data volume far exceeds the growth rate of machine hardware storage and it is likely that we need much more machines to analyze large-scale data. Furthermore,
the problem of clustering of data points from some metric space on such a model has recently received increasing interest \cite{BhaskaraW18,EpastoMZ19,BateniEFM21,Cohen-AddadLNSS21,Cohen-AddadMZ22,coy2023parallel} (see also Section~\ref{sec:relatedwork}), %In such setting, a single machine has space $o(k)$. 
partly due to the fact that in some scenarios, the number of clusters $k$ is too large such that even just storing $k$ representatives of all the clusters is not possible in a single machine. Note that this model is more difficult to handle than the sublinear space model, and we need to carefully partition the data across machines so that different machines work in different ``regions of space'' to get a good tradeoff between communication and the used space. Here, we are interested in the question \emph{whether we can obtain an SBM clustering algorithm in the $s$-space MPC model with good tradeoffs between communication, space and SBM parameters}. %To the best of our knowledge, {there is no prior work on MPC algorithms for graph clustering in the $s$-space MPC model}. We give the first MPC algorithm that works in the $s$-space MPC model whose round complexity is independent of $k$. To implement algorithms in the $s$-space MPC model, we need more available algorithms for basic operation in such a setting. Therefore, we propose some basic algorithm in the $s$-space MPC model in Section 3.

\subsection{Our Results}
We give clustering algorithms for the SBM that work in the $s$-space MPC model where the local memory $s$ of each machine is only guaranteed to satisfy that $s = \Omega(\log n)$. Let $m=\Theta(kn^2p+k^2n^2q)$ denote the total number of edges of the graph (our conditions always imply that $p\geq \Omega(\frac{\log n}{n})$ and $k\leq n$). We use ``with high probability'' to denote ``with probability at least $1-O(n^{-1})$''.

Our first algorithm has the following performance guarantee. 

%\vspace{-0.5em}
\begin{restatable}{theorem}{secondmpc}\label{thm:secondmpc}
Let $r$ be any integer such that $3\leq r\leq O(\log n)$. Let $p,q \leq 0.75$ be parameters such that $\max\{p(1-p),q(1-q)\} \geq C_0 \log n/n$ where $C_0>0$ is some constant. %\pan{if we let $C_0>$ here, then we can simply replace $\theta=1/C_0^2$ later-on}.  
Suppose that 
$\frac{p-q}{\sqrt{p}} \geq \Omega\left(k^{\frac{1}{2}}n^{-\frac{1}{2}+\frac{1}{2(r-1)}}\log^7 (kn)\right)$. 
%$$\frac{p-q}{\sqrt{p(1-q)}} \geq C_0^{2} k^{1-\frac{1}{2(r-1)}}n^{-\frac{1}{2}+\frac{1}{2(r-1)}},$$
% where $\frac{1}{\theta} = \min \{ \frac{1}{C_0}, \frac{1}{C_0^2} \}$. 
Let $G$ be a random graph generated from SBM($n,p,q,k$). Then there exists an algorithm in the $s$-space MPC model that outputs $k$ clusters in $O(kr\log_s n)$ rounds with high probability where each machine has $s=\Omega(\log n)$ memory. The total space used by the algorithm is $\tilde O(m)$.
\end{restatable}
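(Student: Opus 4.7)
The plan is to decompose the algorithm into two logically separate layers: (i) a pairwise statistical test that, given the $r$-step neighborhoods of two vertices $u,v$, decides with high probability whether $u$ and $v$ belong to the same cluster; and (ii) an iterative MPC scheduler that repeatedly chooses a pivot, runs the test against all remaining vertices, and extracts one cluster at a time. The factor $k$ in the round complexity comes from the $k$ iterations; the factors $r$ and $\log_s n$ come respectively from expanding the neighborhood to depth $r$ and from the cost of one $s$-space MPC aggregation/join primitive.

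For the test, I would work with a statistic $T_r(u,v)$ computed from the $r$-step neighborhoods of $u$ and $v$, for instance the number of common vertices at distance exactly $r$, equivalently an entry of an appropriate power of the adjacency matrix. On an SBM random graph, the gap between the expectation of $T_r(u,v)$ for same-cluster pairs versus different-cluster pairs scales like a factor involving $(p-q)^2 \cdot (np)^{r-2} \cdot n$, while the standard deviation scales like roughly $(np)^{(r-1)/2}$; setting gap $\gg$ standard deviation rearranges exactly to the hypothesis $\frac{p-q}{\sqrt{p}} \geq \tilde{\Omega}(k^{1/2} n^{-1/2+1/(2(r-1))})$. A Bernstein/second-moment style analysis, using concentration of $r$-step neighborhood sizes in an SBM random graph, would show that a single threshold $\tau$ separates the two populations with probability at least $1 - n^{-\Omega(1)}$ per query, and a union bound over the at most $O(n^2)$ pairs actually queried preserves this with high probability. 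Given the test, the clustering scheme is immediate: in iteration $i$, pick any remaining vertex $u_i$, declare $V_i := \{v : T_r(u_i,v) > \tau\} \cup \{u_i\}$, remove $V_i$ from the graph, and recurse; $k$ iterations suffice.

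To realize a single iteration in the $s$-space MPC model I would implement the $r$-step expansion from the pivot by $r$ successive relational joins of a frontier set with the edge multiset; in the $s$-space model each such sort/join/aggregate is executable in $O(\log_s n)$ rounds, giving $O(r \log_s n)$ rounds per iteration and $O(kr \log_s n)$ in total. Computing $T_r(u_i,\cdot)$ for all $v$ simultaneously reduces to a single join between the pivot's $r$-neighborhood and a table of per-vertex $r$-neighborhoods (or a succinct proxy thereof), followed by a group-by count, again in $O(\log_s n)$ rounds. The iterative removal of $V_i$ only requires filtering the edge table by a sorted list of deleted vertex ids, another $O(\log_s n)$-round primitive. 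These primitives should be packaged as lemmas in an earlier section of the paper and then quoted here.

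The main obstacle will be holding the total space to $\tilde{O}(m)$: a raw $r$-step neighborhood can contain $\Omega((np)^r)$ entries, which dwarfs $m$ once $r$ exceeds a small constant, so materializing neighborhoods verbatim is not an option. The fix I would pursue is to store, in place of the set itself, a short per-vertex description sufficient for computing $T_r(u_i,\cdot)$ — for example an aggregated count of length-$r$ walks bucketed by endpoint, or a hashed sketch of the indicator vector — so that the per-vertex payload is $O(\mathrm{polylog}(kn))$ words and the global payload is $\tilde{O}(kn) = \tilde{O}(m)$ under the density hypothesis $p \geq \Omega(\log n / n)$. High-probability boundedness of the frontier sizes (needed so that no intermediate message stream exceeds the total-space budget) follows from standard concentration of sums of Bernoullis applied to the SBM. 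Collecting all of the above via a final union bound over the at most $k$ iterations and the $\mathrm{poly}(n)$ MPC primitives yields the theorem.
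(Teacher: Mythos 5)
Your high-level architecture---a pairwise distinguishing statistic from depth-$r$ information, wrapped in a pivot-and-extract scheduler running $k$ iterations of $O(r\log_s n)$ rounds each---matches the paper's structure, and your accounting of where the factors $k$, $r$, and $\log_s n$ come from is correct. But there are two substantive gaps between your sketch and a working proof.

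First, the statistic. You propose using the raw adjacency matrix $A$ (entries of $A^r$, or ``common vertices at distance exactly $r$''). This will not separate same-cluster from different-cluster pairs under the stated hypothesis once $r$ grows. In the SBM, $\mathbb{E}[A]$ has a dominant all-ones eigendirection with eigenvalue $\approx np + n(k-1)q$, and the cluster structure lives in a subdominant eigenspace with eigenvalue $\approx n(p-q)$. Taking the $r$-th power blows up the dominant direction by a factor $\bigl((np+n(k-1)q)/(n(p-q))\bigr)^r$ relative to the signal, so the between-cluster gap becomes a vanishing fraction of the fluctuations. The paper's algorithm instead works with the \emph{centered} matrix $B = A - q\cdot J$, whose expectation $L = \mathbb{E}[B]$ is (approximately) block-diagonal: then $\lVert L^r_i - L^r_j\rVert = 0$ for $i\sim j$ and $\geq 2\Delta$ for $i\nsim j$, with $\Delta = \Theta(\sqrt{k}\sqrt{p(1-q)}(\log kn)^7 (p-q)^{r-1}n^{r-1})$. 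The concentration argument is also not a Bernstein bound on a single scalar; it is a spectral decomposition $B^r = L^r + M + M' + R^r$, with row-norm bounds for the random parts imported from \cite{MZ22SBM}. Your separation condition $\frac{p-q}{\sqrt p} \gtrsim k^{1/2} n^{-1/2 + 1/(2(r-1))}$ is correct, but the centering is what makes it attainable.

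Second, the space budget. You correctly identify that materializing $r$-step neighborhoods is unaffordable and propose a sketch as a fix, but a hashed sketch of the indicator vector both loses the exact walk counts that the statistic depends on and leaves the compression-vs-accuracy tradeoff unanalyzed. The paper avoids sketching entirely via an algebraic identity: $\lVert B^r_u - B^r_v\rVert^2 = (\mathbf{1}_u^T-\mathbf{1}_v^T)(A-qJ)^{2r}(\mathbf{1}_u-\mathbf{1}_v)$, which reduces to the four scalars $\mathbf{1}_x^T(A-qJ)^{2r}\mathbf{1}_y$ for $x,y\in\{u,v\}$. Expanding $(A-qJ)^{2r}$ and using $J^i = n^{i-1}J$ and $JA^iJ = C_i J$ (with $C_i$ the total number of length-$i$ walks), each term collapses to products of per-vertex walk counts $A^i_x\vec{\mathbf{j}}$ and the single matrix entry $a^{2r}_{x,y}$. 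These are exact scalar quantities computable by $r$-fold aggregation over the edge list, using only $O(nr)$ words of global state. That identity, not a sketch, is what keeps the total space at $\tilde O(m)$; it is also what makes the per-iteration cost $O(r\log_s n)$ tight rather than dependent on frontier sizes. Without it, the round complexity and space bound in the theorem statement do not follow from the primitives you list.
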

%By using $\tilde O(km)$ space to trade off round complexity, we have the following corollary.
We note that %$k\leq \text{poly}\log n$, 
the round complexity can be improved to be $O(r\log_s n)$ at the cost of increasing the total space by a $k$ factor, which is formalized in the following theorem. %\pan{check!}.

%\vspace{-0.5em}
\begin{restatable}{theorem}{mpccor}\label{thm:mpc-2}
Under the same condition in Theorem~\ref{thm:secondmpc}, there exists an algorithm that outputs $k$ clusters in $O(r\log_s n)$ rounds where each machine has $s=\Omega(\log n)$ memory with high probability and uses $\tilde O(km)$ total space.
\end{restatable}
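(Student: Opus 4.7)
The plan is to derive Theorem~\ref{thm:mpc-2} from the algorithm underlying Theorem~\ref{thm:secondmpc} by replacing the $k$ sequential cluster-peeling phases with a single parallel phase that identifies all $k$ clusters simultaneously, paying a factor of $k$ in total space for the factor of $k$ saved in rounds. Under my reading, the algorithm of Theorem~\ref{thm:secondmpc} runs in $k$ outer iterations, each of which selects a pivot vertex, collects its $r$-step neighborhood in $O(r\log_s n)$ rounds, computes a local statistic, and uses pairwise comparisons against this statistic to peel off one cluster, all within $\tilde{O}(m)$ total space. The natural way to collapse these into $O(r\log_s n)$ rounds is to choose representatives from every cluster up front and run the $k$ cluster-identification instances concurrently on disjoint sets of machines.

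First, I would select a set $S$ of $\tilde{O}(k)$ pivots such that every cluster contains at least one element of $S$ with high probability; uniform random sampling of $\Theta(k\log n)$ vertices suffices by a standard coupon-collector argument. Next, in parallel for each $v\in S$, I would run the $r$-step neighborhood computation from $v$ using the BFS-style primitive in $O(r\log_s n)$ rounds, executed on its own replicated workspace of size $\tilde{O}(m)$, and compute the associated statistic. Then, in parallel over every pair $(u,v)$ with $u\in V$ and $v\in S$, I would invoke the same pairwise same-cluster test used in Theorem~\ref{thm:secondmpc} to assign $u$ to (or exclude $u$ from) the cluster of $v$. A short postprocessing step then merges the clusters that are output by multiple pivots from the same true community, which takes $O(\log_s n)$ rounds using standard MPC aggregation primitives.

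Because each of the $|S|=\tilde{O}(k)$ instances reads only the shared input graph and writes only to its own replicated workspace, the per-instance correctness is inherited verbatim from the proof of Theorem~\ref{thm:secondmpc}, and a union bound over the $\tilde{O}(k)$ pivots preserves the high-probability guarantee. The total space is bounded by $\tilde{O}(|S|\cdot m)=\tilde{O}(km)$ and the round count is $O(r\log_s n)$ plus an additive $O(\log_s n)$ for deduplication, which is absorbed into the stated bound. The main obstacle I expect is arguing that the MPC primitives used inside a single cluster-peeling phase (BFS expansion, statistic computation, pairwise comparison) are sufficiently oblivious to be simulated on $\tilde{O}(k)$ virtual copies of the input without inflating the round complexity; this requires verifying that each such primitive depends only on the pivot through its initial input and never serializes across pivots, which should follow from the way basic graph operations are implemented in the $s$-space MPC model in earlier sections of the paper.
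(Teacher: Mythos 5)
Your proposal matches the paper's own proof: both sample $\Theta(k\log n)$ pivots so that every cluster is hit, run the $r$-step power-iteration comparison from every pivot in parallel on a replicated $\tilde{O}(m)$ workspace (giving $\tilde{O}(km)$ total space and $O(r\log_s n)$ rounds), and then deduplicate the resulting labels with a short aggregation step. The only cosmetic difference is that the paper explicitly keeps the minimum label among received labels as the canonical representative, which is the concrete realization of your ``merge clusters output by multiple pivots'' postprocessing step.
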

% \textcolor{red}{Check:} round $O(xr/\delta)$, space $O(mk/x)$, for any $1\leq x\leq k$

Note that for any integer constant $3\leq r\leq o(\log n)$ and any $k\leq \poly(\log n)$, the round complexity of the above algorithm is $O(\log_s n)$ while the total space is $\tilde{O}(m)$. When $r=\Theta(\log n)$, then the recovery condition becomes $\frac{p-q}{\sqrt{p}}\ge \tilde{\Omega}(\sqrt{\frac{{k}}{{n}}})$, which almost matches the statistical limit in the sequential setting up to logarithmic terms \cite{AbbeBH16}. In this case, our algorithm has round complexity $O(\log n\log_s n)$ for any $k\leq \poly(\log n)$ in the $s$-space MPC model. 

%\vspace{-0.5em}

When the gap between $p,q$ is sufficiently large, we can achieve $O(\log_s n)$ rounds using $\tilde{O}(m)$ total space, i.e., both the round complexity and the total space complexity are independent of the number $k$ of clusters. Formally, we have the following theorem.
\begin{restatable}{theorem}{firstmpc}\label{thm:firstmpc}
Given a random graph $G$ from SBM($n, p,q,k$) with 
$\frac{p-q}{\sqrt{p}} \geq{\Omega}\left({{k^{\frac34}}{n^{-\frac14}}}(\log n)^{\frac14}\right)$, there exists an algorithm in the $s$-space MPC model that can output $k$ hidden clusters within $O(\log_s n)$ rounds with high probability, where $s = \Omega(\log n)$, and uses $\tilde O(m)$ total space.
\end{restatable}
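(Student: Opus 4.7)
Proof Proposal.

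The plan is to prove Theorem~\ref{thm:firstmpc} by combining a one-shot thresholding test on common-neighbor counts with the $O(\log_{s} n)$-round sort, aggregation, and connected-components primitives for the $s$-space MPC model that are established earlier in the paper. Because the assumed gap is strictly stronger than the gaps used in Theorem~\ref{thm:secondmpc}, a single pass of pairwise comparisons already separates same-cluster from cross-cluster pairs, so none of the $O(kr)$-round iterative refinement is required; this is what lets us drop the $kr$ factor.

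For each pair $(u,v)$ of vertices, let $T_{uv} := |N(u) \cap N(v)|$, which is a sum $T_{uv} = \sum_{w \neq u,v} A_{uw} A_{wv}$ of independent Bernoullis once the hidden partition is fixed. A direct mean calculation gives
\[
\mathbb{E}[T_{uv}] = \begin{cases} (n-2)p^{2} + (k-1)n\, q^{2}, & u,v \text{ in the same cluster},\\ 2(n-1)pq + (k-2)n\, q^{2}, & u,v \text{ in different clusters}, \end{cases}
\]
so the conditional-mean gap is $\Delta = (1-o(1))\, n(p-q)^{2}$, while the variance of $T_{uv}$ is bounded by $O(np^{2}+knq^{2}) = O(knp^{2})$. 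Bernstein's inequality together with a union bound over the $O((kn)^{2})$ pairs then yields $|T_{uv} - \mathbb{E}[T_{uv}]| = O(p\sqrt{kn\log(kn)})$ simultaneously for every pair with high probability, and the theorem's hypothesis on $(p-q)/\sqrt{p}$ makes $\Delta$ exceed this deviation by a polynomial margin. Hence a fixed threshold $\tau$ placed between the two conditional means correctly labels every pair.

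To realize this test in the $s$-space MPC model, I would, in $O(\log_{s} n)$ rounds, compute exactly those entries of $A^{2}$ that exceed $\tau$ using the sort-and-join primitives: partial intersection contributions are streamed through a reduce/aggregate pipeline so that only above-threshold pairs $(u,v)$ are ever materialized as output records. Let $H$ denote the graph on $V$ whose edges are these surviving pairs; by the statistical step, with high probability $H$ is a disjoint union of $k$ cliques on the planted clusters. A final $O(\log_{s} n)$-round invocation of the standard MPC connected-components routine on $H$ outputs the clustering. Each of these three phases fits into $O(\log_{s} n)$ rounds, so the overall round complexity is $O(\log_{s} n)$ as claimed.

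The main obstacle is to keep the total space at $\tilde O(m)$ throughout the pipeline. A naive enumeration of all 2-paths stores $\sum_{v}\deg(v)^{2}$ values, which can exceed $\tilde O(m)$ by the average-degree factor and is therefore not affordable in general. Two ingredients are needed to stay within budget: (i) the streaming aggregation described above, so that only the $\tilde O(m)$ surviving above-threshold pairs (essentially one per same-cluster pair) are ever stored, and (ii) a careful accounting showing that the additional $\sqrt{k}$ slack between the theorem's $k^{3/4}$ gap and the $k^{1/4}$ gap that would suffice for an idealized sequential Bernstein test exactly absorbs the MPC pipeline's union-bound and sketching overheads. Verifying that (i) and (ii) mesh uniformly over the sparse and dense regimes of $p,q$ allowed by the hypothesis is where the bulk of the technical work lies.
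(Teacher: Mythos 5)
Your proposal diverges from the paper's approach in a way that creates a genuine, fatal gap around the total-space budget. The paper deliberately avoids computing $|N(u)\cap N(v)|$ for all pairs: it samples a tiny set $S'$ of size $\Theta(k\log n)$ to extract $k$ cluster representatives, then compares a second sampled set $S$ of size $\Theta(k\log n/p)$ only against those representatives, and finally classifies each remaining vertex $v\in V\setminus S$ by counting its edges to the $k$ sub-clusters (a separable aggregate computable by Lemma~\ref{lem:neigh-mpc} / \textsc{CompareCut}). By contrast you propose to materialize every above-threshold entry of $A^2$ and then run connected components. Even if you never store a below-threshold pair, with high probability the above-threshold pairs are exactly the same-cluster pairs, of which there are $\Theta(kn^2)$, while $m=\Theta(kn^2p+k^2n^2q)$. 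In the sparse regime permitted by the hypothesis (e.g.\ $p$ as small as roughly $k^{3/2}n^{-1/2}$), $kn^2$ exceeds $\tilde O(m)$ by a polynomial factor, so just the output of your thresholding phase already blows the $\tilde O(m)$ budget before connected components even starts. Your ingredient (ii) — that the $k^{3/4}$-vs-$k^{1/4}$ slack ``exactly absorbs'' the MPC overhead — does not help here, because the obstruction is the cardinality of the same-cluster pairs, not a concentration constant; and in fact the paper's sequential test (Lemma~\ref{lem:com-neigh}) already runs at gap $\Theta(k^{1/2}n^{-1/4})$, so the slack in the theorem statement is $k^{1/4}$, not $k^{1/2}$, and it is not spent on sketching.

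Your ingredient (i), the ``streaming aggregation'' that supposedly produces only above-threshold pairs without materializing $\sum_v \deg(v)^2$ two-paths, is also not substantiated: in the $s$-space MPC model a reduce step cannot filter by a threshold on a quantity it has not yet fully aggregated, and aggregating $T_{uv}$ for a pair $(u,v)$ requires first routing the relevant neighborhood information for that pair to common machines. The paper's entire Section~\ref{sec:mbasicmpc} (\textsc{ReorganizeNBR}, \textsc{CopyNBR}, \textsc{CompareGRP}) exists precisely to do this alignment-and-compare within the space budget, and it is only feasible because the sets being compared ($S'$ and $S$) are small. Nothing analogous is available for all $\binom{kn}{2}$ pairs. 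The statistical part of your argument (per-pair Bernstein/Chernoff plus a union bound over $O((kn)^2)$ pairs) is fine and in the same spirit as the paper's Lemma~\ref{lem:com-neigh}, and a connected-components finish would be a valid alternative clustering step if the graph $H$ could actually be built; but the construction of $H$ is precisely what cannot be afforded, so the proposal does not establish the theorem.
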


%Note that the round complexity of the above algorithm is independent of the number $k$ of clusters and if $s=O(n^{\delta})$, then the round complexity is $O(1/\delta)$. 

We note that all the algorithms in Theorem \ref{thm:secondmpc}, \ref{thm:mpc-2} and  \ref{thm:firstmpc} significantly improve the results of \cite{Cohen-AddadMS22}, of which the algorithms only work in the sublinear space MPC model, i.e., $s=O(n^\delta)$ for some constant $\delta>0$, and finish in $O(\frac{1}{\delta\varepsilon})$ rounds, assuming that $\frac{p-q}{\sqrt{p}}\geq n^{-1/4+\varepsilon}$ and $k$ is a constant. %Specifically, in~\cite{Cohen-AddadMS22}, the number of hidden cluster is limited to be constant while we remove this constraint in our algorithm. 
In both sublinear space and $s$-space models, our algorithms work for a much wider class of SBM graphs (i.e., the requirement on the conditions of $p,q,k$ are much weaker) than those in \cite{Cohen-AddadMS22}. Furthermore, even for the same regime of parameters, our algorithms have better round complexity. For example, in the sublinear space MPC model, our round complexity (from Theorem \ref{thm:firstmpc}) is $O(1/\delta)$ under the condition that $\frac{p-q}{\sqrt{p}} \geq\Omega({n^{-\frac14}}(\log n)^{\frac14})$ and $k$ is constant, while the algorithms in \cite{Cohen-AddadMS22} have round complexity $O(\frac{\log n }{\delta \log \log n})$ under the same condition\footnote{This can be seen by  setting~$\frac{1}{\varepsilon} = \Theta(\frac{\log n}{\log \log n})$ in \cite{Cohen-AddadMS22}.}.
%
%$\frac{p-q}{\sqrt{p}} \geq {n^{\varepsilon-\frac{1}{4}}}$ and $k$ is a constant, the round complexity is~$O(\frac{1}{\varepsilon \delta})$. By setting~$k$ to be a constant and $\frac{1}{\varepsilon} = \Theta(\frac{\log n}{\log \log n})$ (so that both conditions are the same, i.e., ${\frac{p-q}{\sqrt{p}}\geq \Omega(k^{\frac34}n^{-\frac14}(\log n)^{\frac14}})$), the round complexity in \cite{Cohen-AddadMS22} is $O(\frac{\log n }{\delta \log \log n})$ that is much larger than our round complexity $O(\frac{1}{\delta})$.

Our algorithms are quite different from those in \cite{Cohen-AddadMS22}, in which the algorithms are based on the local-search methods and proceed in rounds by updating the so-called \emph{swap values} for each node to decide where to move the node. Our algorithms are based on collecting the $r$-step neighborhood of each vertex and comparing the difference of some statistical information generated from the local neighborhoods for each pair of vertices.

To implement the above MPC algorithms, we give new algorithms of some basic graphs operations in the $s$-space MPC model in Section~\ref{sec:mbasicmpc}, including \textbf{RandomSet} (for randomly sampling a set), \textbf{ReorganizeNBR} that is for organizing the neighborhood of any two nodes $u,v$ in a set so that they are ``aligned'', i.e., the $i$-th byte of $u$ (or $v$) indicates whether the $i$-th node is the neighbor of $u$ (or $v$ ).  We believe these results will be useful as basic tools in designing algorithms for other problems in the $s$-space MPC model.

\subsection{Our Techniques}

Our MPC algorithms are based on two simple sequential algorithms. We first describe our first algorithm given in Theorem \ref{thm:firstmpc}. It is based on the observation that if $\frac{p-q}{\sqrt{p}} \geq\tilde{\Omega}({{k^{\frac{3}{4}}}{n^{-\frac{1}{4}}}})$, then the number of common neighbors of any two vertices can be used to distinguish if they belong to the same cluster or not. That is, if $u,v$ belong to the same cluster, then the number of their common neighbors is above some threshold $\Delta$; otherwise, the number of common neighbors is smaller than $\Delta$. Let $N(v)$ denote the set of all the neighbors of $v$. 
We further note that to get $k$ clusters of $V$, it is not necessary to compute $|N(u)\cap N(v)|$ for \emph{all pairs} of $u,v$ in $V$, which may cause too much communication for MPC implementation. Instead, we first randomly sample a small set $S'$ with $|S'|=\Theta(k\log n)$. Then we find $k$ representatives of the hidden clusters from $S'$ by computing $|N(u)\cap N(v)|$ for all pairs of $u,v$ in $S'$ and update $S'$ to be the set of $k$ representatives. Then we sample independently another small set $S$ of vertices, and find $k$ sub-clusters from $S$ by computing $|N(u)\cap N(v)|$ for $u\in S$ and $v \in S'$. (A set $T\subseteq S$ is called a \emph{sub-cluster} of some cluster $V_i$ if $ T\subseteq V_i$.)  Based on the $k$ sub-clusters obtained from $S$, we can find all the hidden clusters $V_1,\dots,V_k$ putting any vertex $v\in V\setminus S$ to the sub-cluster that contains the most number of neighbors of $v$.

There are several challenges to implementing the above algorithm in the $s$-space MPC model in which the local memory only satisfies that $s =\Omega(\log n)$. 
Note that in this model, even just to compute the number of common neighbors $|N(u)\cap N(v)|$ for any \emph{fixed pair} $u,v$ in a few parallel rounds (say $O(\log_s n)$ rounds) is non-trivial. The reason is that the neighborhoods $N(u),N(v)$ can be much larger than $s$ and some neighborhoods will be used too many times which leads to large round complexity. To efficiently compute $|N(u)\cap N(v)|$ for $u\in S$ and $v \in S'$, we first show how to reorganize $N(u)$ and $N(v)$ for $u\in S$ and $v\in S'$ so that each byte of $N(u)$ and $N(v)$ for any two nodes aligned; then we can show how to compute $|N(u)\cap N(v)|$ in parallel efficiently by appropriately making some copies of $N(u)$ and $N(v)$. For these tasks, we give detailed MPC implementations of some basic operations, e.g., a procedure for copying neighbors of some carefully chosen nodes and aligning their neighbors while using no more than $\tilde{O}(m)$ total space. %, e.g.,copying neighbors of particular nodes and align neighbors

Our MPC algorithms from Theorem \ref{thm:secondmpc} and \ref{thm:mpc-2} are based on a recent sequential algorithm given in \cite{MZ22SBM}. Roughly speaking, %when the graph becomes sparser, 
one can use the power iterations of some matrix $B = A - q\cdot J$ to find the corresponding clusters, where $A$ is the adjacency matrix of the graph and $J$ is the all-$1$ matrix. It is shown that with high probability, the $\ell_2$-norm of $B_u^r-B_v^r$ is relatively small, if $u,v$ belong to the same cluster; and is large, otherwise. Here $B_u^r$ is the row corresponding to vertex $u$ in the matrix $B^r$. We show that in order to compute $\lVert B_u^r-B_v^r \lVert_2$, it suffices to compute the expressions 
$\mathbf{1}_x^T(A-qJ)^{2r}\mathbf{1}_y$ for all $x,y\in \{u,v\}$.
%It is non-trivial to simulate this algorithm in the MPC model, as it involves optimization and transformation of matrix computations. We expand this norm and 
To do so, we expand the above expression so that we get a sum of terms, each being a vector-matrix-vector multiplication. Then we give a combinatorial explanation of each term, and then calculate it in parallel efficiently based on some basic graph operations in the $s$-space model. 
%To obtain our clustering algorithm, we paralleled a recent algorithm by \cite{MZ22SBM}. 

\subsection{Related Work}\label{sec:relatedwork}
 There is a line of research on metric clustering in the MPC model. In this setting, the input is a set of data points from some metric space (e.g., Euclidean space), and the goal is to find $k$ representative centers, such that some objective function (e.g. the cost functions of $k$-means, $k$-median and $k$-center) is minimized (e.g., \cite{BhaskaraW18}). Bhaskara and Wijewardena \cite{BhaskaraW18} developed an algorithm that outputs $O(k \log k \log n)$ centers whose cost is within a factor of $O((\log n \log \log n)^2)$ of the optimal $k$-means (or $k$-median) clustering, using a memory of $s\in \Omega(d\log n)$ per machine and $O(\log_s n)$ parallel rounds. Note that this does not require $\Omega(k)$ memory per machine. Coy et al. \cite{coy2023parallel} recently improved the approximation ratio of the algorithm for $k$-center in \cite{BhaskaraW18} to $O(\log^* n)$. 
 Cohen-Addad et al. \cite{Cohen-AddadMZ22} gave a fully scalable $(1+\varepsilon)$-approximate $k$-means clustering algorithm when the instance exhibits a ``ground-truth'' clustering structure, captured by a notion of ``$O(\alpha)$-perturbation resilient'', and it uses $O(1)$ rounds and $O_{\varepsilon,d}(n^{1+1/\alpha^2+o(1)})$ total space with arbitrary memory per machine, where each data point is from $\mathbb{R}^d$. 

Regarding the power method for SBM, Wang et al. \cite{WangZS20} proposed an iterative algorithm that first employs the power method with a random starting point and then turns to a generalized power method that can find the communities in a finite number of iterations. Their algorithm runs in nearly linear time and can exactly recover the underlying communities at the information-theoretic limit. Cohen-Addad et al. \cite{Cohen-AddadMS22COLT} further gave a linear-time algorithm that recovers exactly the communities at the asymptotic information-theoretic threshold. Their algorithm is based on similar ideas as in \cite{Cohen-AddadMS22}, that is, given a partition, moving a vertex from one part to the part where it has most neighbors should somewhat improve the quality of the partition. 

Correlation clustering has been studied under the MPC model. In this problem, a signed graph $G=(V,E,\sigma)$ is given as input, and the goal is to partition the vertex set into arbitrarily many clusters so that the disagreement of the corresponding clustering is minimized, where the disagreement is the number of edges that cross different clusters plus the number of non-adjacent pairs inside the clusters \cite{BlellochFS12,KDD-2014-ChierichettiDK,PanPORRJ15,FischerN20,CambusCMU21,Cohen-AddadLMNP21,Assadi022}. The state-of-the-art is a  $(3+\varepsilon)$-approximation algorithm in $O(1/\varepsilon)$ rounds in the massively parallel computation (MPC) with sublinear
space \cite{BehnezhadCMT22}.

\section{Preliminaries}
Consider an undirected graph $G=(V, E)$ where $V$ is the set of vertices, and $E$ is the set of edges. We use $n$ to denote the size of $|V|$ and $m$ to denote the size of $|E|$. Each node in $G$ has a unique ID from $1$ to $n$. We use $\textsf{ID}(u)$ to denote the ID for a node $u\in V$. We use $d(u)$ to denote the degree of $u \in V$. Let $N(u)$ denote the set of neighbors of a node $u\in V$. Given a vertex set $S\subset V$, we use $G[S]$ to denote the subgraph induced by vertices in $S$. In this paper, we abuse the use of node(s) and vertex(vertices). We use $[i]$ to denote $\{1,2,\cdots i\}$. When nodes are active (inactive), they execute (do not execute)  algorithms. 

\bigskip

\noindent\textbf{Chernoff Bound}
Let $X_1,..., X_n$ be independent binary random variables, and $X = \Sigma_{i=1}^n X_i$, and $\mu = \mathbb{E}[X]$. Then it holds that for all $\delta > 0$ that 
$\mathbb{P}[X \geq (1+\delta)\mu] \leq (\frac{e^{-\delta}}{(1+\delta)^{1+\delta}})^{\mu} \leq e^{-\min[\delta^2, \delta]\mu/3}$;
For all $\delta \in (0,1)$, 
$ \mathbb{P}[X \leq (1-\delta)\mu] \leq (\frac{e^{\delta}}{(1-\delta)^{1-\delta}})^{\mu} \leq e^{-\delta^2\mu/2}$.

\bigskip

\noindent\textbf{The MPC model} In this model, we assume that all data is arbitrarily distributed 
among some machines. Let $N$ denote the total amount of data. Each machine has local memory $s$. In our settings, $s = \Omega(\log n)$.  The sum of all local memory is $N = O((m+n)\text{poly}(\log n))$. The communication between any pair of two machines is synchronous, and the bandwidth is $s$ words.

 We ignore the cost of local communication and computation happening in each machine. As the description of the MPC model in the literature, we ignore some communication details among different machines and suppose that all machines are known to each other which means that any machine can send messages to another machine directly (even when the local memory is very small). For the problems in the MPC model, we aim to make the total number of communication rounds among machines as small as possible.

\begin{definition}[separable function]
Let $f: 2^\mathbb{R} \to \mathbb{R}$ denote a set function. We say that $f$ is separable if and only if for any set of reals $A$ and for any $B \subseteq A$, we have $f(A) = f(f(B), f(A\setminus B))$ \footnote{For example, $f$ can be a sum function.}. 
\end{definition}

\begin{lemma}[\cite{behnezhad2019massively}]\label{lem:neigh-mpc}
    Given an $n$-vertex graph, we have~$x_u$ for each node $u\in V$. If the function $f$ is a separable function, then there exists an algorithm that computes $f(\{x_i\in N(u) \})$ for each $u\in V$ with high probability in the sublinear space MPC model in $O(1/\delta)$ rounds using $\tilde O(m)$ space where each machine has space~$O(n^\delta)$.  
\end{lemma}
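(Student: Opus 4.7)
The plan is to implement a classical tree-based aggregation in the sublinear-space MPC model. First, I would distribute the edges of $G$ across machines so that each machine holds $\Theta(n^\delta)$ edges, using $O(m/n^\delta)$ machines and $O(m)$ total space. Using a sort-based join over the pairs $(v, x_v)$ for all $v \in V$, I would annotate each directed edge $(u,v)$ with the value $x_v$ so that the $u$-side of the edge carries the quantity to be fed into $f$. Because the join is between a table of size $n$ and one of size $m$, it can be performed in $O(1)$ rounds of sorting in the sublinear-space MPC model, using $\tilde O(m)$ total space.

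Next, for each vertex $u$, I would group the $(u,v,x_v)$-records so that the values $\{x_v : v \in N(u)\}$ lie consecutively in the sorted order. If $d(u) \le s = n^\delta$, the entire neighborhood of $u$ fits on a single machine and $f(\{x_v : v \in N(u)\})$ is computed in one round. The interesting case is $d(u) > s$: then $N(u)$ is spread across multiple machines and we must aggregate partial values.

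For high-degree vertices I would build, conceptually, an aggregation tree of fan-in $n^\delta$ rooted at $u$. In round $i$, each machine holds at most $n^\delta$ partial values $f(B_1),\dots,f(B_t)$ corresponding to disjoint subsets $B_j\subseteq N(u)$ and emits the single value $f(f(B_1),\dots,f(B_t))$; by the separability axiom $f(A)=f(f(B),f(A\setminus B))$, iterating this correctly computes $f(\{x_v:v\in N(u)\})$. Since each round shrinks the multiset representing any neighborhood by a factor of $s=n^\delta$, after $O(\log_{n^\delta} n)=O(1/\delta)$ rounds the aggregate for every vertex reaches a single machine.

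The hard part will be routing the partial aggregates for many high-degree vertices simultaneously while keeping total space $\tilde O(m)$ and respecting the per-machine memory bound. To address this, I would hash partial results by the pair $(u,\text{chunk-index at level }i)$ so that the round-$i$ summaries for distinct vertices do not collide on a single machine; since the total number of partial values across all vertices in round $i$ is at most $O(m/n^{(i-1)\delta})$, the total space is $O(m)$ throughout. Composing the low-degree and high-degree cases yields an algorithm in $O(1/\delta)$ rounds using $\tilde O(m)$ total space, as claimed; a standard application of Chernoff-style load-balancing guarantees that the per-machine memory bound $O(n^\delta)$ is respected with high probability.
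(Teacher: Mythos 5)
This lemma is cited from \cite{behnezhad2019massively} without proof in the paper, so there is no in-paper argument to compare against. Your proposal is the standard sort-and-aggregate construction that underlies that result: broadcast the values $x_v$ onto the directed edge list via a sort-based join, bucket the annotated edges by source vertex, and then collapse the neighborhood of each high-degree vertex through a fan-in-$n^\delta$ aggregation tree whose correctness is guaranteed by the separability identity $f(A)=f(f(B),f(A\setminus B))$. The round bound $O(\log_{n^\delta} n)=O(1/\delta)$, the geometrically decreasing number of partial aggregates bounding the total space by $\tilde O(m)$, and the Chernoff-based load balancing that gives the ``with high probability'' clause are all accounted for. One small point worth being precise about: the separability axiom as stated is a binary decomposition, and the tree applies $f$ to up to $n^\delta$ partial values at once; this is fine provided one also has $f(\{y\})=y$ (true for the intended aggregates such as sum, max, min, count), which lets the binary identity extend to the multiway one by induction. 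With that caveat spelled out, the argument is complete and essentially the same as in the cited reference.
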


In the $s$-space MPC model, we restate the following folklore lemma.

\begin{restatable}{lemma}{commsense}\label{lem:comm-sense}
Given an $n$-vertex graph, there exists an algorithm that makes each node $u\in V$ visit $N(u)$ in $O(\log_s n)$ rounds with high probability. 
\end{restatable}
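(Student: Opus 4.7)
The plan is to make each vertex $u$ aggregate or broadcast information along a tree of fan-in $\Theta(s)$ built on top of its adjacency list $N(u)$, so that the round complexity equals the tree depth $O(\log_s d(u)) = O(\log_s n)$, and the total space stays $\tilde{O}(m)$ since the trees over distinct vertices are vertex-disjoint and sum to $\tilde{O}(m)$ leaves and internal machines.

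First I would use a standard $s$-space MPC sort to co-locate the edges incident to each vertex $u$ on consecutive machines; this step costs $O(\log_s n)$ rounds and $\tilde{O}(m)$ total space. After it, the neighbors of $u$ lie on at most $\lceil d(u)/\Theta(s)\rceil$ ``leaf'' machines in a predictable layout. Next I would build an $s$-ary tree over these leaves: a level-$1$ machine collects up to $\Theta(s)$ leaves as children, a level-$2$ machine collects up to $\Theta(s)$ level-$1$ machines, and so on, for $O(\log_s d(u))=O(\log_s n)$ levels, with $u$ itself sitting at the root. In one synchronous round, every machine at level $\ell$ sends a single word (a partial aggregate, pointer, or broadcast token) to its parent at level $\ell+1$; hence a full aggregation from $N(u)$ up to $u$, or symmetrically a broadcast from $u$ down to $N(u)$, finishes in $O(\log_s n)$ rounds. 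Performing this construction for all $u\in V$ in parallel uses $\sum_u O(d(u)/s + d(u)) = \tilde{O}(m)$ machines, and the local memory of every participating machine stays $O(s)$.

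The main obstacle is ensuring that each parent machine knows which children to receive from (and that each child knows to which parent to send) without paying extra rounds for routing this metadata. I would handle this by assigning each leaf machine a consecutive index $j\in[\lceil d(u)/\Theta(s)\rceil]$ during the sort, and letting the level-$\ell$ ancestor of leaf $j$ in $u$'s tree have a deterministic identifier depending only on $u$, $\ell$, and $\lfloor j/\Theta(s)^{\ell-1}\rfloor$. Because every machine in the MPC model knows the identifier of every other machine, each node can compute the identifier of its parent (and each parent the range of its children) purely locally, so no extra rounds are needed beyond the initial sort and the $O(\log_s n)$ levels of the trees. Composing the sort with the tree traversal yields an algorithm achieving the claimed $O(\log_s n)$ rounds with high probability, the randomness being confined to the sort.
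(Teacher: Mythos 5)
The paper states Lemma~\ref{lem:comm-sense} as a ``folklore lemma'' and does not supply its own proof, so the relevant comparison is to the standard argument the lemma alludes to. Your proposal is exactly that standard argument: sort to co-locate each vertex's adjacency list on a consecutive block of machines ($O(\log_s n)$ rounds by the cited sorting primitive), then build an $s$-ary aggregation/broadcast tree of depth $O(\log_s n)$ over the $\lceil d(u)/\Theta(s)\rceil$ leaf machines, with parent identifiers computed deterministically from $(u,\ell,\lfloor j/s^{\ell-1}\rfloor)$ so that no extra rounds are spent on routing metadata. This is correct and is the natural proof; the ``with high probability'' is inherited from the (randomized) sorting subroutine, which matches how the paper uses the lemma.

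One small imprecision worth tightening: you write that the construction ``uses $\sum_u O(d(u)/s + d(u)) = \tilde{O}(m)$ machines.'' The number of participating \emph{machines} is $O(m/s + n)$ (leaves plus a geometric tail of internal nodes), and each holds $O(s)$ words, so the total \emph{space} is $\sum_u \bigl(O(d(u)) + O(d(u)/s)\cdot O(s)\bigr) = O(m)$ words. The conclusion $\tilde{O}(m)$ total space is right, but the quantity you bounded is space in words rather than machine count, and you should phrase it that way to avoid the apparent mismatch (if it really were $\Theta(m)$ machines with $\Theta(s)$ memory each you would be claiming $\Theta(ms)$ space, which you do not need and do not use). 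With that phrasing fixed, the argument is complete.
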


The sorting algorithm is a very important black-box tool in the MPC model, which is stated as follows.
\begin{theorem}[\cite{GoodrichSZ11}]
    Sorting can be solved in $O(\log_s n)$ rounds in the $s$-space MPC model.
\end{theorem}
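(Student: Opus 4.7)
The plan is to establish the bound by an $s$-ary randomized sample sort that halves (actually $s$-s) the effective problem size at every level, so the recursion has depth $O(\log_s n)$, and to show that one level of the recursion can be executed in $O(1)$ rounds of $s$-space MPC. Starting from $n$ items distributed across $M=\Theta(n/s)$ machines with $O(s)$ items each, the goal at each level is to partition the current problem of size $N$ into $\Theta(s)$ independent subproblems of size $O(N/s)$, placed on disjoint groups of machines.

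The first step at each level is splitter selection. Every machine locally sorts its $O(s)$ items (free, since it fits in local memory) and then draws a uniform random sample of size $\Theta(\log n)$; the union $\mathcal{S}$ of all such samples is sorted by a single invocation of an auxiliary sort on $O(M \log n)$ elements, and then $\Theta(s)$ evenly spaced elements of $\mathcal{S}$ are extracted as splitters. A standard Chernoff plus union-bound argument on these uniform samples shows that with probability $1 - n^{-\Omega(1)}$, every bucket defined by consecutive splitters contains at most $O(N/s)$ original items, simultaneously for all $\Theta(s)$ splitters. The splitters are then broadcast to every machine in $O(1)$ rounds (each machine has room for all $\Theta(s)$ splitters), and every machine can locally label its items with the index of the bucket they belong to by binary search.

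The second step at each level is routing: use $O(1)$-round parallel prefix sums to compute, for each item, the global rank of its position within its bucket, and ship items to contiguous groups of machines according to (bucket-id, in-bucket-rank). This routing is itself just a permutation of $N$ key--value pairs between machines of size $s$, which is implementable in $O(1)$ MPC rounds whenever we already know that no target machine is overloaded, and the balance guarantee from the previous paragraph is precisely what ensures this. After routing, the $\Theta(s)$ buckets sit on disjoint machine groups and the procedure recurses on each in parallel. Because each subproblem has size $O(N/s)$, the recursion has depth $O(\log_s n)$ and the total round count is $O(\log_s n)$.

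The hard part is twofold: making the auxiliary sort on $\mathcal{S}$ not blow up the round complexity, and controlling the failure probability over all levels. For the first, I would observe that $|\mathcal{S}|=O(M\log n)$ and that $M$ itself shrinks by a factor of $s$ per level, so charging the auxiliary sort at level $i$ to a problem geometrically smaller than the top one keeps the total at $O(\log_s n)$ rounds (for $s = \Omega(\log n)$, an auxiliary sort of $O(M\log n)$ elements fits in the same framework with a constant blow-up). For the second, I would take a union bound over the $O(s \log_s n)$ bucket-balance events across all recursion levels; with the sampling constant tuned large enough this still gives overall failure probability $n^{-\Omega(1)}$, so the $O(\log_s n)$-round guarantee holds with high probability, matching the statement.
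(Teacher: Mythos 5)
The statement is cited verbatim from \cite{GoodrichSZ11} and the paper gives no proof of it, so I am comparing your sketch against what that result actually requires. Your sample-sort outline has a genuine gap in the step you treat as free: ``the splitters are then broadcast to every machine in $O(1)$ rounds.'' In the $s$-space model each machine sends and receives at most $s$ words per round, so a machine holding the $\Theta(s)$-word splitter set can create at most $O(1)$ full replicas per round (or $s$ one-word fragments, which then need $\Theta(\log_s \cdot)$ further rounds of dissemination and reassembly). Replicating $\Theta(s)$ words onto the $\Theta(N/s)$ machines participating in a level-$i$ subproblem of size $N = n/s^i$ therefore costs $\Theta\bigl(\log_s(N/s)\bigr)$ rounds, not $O(1)$ --- this is precisely the ``copies of sets'' primitive whose $O(\log_s n)$-round cost the paper itself quotes in Theorem~\ref{them:basics}. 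Summing over the $\Theta(\log_s n)$ sequential recursion levels gives $\Theta\bigl((\log_s n)^2\bigr)$ rounds, a $\log_s n$ factor too many; this slack is invisible when $s = n^{\Omega(1)}$ but is exactly the loss one cannot afford in the regime $s = \Theta(\log n)$ that this paper cares about.

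There is a secondary problem with the auxiliary sort: sampling $\Theta(\log n)$ items per machine gives $|\mathcal{S}| = \Theta((n/s)\log n)$, which is $\Theta(n)$ when $s = \Theta(\log n)$, so the ``geometrically smaller'' charging argument already fails at level $0$. That one is fixable --- $\Theta(s\log n)$ samples in total suffice for the Chernoff balance argument and make the auxiliary instance polylogarithmically small --- but fixing the sample size does not fix the broadcast. The result in \cite{GoodrichSZ11} is obtained by a different route: a MapReduce simulation of a deterministic BSP sorting algorithm (a pipelined $d$-way merge sort), which never needs to disseminate a global splitter set to every participating machine at each level. If you want to keep the sample-sort skeleton, you would need to replace the level-by-level broadcast, e.g.\ by pipelining splitter dissemination across recursion levels or by routing items to buckets via a merge against the sorted sample, and then argue that the fix does not reintroduce a per-level $\Theta(\log_s)$ overhead.
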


Furthermore, it has been shown that indexing and prefix-sum operation can be performed in $O(\log_s n)$ rounds \cite{GoodrichSZ11}.
We refer to the \textsc{Index} Algorithm for solving indexing problems in the $s$-space MPC model and the \textsc{Sorting} Algorithm for solving sorting problems in the same model. 
Throughout the following context, we will rely on the fundamental properties associated with the aforementioned operations, as well as Lemma~\ref{lem:neigh-mpc} and Lemma~\ref{lem:comm-sense} by default. 

\section{Implementing Basic Graph Operations in the $s$-space MPC Model}\label{sec:mbasicmpc}

% \xianbin{\\
% Checklist: \\1. Give all the definitions of operations used in this section. \\
% 2. Clear and easy to read\\
% 3. Compare to Antoni's MPC paper.
% }
{In this section, we present algorithms for several fundamental graph operations in the $s$-space model, which will be utilized in our MPC algorithms. To the best of our knowledge, most of these operations have not been previously implemented in the $s$-space MPC model. We denote the machines holding node $x$ as $M_x$. (It is important to note that the MPC model follows an edge-partition model, which means that multiple machines may hold the same vertices).  } 
It is worth mentioning that in order to implement some of our proposed algorithms, we utilize previous algorithms for basic MPC operations, as demonstrated in 
Appendix~\ref{sec:appendix}.

\bigskip

\noindent \textbf{\textsc{RandomSet}}~~In the {RandomSet} problem, given an input value $X = \Omega(\log n)$, our goal is to output a random set $S = \{x_1,x_2,\ldots,x_{|S|}\}$  where each element $x_i$ ($i\in[|S|]$) is selected uniformly and independently at random
and $S' = \bigl\{ (x,y) | x \in S\cap M_x, y= \text{Ind}_S(x)  \bigr\} $
where $|S| = \Theta(X)$, 
and $\text{Ind}_S(x)$ is the index of of $x \in S$ in $S$. We use the algorithm~\textsc{RandomSet} to solve the RandomSet problem.

\begin{restatable}{lemma}{randomset}\label{lem:randomset}
The RandomSet problem can be solved in the $s$-space MPC model in $O(\log_s n)$ rounds where $s = \Omega(\log n)$.
\end{restatable}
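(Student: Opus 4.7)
The plan is to solve the RandomSet problem by reducing it to two well-known primitives in the $s$-space MPC model: Bernoulli sampling of vertices and index assignment via sorting and prefix sum. The output consists of the set $S$ together with, for every machine that stores a sampled vertex, the index of that vertex within $S$. My proposed algorithm has three phases, each running in $O(\log_s n)$ rounds.

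\textbf{Phase 1 (Sampling).} For each node $v\in V$, let the ``canonical'' machine $M_v^\star$ be the lowest-indexed machine that stores $v$; this can be computed by a single sorting round on $(\textsf{ID}(v), \text{machine-ID})$ pairs. The canonical machine independently draws a biased coin with success probability $p=X/n$, and declares $v$ sampled iff the coin succeeds. Since $X=\Omega(\log n)$, the expected size $\mathbb{E}[|S|]=X$ concentrates by the Chernoff bound provided in the preliminaries, so $|S|=\Theta(X)$ with high probability. The sampling bit is then broadcast from $M_v^\star$ to every other machine holding $v$ using the standard $O(\log_s n)$-round MPC broadcast tree (which is available since sorting and prefix sum are available).

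\textbf{Phase 2 (Indexing).} We now have to assign each $x\in S$ a unique index in $\{1,\dots,|S|\}$. I will create, at the canonical machine of each sampled vertex $x$, a record $(\textsf{ID}(x),1)$ and a record $(\textsf{ID}(v),0)$ for each non-sampled vertex $v$. Running the \textsc{Sorting} algorithm on the records in order of $\textsf{ID}$ and then applying the \textsc{Index}/prefix-sum primitive on the indicator bits yields for every sampled $x$ the value $\text{Ind}_S(x)=\sum_{y\in S,\textsf{ID}(y)\le \textsf{ID}(x)} 1$. Both steps cost $O(\log_s n)$ rounds by the cited results of Goodrich--Sitchinava--Zhang.

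\textbf{Phase 3 (Distributing indices).} Finally we produce $S'=\{(x,\text{Ind}_S(x)) : x\in S\cap M_x\}$, i.e., every machine that stores a sampled $x$ must learn $\text{Ind}_S(x)$. Because each edge incident to $x$ may be held on a different machine, up to $d(x)+1$ machines may need the value. This is exactly a one-to-many broadcast per sampled vertex, which is implemented in $O(\log_s n)$ rounds using a balanced replication tree built by a sorting step on the pairs $(\textsf{ID}(x),\text{machine-ID})$.

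The conceptually straightforward parts are the sampling and the indexing, since both reduce directly to already-established MPC primitives. The step I expect to require the most care is Phase 3: in the $s$-space model a single vertex $x$ may live on arbitrarily many machines (as many as its degree), so naively sending $\text{Ind}_S(x)$ back from its canonical machine would blow up the round complexity. The remedy is to use a sorting-based broadcast tree of depth $O(\log_s n)$, duplicating the payload along the way, and to argue that the total amount of replicated data is $\tilde O(m)$ so that the global memory bound is respected. Once this broadcast is set up, combining the three phases yields the claimed $O(\log_s n)$-round algorithm and proves Lemma~\ref{lem:randomset}.
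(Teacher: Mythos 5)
Your proposal is correct and follows essentially the same blueprint as the paper: Bernoulli sampling, prefix-sum computation of $\text{Ind}_S(\cdot)$, and an $O(\log_s n)$-round replication so that every machine holding a sampled vertex learns its index. The only (cosmetic) difference is in the last step, where the paper reallocates edges via degree prefix sums so that each sampled $v_j$'s index is co-located with the consecutive block storing $N(v_j)$, while you route the index directly to holders via a sorting-based broadcast tree; both yield the same round and space bounds.
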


\begin{proof}
First, we let consecutive machines with a starting index store $v_1,\ldots,v_n$. 
Each node in these consecutive machines is selected uniformly and independently at random. In such a way, we select $\Theta(X)$ nodes (by the Chernoff bound, with high probability there are $\Theta(X)$ nodes being selected if each node has probability of, e.g., $\frac{20X}{n}$ to be selected). For each $v_i \in V$, we compute $d(v_i)$. Recall that we can compute prefix-sum within $O(\log_s n)$ rounds. Therefore, we can obtain a set $I = \{I_1,I_2,\cdots\}$ of numbers $\{d(v_1),d(v_1)+d(v_2),\cdots, \Sigma_{i=1}^{j-1}{d(v_j)} \}$. We store the set $I$ in consecutive machines and starting position is sent to all machines. Then, each node $v_j \in S$ can access $\Sigma_{i=1}^{j-1}{d(v_j)}$ in constant rounds. Next, we copy  the index of $v_j$ in $S$ to all machines with space indexes from $I_{j}$ to $I_{j+1}$ which can be done in $O(\log_s n)$ rounds. Thus, we have a $\textsf{reference}$! We use \textsc{Index} Algorithm to reallocate all edges in machines (each edge $\{u, v\}$ will be considered as $(u,v)$ and $(v,u)$). For example, $N(v_1)$ will be stored in consecutive machines with indexes in $[1, d(v_1)]$. From the $\textsf{reference}$, we know the index of the node $v$ in $S$ (if $v\in S$).

After finishing the above procedure, we now show how to make each node in each machine know its index in $S$. We let each node $u$ in each machine access its index $S$ by querying $\textsf{reference}$ for  node $u$. There is no congestion because different queries will go to different machines or spaces. Then we achieve our goal. 
\end{proof}

\noindent \textbf{\textsc{ReorganizeNBR}}~~The ReorganizeNBR problem involves an input set $S$, where the objective is to reorganize $N(u)$ for all $u\in S$ in a manner that \emph{aligns} the bytes of $N(u)$ and $N(v)$ for any two nodes $u, v\in S$. Specifically, the $i$-th byte of $N(u)$ indicates whether the $i$-th node is a neighbor of $u$. The motivation behind the ReorganizeNBR problem is to efficiently compute $|N(u)\cap N(v)|$ for any pair of nodes $u$ and $v$ in $S$ (refer to Figure~\ref{fig:mpc-align} for illustration). We utilize the \textsc{ReorganizeNBR} algorithm to address this problem.

\begin{figure}[ht]
  \includegraphics[width = 155mm]{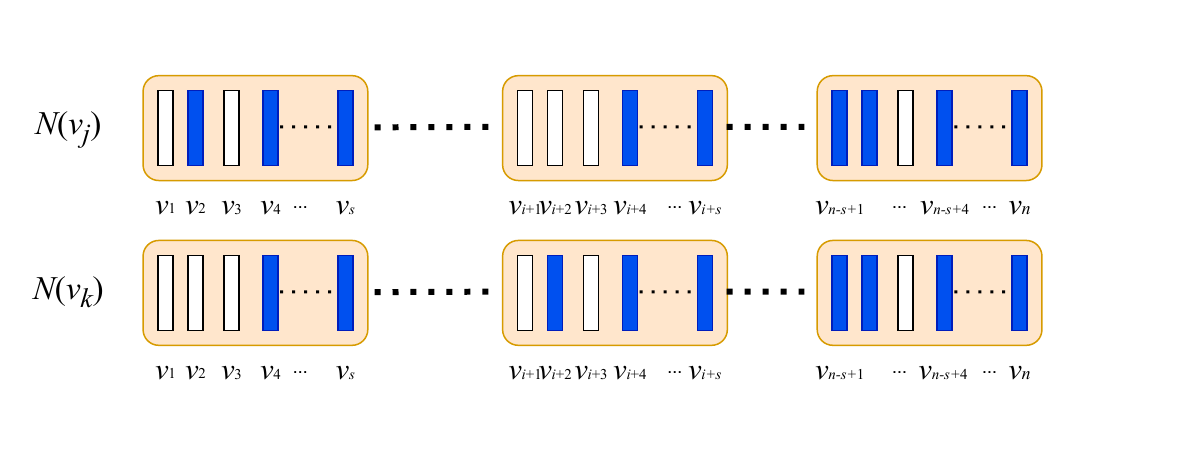}
\caption{Align Operation. The first row represents $N(v_j)$, the blue rectangle (treated as ``$1$'') indicates that the corresponding vertex $v_i \in N(v_j)$, and the white rectangle (treated as ``$0$'') %can be considered as ``0'', 
    indicates that the vertex $v_i \not \in N(v_j)$. Once 
  $N(v_j)$ and $N(v_k)$ are encoded as such bit strings, we can compare the strings simultaneously to compute their common neighbors. }\label{fig:mpc-align}
\end{figure}

As a warm-up, we first show how to reorganize $N(u)$ for each node $u$ in the graph $G$ with $m = \Theta(n^2)$ where $m$ is the number of edges and $n$ is the number of nodes in  $G$.

\begin{restatable}{lemma}{orga}\label{lem: orga}
Given a graph $G = (V, E)$ with $m= \Theta(n^2)$, there exists an algorithm in MPC model where the memory of each machine is $s = \Omega(\log n)$ that can reorganize $G$ in constant rounds, where $m=|E|$ and $n=|V|$. 
\end{restatable}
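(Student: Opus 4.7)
The plan is to use a direct, deterministic routing scheme that exploits the fact that with $m = \Theta(n^2)$ we have enough total memory to afford one bit slot per potential neighbor of every vertex. First I would fix a canonical layout: assign each vertex $u \in V$ the block of $T = \lceil n/s \rceil$ consecutive machines $M_u^{(1)}, \ldots, M_u^{(T)}$, with $M_u^{(t)}$ reserved to hold bits $(t-1)s + 1, \ldots, ts$ of the $n$-bit indicator string of $N(u)$. Because node IDs lie in $[1, n]$ and $n, s$ can be broadcast in a single round, every machine can compute from its own global index which vertex and which bit range it is responsible for, and every edge endpoint can compute where its incidence bit lives. Each reserved machine initializes its bits to zero.

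Next I would perform one round of point-to-point messages. Each machine currently storing an edge $\{u,v\}$ (edges may be arbitrarily distributed initially) computes, for the directed pair $(u, v)$, the target machine $M_u^{(\lceil v/s \rceil)}$ and the local offset $v - (\lceil v/s \rceil - 1)s$, and sends the single message ``set this bit''; likewise for the directed pair $(v, u)$. Each recipient then applies its received updates locally, yielding the desired aligned bit-string representation of $N(u)$ for every $u$.

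The main thing to verify, and the only non-routine step, is that this routing stays within the $O(s)$-per-round bandwidth at every machine, so the constant-round guarantee is real. On the send side, each source machine stores at most $O(s)$ edges and thus emits $O(s)$ messages. On the receive side, machine $M_u^{(t)}$ is addressed by a directed edge $(u, v)$ precisely when $v \in [(t-1)s+1, ts]$, of which there are at most $s$ candidate values and at most one directed edge per value, giving at most $s$ incoming messages. Both bounds fit the per-machine bandwidth, so a single communication round suffices, and the whole reorganization runs in $O(1)$ rounds using $O(n^2) = O(m)$ total space. The subtle point is exactly this receive-side congestion bound, which is tight by construction of the bit-per-neighbor layout and explains why the argument is confined to the dense regime $m = \Theta(n^2)$ — in sparser regimes an $n$-bit block per vertex would overrun the $\tilde O(m)$ total-space budget.
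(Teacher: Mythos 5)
Your proposal is correct and takes essentially the same approach as the paper's proof: both fix a canonical layout in which the directed edge $(u,v)$ is routed to the global memory position $\approx nu+v$ (the paper writes this as "put edge $(v_i,v_j)$ into the $(ni+j)$-th unit"), which is precisely your vertex-block-plus-offset scheme, and both handle the two directions of each edge separately. The one substantive difference is that you explicitly verify the per-machine bandwidth on both the send and receive sides (at most $O(s)$ outgoing messages per source machine since it holds $O(s)$ edges, and at most $s$ incoming messages per target machine since it is addressed by at most one directed edge per bit slot in its window of $s$ positions), whereas the paper asserts the constant-round bound without spelling out the congestion argument; your version makes explicit the step that actually justifies the claim.
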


\begin{proof}
We give each unit space of $N/s$ machines an index, where $s$ is the size of memory of each machine and $N = O(m+n)$. For example, in the $i$-th machine, the indices are from $(i-1)s+1$ to $is$. Although the input graph is undirected, we consider it as a directed graph, i.e., each edge has double directions. Let $v_1,\ldots,v_n$ be those nodes in $V$. When we organize these nodes, the priorities are $v_1 > v_2 > \cdots > v_n$. We describe how arranging $n/s$ machines to store $N(v)$ for a node $v$ with priorities works. Notice that this procedure is not sorting. Instead, we reorganize the data. 
We use $n$ bits to store each $N(u)$ for each $u\in V$. 
 For each edge $(v_i, v_j)$, there is an index $ni+j$ that can be viewed as its increasing order. For such an edge $(v_i,v_j)$, we put it into the $(ni+j)$-th unit of the memory. Since each machine has $s$ words and the labels of machines are $1,\ldots,p$, the edge $(v_i,v_j)$ will be stored in the $\lfloor \frac{ni+j}{s} \rfloor$-th machine. Similarly, we execute the procedure for $(v_j,v_i)$ (each edge has two directions). The whole process only takes constant rounds. 

\end{proof}

Now, we show how to reorganize a graph in MPC model where each machine has memory $s =\Omega(\log n)$. 
We say a subgraph $H$ is a randomly sampled subgraph if nodes in $V_H$ are randomly sampled, and $H$ is the set of edges and vertices constructed from picking $V_H$ with their incident edges. We use $H=(V_H, E_H)$ to denote such a random sampled graph. 

\begin{restatable}{lemma}{orgmain}\label{lem: org-main}
Given a graph $G = (V, E)$ with $ m= \Theta(n^{c+1})$, where $m$ is the number of edges and $n$ is the number of vertices, and $c \in (0,1]$ is some positive constant, for a given randomly sampled subgraph $H = (V_H, E_H)$ satisfying $|V_H| \leq \tilde O(m/n)$ where $V_H$ is constructed by \textsc{RandomSet}, there exists an algorithm that can reorganize $N(u)$ for each $u \in V_H$ in constant rounds, with total space complexity $\tilde O(m)$ in the MPC model where each machine has a memory of $s=\Omega(\log n)$.  
\end{restatable}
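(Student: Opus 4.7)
The plan is to adapt the warm-up approach of Lemma~\ref{lem: orga} to the sampled-subgraph setting, exploiting the reference structure already built during \textsc{RandomSet}. After \textsc{RandomSet} finishes, each $u \in V_H$ has a known index $i_u = \mathrm{Ind}_{V_H}(u) \in [|V_H|]$ and, as in the proof of Lemma~\ref{lem:randomset}, every directed edge $(u,v)$ with $u \in V_H$ is co-located with a copy of $i_u$ via the distributed $\textsf{reference}$. The number of such directed edges is $\sum_{u\in V_H}d(u)$, which is $\tilde{O}(m)$ because $|V_H|\le \tilde{O}(m/n)$ and degrees are $O(n)$.

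First I would reserve, for each $u \in V_H$, a contiguous block of $\lceil n/s \rceil$ machines (totaling $n$ cells) to hold the aligned bit-string representation of $N(u)$, where the $j$-th cell of the block encodes whether $v_j \in N(u)$. The block for the vertex with index $i_u$ starts at machine-slot $(i_u - 1)\lceil n/s \rceil + 1$, and all cells are initialized to $0$; this initialization is purely local and costs $O(1)$ rounds. The total memory used by these blocks is $|V_H| \cdot n = \tilde{O}(m/n)\cdot n = \tilde{O}(m)$, matching the budget. Next, each edge $(u,v)$ with $u \in V_H$ computes, from its known $i_u$ and $\mathsf{ID}(v)$, the target machine $(i_u - 1)\lceil n/s \rceil + \lceil \mathsf{ID}(v)/s \rceil$ and the offset $((\mathsf{ID}(v) - 1) \bmod s) + 1$ inside that machine, then sends a single ``set-to-$1$'' message to that cell. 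Finally, each target machine applies its received flips locally.

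The main obstacle I anticipate is the congestion analysis of this one-shot routing in a single parallel round. On the receive side, the pair $(u,v)$ with $u \in V_H$ uniquely determines one cell of $u$'s block, so each target machine (which holds $s$ distinct cells) receives at most $s$ messages. On the send side, the input is distributed so that each machine initially stores $O(s)$ edges and therefore sends $O(s)$ messages. Combined with the fact that $|V_H| \le \tilde{O}(m/n)$, the total routed volume is $\tilde{O}(m)$, spread over $\tilde{O}(m/s)$ machines without violating per-machine bandwidth. The ceiling in $\lceil n/s\rceil$ wastes at most $O(|V_H|) = \tilde{O}(m/n)$ extra cells, still within $\tilde{O}(m)$, and all required address arithmetic (including learning the block offset from $i_u$) is supported by the indexing and prefix-sum primitives recalled in Section~\ref{sec:mbasicmpc}. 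With the routing feasible in a single round, the whole reorganization uses $O(1)$ rounds and $\tilde{O}(m)$ total space, as claimed.
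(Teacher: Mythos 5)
Your proof takes essentially the same approach as the paper's: after \textsc{RandomSet} provides each $u \in V_H$ (and its incident edges) with the index $i_u$, allocate an $n$-cell block per $u$ (total $n|V_H|\le \tilde{O}(m)$ space) and route each directed edge $(u,v)$ to the cell addressed by $(i_u,\textsf{ID}(v))$ in a single round. Your congestion analysis is spelled out in more detail, and your use of $\textsf{ID}(v)$ for the inner index is the correct reading of the paper's slightly garbled phrasing ``a neighbor $v\in S$ with $I_k$'' (which, taken literally, would only encode $N(u)\cap V_H$).
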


\begin{proof}

% Since the memory of each machine can be very small, e.g., $\log n$, we cannot store $V_H$ in one machine. To reorganize the data of the subgraph $H$, we need to let each machine $M_i$ know the nodes in $M_i \cap V(H)$. 
In our setting, we use \textsc{RandomSet} to create a set $S$, i.e., $V_H$ of random numbers in $[n]$. By Lemma~\ref{lem:randomset}, each machine $p_i$ can know the indexes of nodes in $p_i \cap V(H)$ in $V_H$ within $O(\log_s n)$ and thus can execute the same procedure as follows. Then we map each node in $S \cap p_i$ to the corresponding space. For example, the node $u \in S$ with $I_j$ has a neighbor $v \in S$ with $I_k$. We map $v$ to the $(nI_j + I_k)$-th space and map $u$ to the $(nI_k + I_j)$-th space. We repeat this procedure until all nodes are mapped. Therefore, we get consecutive machines to store $N(u)$ for all $ u\in V_H$ (the details can seen in the proof of Lemma~\ref{lem: orga}). Notice that, we keep empty positions which are used to make the positions of neighbors of each node $u$ align, such that we can compare $N(u)$ to $N(v)$ for any two nodes $u$ and $v$ efficiently. Now, let us consider the space we used. $|N(u)| = n$ for each $u\in V_H$. As there are $|S|$ nodes, the total space we used is $O(n|S|)$, i.e., $O(n|V_H|)$. Since $|V_H| \leq \tilde O(m/n)$, $n|S|\leq \tilde O(m)$.  The whole process only takes constant rounds. 
\end{proof}

\noindent \textbf{\textsc{CopyNBR}($\mathcal{S},t$)}~~Suppose we have $\mathcal{S} = \{N(v_1),\ldots, N(v_x)\}$ stored in machines where $v_i \in S$, $i \in[x]$ and $S$ is a random set created by $\textsc{RandomSet}$. We will create $N(v_i)_1,\ldots, N(v_i)_t $ for each $N(v_i)$ where $i\in[x]$.
The goal is to make $t$ copies of $N(u)$  for each $u\in S$ such that we can execute other algorithms in parallel. In our setup, each $N(u)$ where $u\in S$ is organized in a collection of consecutive machines. To solve this problem, we employ the \textsc{CopyNBR}($\mathcal{S},t$) algorithm. Upon executing \textsc{CopyNBR}($\mathcal{S},t$), all $t$ copies of $\mathcal{S}$ are stored in consecutive machines. 

\begin{restatable}{lemma}{copymain}\label{lem: copy-main}
The CopyNBR($\mathcal{S},t$) problem can be solved in $O(\log_s n)$ rounds in the $s$-space MPC model where $s = \Omega(\log n)$ and $t$ is a parameter satisying $n|S|t \leq \tilde O(m)$.
\end{restatable}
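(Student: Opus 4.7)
The plan is to reduce the problem to $n|S|$ simultaneous single-word broadcasts of depth $O(\log_s t)$, each realized by an $s$-ary replication tree. Because $t \leq \tilde O(m)/(n|S|)$ and $n|S| \leq \tilde O(m)$ already holds after \textsc{ReorganizeNBR}, we can afford to pre-allocate $t$ consecutive blocks of size $n$ per $v_i \in S$ to host $N(v_i)_1,\ldots,N(v_i)_t$. Identifying these destination ranges, and the index of each word within them, is handled by \textsc{Index} and prefix-sum, which the excerpt guarantees run in $O(\log_s n)$ rounds.

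The replication itself proceeds in two stages. First I would \emph{disperse}: each machine currently holding $s$ aligned words of some $N(v_i)$ sends each of those words to a distinct helper machine, one word per destination, using exactly $s$ words of outgoing bandwidth in a single round. Now each of the $n|S|$ words lives alone on a helper. I then run an $s$-ary broadcast tree per word in parallel: in each round, every machine currently holding a given word forwards it to $s$ fresh machines, again using $s$ words of outgoing bandwidth. After $\lceil \log_s t\rceil = O(\log_s n)$ rounds each word is present on at least $t$ machines, and across all words the total memory in use is $n|S|t = \tilde O(m)$, fitting inside the space budget at every intermediate step.

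Finally, I would \emph{reaggregate} by calling \textsc{Sorting} to route the replicated words into the pre-allocated destination ranges, so that each copy $N(v_i)_j$ ends up on $n/s$ consecutive machines in the same aligned layout as the original $N(v_i)$; this also costs $O(\log_s n)$ rounds. The main obstacle is certifying that no machine ever exceeds its $s$-word memory or bandwidth limit. Dispersion handles the memory side, since after stage one every helper stores a single word; during broadcasting each sender transmits $s$ single-word messages to distinct destinations, keeping outgoing traffic at exactly $s$ words per round; and the final routing moves $\tilde O(m)$ total words across $\tilde O(m/s)$ machines, so the standard MPC sorting machinery delivers them without congestion within the promised round count.
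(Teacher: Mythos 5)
Your plan — \textsc{Index}/prefix-sum to lay out destinations, an $s$-ary replication tree, and a final \textsc{Sorting} pass to compact — is in the right spirit, and is essentially an explicit re-derivation of what the paper simply cites as the ``copies of sets'' primitive of Theorem~\ref{them:basics} before reorganizing with one more $O(\log_s n)$-round pass. However, the re-derivation opens a gap in the resource accounting that the black-box citation sidesteps.

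The issue is the dispersion step, and it propagates through the broadcast. After dispersion you place each of the $n|S|$ words ``alone on a helper'' machine, and after $\lceil\log_s t\rceil$ broadcast rounds you have $n|S|t$ word-copies each on its own machine. You argue this is fine because the number of \emph{occupied words}, $n|S|t$, is at most $\tilde O(m)$. But the $s$-space MPC model in this paper budgets the \emph{sum of all local memory}, so with total space $\tilde O(m)$ and $s$ words per machine there are only $\tilde O(m/s)$ machines available. A layout with one word per machine therefore needs $n|S|t$ machines — an overshoot by a factor of $s$ — and already the first dispersion round needs $n|S|$ machines, which is not implied by $n|S|t\le\tilde O(m)$ when $t\ll s$ (for instance $t=\Theta(k\log n)$ while $s=n^{\delta}$). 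The natural fix, keeping words packed $s$ per machine, breaks the round bound: a machine holding $s$ distinct words cannot send each of them to $s$ fresh destinations without $s^{2}$ outgoing words, so naive packed doubling gives $O(\log t)$ rounds rather than $O(\log_s t)$. Reconciling the packing constraint with $s$-ary branching is exactly what the cited copies-of-sets routine resolves; either appeal to Theorem~\ref{them:basics} directly as the paper does, or redesign the broadcast so that intermediate levels remain packed while still achieving branching factor $\Theta(s)$ per round.
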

\begin{proof}
By Theorem \ref{them:basics}, we can make $t$ copies of $\mathcal{S}$ for $t$ times within $O(\log_s n)$ rounds. Then we  reorganize these copied sets by making each copy of $\mathcal{S}$ stored in consecutive machines, which can be finished in at most $O(\log_ s n)$ rounds.
\end{proof}

\noindent \textbf{{EvenCluster}}~~Consider a set $S$ comprising nodes labeled from $1$ to $k$. The objective is to ensure that the number of nodes with labels in $S$ is even. To achieve this, we employ the \textsc{EvenCluster} algorithm, designed specifically to solve this problem.

\begin{restatable}{lemma}{evencluster}\label{lem: evencluster}
In the MPC model with each machine's memory $s = \Omega(\log n)$, there exists an algorithm that can output $S' \subseteq S$ within $O(\log_s n)$ rounds, such that each label in $S'$ is associated with the same number of nodes with that label. 
\end{restatable}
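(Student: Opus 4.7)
The plan is to reduce \textsc{EvenCluster} to three standard $s$-space MPC primitives already established in the preliminaries, each of which runs in $O(\log_s n)$ rounds: \textsc{Sorting}, \textsc{Index}/prefix-sum, and tree-based aggregation of a separable function. The end result is to identify, for each label, its block of nodes and the common count $c := \min_{i \in [k]} c_i$ to keep from each block, where $c_i$ is the number of nodes in $S$ carrying label $i$.

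First, I would apply the \textsc{Sorting} algorithm to $S$ keyed by label, so that all nodes sharing the same label occupy a contiguous block of machines; this costs $O(\log_s n)$ rounds by the cited sorting theorem. Next, using \textsc{Index} together with a prefix-sum on the indicator that a position starts a new label block, I would compute, for each label $i \in [k]$, the starting index $\mathrm{st}_i$ of its block and the count $c_i$ (as the difference of consecutive starting indices). This step is also $O(\log_s n)$ rounds. Then I compute $c := \min_{i \in [k]} c_i$ by a tree-based aggregation: since $\min$ is a separable function in the sense of the definition in Section~2, it fits exactly into the aggregation framework, and the single word $c$ can then be broadcast to every machine in $O(\log_s n)$ rounds.

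Finally, every node already knows its label $i$, the starting index $\mathrm{st}_i$ of its block, and the global threshold $c$; from the sorted layout it can locally compute its offset within its block and keep itself in $S'$ exactly when this offset is less than $c$. All other nodes are discarded. Each label then contributes exactly $c$ nodes to $S'$, which gives the required balance property, and the output $S'$ can be compacted by one more \textsc{Index} pass so that its elements are placed in consecutive machine slots.

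The only point that requires care is respecting the $s$-space bound throughout: no single machine can hold the entire list of counts $\{c_1,\dots,c_k\}$ or the list of starting indices $\{\mathrm{st}_1,\dots,\mathrm{st}_k\}$ at once, so both the minimization and the distribution of $\mathrm{st}_i$ to the nodes of block $i$ must be routed through the tree-based aggregation and broadcast primitives rather than centralized. Once this is handled by the standard primitives, there is no remaining obstacle, and the total round complexity is the sum of a constant number of $O(\log_s n)$-round stages, yielding the claimed bound.
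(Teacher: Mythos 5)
Your proof is correct and essentially the same in spirit as the paper's: both sort $S$ by label so that same-label nodes become contiguous blocks, and both then truncate each block to a common length. The genuine difference is how the common length is obtained. You compute it explicitly as $c = \min_{i \in [k]} c_i$ via a segmented prefix-sum and a tree-based $\min$-aggregation, which makes the lemma self-contained and robust (it works for any labeled input set $S$). The paper instead treats the per-label count $\xi$ as an externally supplied parameter --- in the application it is chosen so that $k\xi = |S^*|$ with $|S^*| = 20k^2 n \log n / d$ --- and only verifies that the truncation can be carried out. The paper's proof also speaks of ``head machines with respect to $N(u)$'' because in the intended usage each surviving $u \in S$ has its entire neighborhood $N(u)$ laid out across a block of consecutive machines (from \textsc{ReorganizeNBR}), and the deactivation decision must be propagated to that whole block, not just to a single token representing $u$. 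Your abstract treatment omits this propagation step; it is not strictly required by the lemma as stated, but if you want the lemma to plug directly into the \textsc{ComputeCluster} pipeline you should note that, after marking a node $u$ as discarded, the decision is broadcast to the $O(n/s)$ machines holding $N(u)$, which costs one additional $O(\log_s n)$-round broadcast and therefore does not change the asymptotic round complexity.
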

\begin{proof}
    Let $\xi$ be the number of nodes with the same label in $S'$. Our algorithm proceeds in three slots.  Let machines with indexes $[\frac{ni}{s}, \frac{n(i+1)-1}{s}]$ deal with the messages in which the label is $i$.  In the first slot, we only let head machines with respect to $N(u)$ send their labels to the $\lfloor \textsf{ID}(u)/s \rfloor$-th machine. In the second slot, we use \textsc{Sorting} algorithm to sort all nodes in $S$ with the same label. From \cite{goodrich2011sorting}, \textsc{Sorting}  in MPC model can be finished in~$O(\log_s n)$ rounds. In the last slot, we send the sorting results back to head machines. If the index of the sorting result is larger than $\xi$, the head machine along with other machines storing $N(u)$ where $u\in S$ become inactive. 
\end{proof}

\noindent \textbf{\textsc{RepresentativeK}($S$)}~~In the {RepresentativeK}($S$) problem, the input is a set $S={S_1\cup \cdots \cup S_k}$  of nodes with $|S|$ labels (each node in $S_i$ has $|S_i|$ labels) where $S_i\cap S_j = \emptyset$ for any $i\not = j$ and $|S_i| = \Theta(|S|/k) \geq \Omega(\log n)$.  Our goal is to output $k$ nodes with $k$ representative labels. We use \textsc{RepresentativeK}($S$) to solve this problem.

\begin{restatable}{lemma}{unik}\label{lem:unik}
    The {RepresentativeK}$(S)$ problem can be solved in $O(\log_s n)$ rounds where $S$ is created by $\textsc{RandomSet}$ and $|S| \geq \Omega(k\log n)$ in the $s$-space MPC model $(s=\Omega(\log n))$.
\end{restatable}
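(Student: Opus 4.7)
The plan is to reduce \textsc{RepresentativeK}$(S)$ to a short pipeline of primitives that the preceding lemmas already implement in $O(\log_s n)$ rounds: \textsc{EvenCluster}, \textsc{Sorting}, and \textsc{Index}. The first step is to normalise $S$ using \textsc{EvenCluster} (Lemma~\ref{lem: evencluster}), so that every label appears with exactly the same multiplicity in the surviving set. Because the hypothesis guarantees $|S_i|=\Theta(|S|/k)\geq \Omega(\log n)$ for every $i$, no label is annihilated by the trimming, and we are left with a set in which each of the $k$ labels is carried by at least $\Omega(\log n)$ nodes. This step costs $O(\log_s n)$ rounds.

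Next I would apply \textsc{Sorting} to the trimmed set, using the label as the primary key and the vertex ID as the tie-breaker. By the theorem of Goodrich et al.\ this again costs $O(\log_s n)$ rounds in the $s$-space model, and after it finishes the nodes carrying label $i$ occupy a contiguous block of machine slots, with the smallest-ID node of $S_i$ placed at the head of the block. Designating the heads as the output is then a purely local test: each machine compares the label of its current slot with that of the slot immediately to its left (available by a single round of point-to-point exchange between adjacent slots), and marks the slot as a ``head'' if and only if the two labels differ or the slot is the global minimum.

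Finally I would compact the marked heads into a contiguous output list with a single \textsc{Index} invocation, which is $O(\log_s n)$ rounds as recalled in the preliminaries. Correctness follows because \textsc{EvenCluster} keeps every one of the $k$ labels alive and the sorted layout produces exactly one head per label; hence the output contains $k$ nodes carrying $k$ distinct representative labels, as required. The total round complexity is the sum of the three phases, $O(\log_s n)$, and the total space is dominated by the sort on $|S|$ items, which is negligible compared to the $\tilde O(m)$ budget.

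The only delicate point I would flag is the ``head-of-block'' test, since in the $s$-space model a block can stretch across many machines and a node does not directly know the label of its left neighbour. This is the obstacle I expect to spend a sentence on, and it is resolved by the standard contract of the \textsc{Sorting} primitive: after sorting, consecutive keys sit in consecutive slots of consecutive machines, so each machine can fetch the last key of its left neighbour in a single round. With that piece in place, everything else is a transparent gluing of the previously proved lemmas.
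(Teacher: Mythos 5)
Your proposal has a genuine gap: it skips the crucial first step that makes the whole reduction work. In \textsc{RepresentativeK}$(S)$, each node in $S_i$ carries $|S_i|$ labels (essentially, the IDs of the other members of its cluster, as produced by \textsc{CompareINIT}), so there are $|S|$ labels in total — not $k$. The paper's proof begins by collapsing each node's label multiset to the single \emph{minimum} label: since all nodes of $S_i$ see the same candidate set, they all agree on the same minimum, which becomes the canonical representative label for that cluster. Only after this collapse does each node carry one label and the $|S|$ labels degenerate to $k$ distinct ones, at which point a sort-and-pick-one-per-block argument applies. Your writeup assumes from the outset that each node has exactly one label drawn from $\{1,\dots,k\}$ (you speak of ``every label'' and ``each of the $k$ labels''), which is not the input you are given; without the min-label step, neither \textsc{EvenCluster} nor the sort has a well-defined single key to operate on.

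A secondary point: even after the collapse, invoking \textsc{EvenCluster} is unnecessary for this lemma. You only need to emit one node per label block, so sorting by the collapsed label and indexing the block heads (as you describe in your last two paragraphs) already suffices. \textsc{EvenCluster} is used elsewhere in the pipeline — inside \textsc{ComputeCluster}, where balanced sub-cluster sizes are needed before counting cut edges — not here. The mechanics of your sort/head-marking/\textsc{Index} phases are fine and match the paper's intent; the missing idea is the reduction from $|S_i|$ labels per node to one.
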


\begin{proof}
For each label, with high probability, there are $\Omega(\log n)$ corresponding nodes by Chernoff Bound. We find the minimum label of each $u$ and then we keep that label active. Otherwise, we make labels inactive. Then, each node will have one unique label and each node in the same cluster will have the same label. By \textsc{Sorting} Algorithm in MPC model, we can sort these $|S|$ nodes within $O(\log_s n)$ rounds.  Then, we can select $k$ nodes with unique labels. The round complexity is $O(\log_s n)$.
\end{proof}

\noindent \textbf{CompareCut($S,V$)}~~In the CompareCut($S, V$) problem, the input is a set $S$ of $k$ sets, i.e., $\{S_1,S_2,\ldots,S_k\}$ and the vertex set $V$, the goal is to output the largest one among numbers $n(u,S_i)$ of edges between $u\in V$ and $S_i$ for each $S_i$, along with the label of $u$ (i.e., the label of the $S_i$), where $i\in[k]$.
We use $\textsc{ComputeCut(S,V)}$ to solve this problem.

\begin{restatable}{lemma}{cmpcut}\label{lem:cmpcut}
    Given a graph $G = (V, E)$, let $S \subset V$ be a random set of nodes created by $\textsc{RandomSet}$. The CompareCut$(S, V)$ problem can be solved in $O(\log_s n)$ rounds in the $s$-space MPC model $(s = \Omega(\log n ))$.
\end{restatable}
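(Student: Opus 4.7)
The plan is to reduce $\textsc{CompareCut}(S,V)$ to a constant number of invocations of \textsc{Sorting}, \textsc{Index}, and prefix-sum primitives, each of which runs in $O(\log_s n)$ rounds in the $s$-space MPC model. First, I would tag every directed edge $(u,v)$ with $v \in S$ by the label $\ell(v) \in [k]$ of the part containing $v$. Concretely, I would build two record streams: one containing $(\textsf{ID}(v), \ell(v))$ for each $v \in S$, and one containing $(\textsf{ID}(v), \textsf{ID}(u))$ for each directed edge $(u,v)$ with $v \in S$. Sorting the union lexicographically by $\textsf{ID}(v)$ (tie-breaking so the label record comes first) groups, for each $v \in S$, its label together with every edge pointing to it in consecutive cells; a single local scan within each group then broadcasts $\ell(v)$ to every incident edge. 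By the sorting theorem cited in the preliminaries, this step costs $O(\log_s n)$ rounds and uses $\tilde O(m)$ total space.

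Second, I would aggregate per-label edge counts at each $u \in V$. After step one, each relevant directed edge is a triple $(\textsf{ID}(u), \ell, 1)$. Sort these records by the composite key $(\textsf{ID}(u), \ell)$ so that records with the same key lie in a contiguous block of machines, and apply a prefix-sum on the indicator $1$ restricted to each block; the block-final value gives $n(u, S_\ell)$, which I keep as a single record $(\textsf{ID}(u), \ell, n(u, S_\ell))$ and discard the rest. This is again $O(\log_s n)$ rounds because both sorting and prefix-sum are $O(\log_s n)$-round primitives.

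Third, I would compute $\ell^*(u) := \arg\max_\ell n(u, S_\ell)$ for each $u$. Resort the surviving records by $(\textsf{ID}(u), -n(u, S_\ell))$ so that, for each $u$, the record with the largest count appears first in its block; a single round of prefix-aggregation (the standard separable ``max'' primitive implemented via prefix-sum) then extracts $(u, \ell^*(u), n(u, S_{\ell^*(u)}))$, and one more sort routes this triple back to the machines holding $u$. Each of these steps is again $O(\log_s n)$ rounds, and the records in play are in one-to-one correspondence with the edges incident to $S$ plus the labels of $S$, keeping the total space within $\tilde O(m)$. Summing the constantly many invocations of sorting and prefix-sum yields the claimed round complexity $O(\log_s n)$. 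The one subtlety, rather than a true obstacle, is load balancing: a vertex $u$ of very high degree, or a part $S_\ell$ touched by many edges, would overflow any single machine if we tried to store a local histogram, which is exactly why every aggregation is routed through \textsc{Sorting} and prefix-sum on contiguous segments rather than through local counting on $M_u$.
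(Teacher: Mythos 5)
Your proof is correct, but it takes a genuinely different route from the paper's. The paper's proof leans heavily on the \textsc{ReorganizeNBR} machinery (Lemma~\ref{lem: org-main}): it stores each $N(u)$ for $u\in S$ as an aligned $n$-bit vector spread across consecutive machines, sorts these aligned rows by cluster label, then counts, column by column, the number of non-empty $i$-th positions using a converge-cast/broadcast pattern, and finally extracts the maximum over $k$ values with another $O(\log_s k)$-round broadcast. In contrast, you bypass the aligned representation entirely and reduce the whole problem to record-stream sorting plus segmented prefix-sum: tag each directed edge $(u,v)$ with $v\in S$ by $\ell(v)$ via one sort-and-broadcast, aggregate per $(u,\ell)$ via a second sort and block prefix-sum, then extract $\arg\max_\ell$ via a third sort keyed on $(\textsf{ID}(u), -n(u,S_\ell))$. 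Both arrive at $O(\log_s n)$ rounds and $\tilde O(m)$ space. What the paper's approach buys is reuse of the alignment primitive already built for the common-neighbor computation (and of the fact that $S$ comes from \textsc{RandomSet}, which is what guarantees the $O(n|S|)\le\tilde O(m)$ space budget for the aligned rows); what yours buys is modularity and independence from that primitive and from how $S$ was produced, since you only touch edges incident to $S$, so the $\tilde O(m)$ bound is immediate.

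One small exposition point: in your step one, ``a single local scan within each group'' is too weak when a vertex $v\in S$ has degree exceeding $s$, since its group then spans many machines; you need a segmented broadcast implemented via the prefix-sum primitive there too, which is still $O(\log_s n)$ rounds. Your closing remark on load balancing shows you understand this, so it is a phrasing issue rather than a gap, but it is worth stating the segmented-broadcast step explicitly.
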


\begin{proof}
By Lemma~\ref{lem:randomset}, each machine $p_i$ knows $p_i \cap S$. We let each machine reorganize~$N(u)$ for each $u\in S$ within constant rounds by Lemma \ref{lem: org-main}. We reorganize $N(u)$ $(u\in S)$ in consecutive machines according to order of $k$ clusters within $O(\log_s n)$ rounds (using \textsc{Sorting} Algorithm, we can index each label among $k$ labels within $O(\log_s n)$ rounds). Our goal is to find the number of edges between all $S_i$ and $V$ where $i\in[k]$. Now, each $N(u)$ for $u\in S$ is stored in consecutive machines. Next, we introduce the method to obtain the number of edges, i.e., $v_i \in V$ connecting to $S_i$ where $i\in[k]$. We count the number of $v_i$ in the $i$-th space of each $N(u)$ where $u\in S_i$. In practice, in the machines storing $N(u)$ where $u\in S_i$, we count the number of the nonempty $i$-th space for all corresponding machines. Then we sum the number of the non-empty $i$-th space up by the broadcast method. For example, a machine $p_1$ stores $v_1,\ldots,v_s$. We let the machine $p_1$ send $v_i$ to $p_{1+\frac{n(i-1)}{s}}$ where $i\in[s]$. Then $p_{1+\frac{n(i-1)}{s}}$ will receive at most $s$ messages containing $v_i$. Next, $p_{1+\frac{n(i-1)}{s}}$ calculates the sum, which is the number of $v_i$ and then sends this sum back to previous senders ($p_1,\ldots,p_{1+n\cdot (s-1)/s}$). Thus, each machine $p_{1+\frac{n(i-1)}{s}}$ ($i\in[s]$) will know the number of $v_1,\ldots,v_s$ for nodes among $s$ corresponding machines after two rounds. After repeating this kind of procedures for $O(\log_s n)$ rounds, we get the number of edges between $v_1$ to $S_i$. Since we run algorithms in parallel, within $O(\log_s n)$ rounds, we obtain $v_i$ to $S_j$ where $i\in[n]$ and $j\in[k]$. Therefore, we can know the number of edges between $V$ and $S_i$ for $i\in[k]$ within $O(\log_s n)$ rounds.  Then, we compute the largest one, which can be done by broadcast in $O(\log_s k)$ rounds (For every $s$ values, we get a maximal value. After $O(\log_s k)$ rounds, we obtain the largest value). Then, we get the label for $u$ from some $S_i$ with the largest value. 
\end{proof}

\section{The Algorithm Based on Neighbor Counting}\label{sec:mfirst-mpc-alg}
Recall that a graph $G=(V,E)$ is generated from the SBM$(n,p,q,k)$ if there is a hidden partition~$V=\cup_{i=1}^kV_k$ of the $nk$-vertex set $V$, and for any two vertices $u,v$ that belong to the same cluster, the edge $(u,v)$ appears in $E$ with probability $p$; for any two vertices $u,v$ that belong to two different clusters, the edge $(u,v)$ appears in $E$ with probability $q$, where $0< q<p< 1$. In this section, we give the algorithm underlying Theorem \ref{thm:firstmpc}. 

We first give a simple sequential algorithm based on comparing common neighbors. Then we show how to implement it in the $s$-space model. To do so, we give implementations of a number of basic graph operations in the $s$-space model, which is deferred to Section \ref{sec:mbasicmpc}.

\subsection{A Sequential Algorithm Based on Counting Common Neighbors}

\begin{restatable}{theorem}{firstmain} \label{thm:firstmain}
If $\frac{p-q}{\sqrt{p}} \geq \Omega({\frac{(k+1)^{1/2}}{n^{1/4}}})$, % and $|S| = \frac{k^2\log n}{p}$, 
the algorithm \textsc{CommNBR} can output $k$ clusters in  $O(\frac{k^2n\log n}{p})$ time with probability $1-O(\frac{1}{n})$.
\end{restatable}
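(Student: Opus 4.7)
The plan is to analyze the random variable $X_{uv} = |N(u)\cap N(v)|$ and to show that under the hypothesis on $p,q,k$ it cleanly separates same-cluster from different-cluster pairs. I would begin with a direct expectation computation: $\mathbb{E}[X_{uv}] = (n-2)p^2 + (k-1)nq^2$ when $u,v$ lie in the same cluster, while $\mathbb{E}[X_{uv}] = 2(n-1)pq + (k-2)nq^2$ when they do not, so the gap between the two means is essentially $n(p-q)^2$. Since $X_{uv}$ is a sum of $\Theta(kn)$ independent Bernoulli variables each of mean at most $p$, its variance is $O(knp)$, and Bernstein's inequality gives $\left|X_{uv} - \mathbb{E}[X_{uv}]\right| = O(\sqrt{knp\log n})$ with probability at least $1 - n^{-c}$ for any prescribed constant $c$.

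Next I would plug in the hypothesis $\frac{p-q}{\sqrt{p}} \geq \Omega(k^{1/2}/n^{1/4})$ to get $n(p-q)^2 \geq \Omega(kp\sqrt{n})$, which dominates the deviation $\sqrt{knp\log n}$ (using that the same hypothesis forces $p = \Omega(k/\sqrt{n})$, so $knp$ is large enough for the sub-Gaussian Bernstein term to prevail over the Poisson term). Setting a threshold $\Delta$ at the midpoint $\tfrac12(\mu_{\text{same}}+\mu_{\text{diff}})$ and taking a union bound over the $O(n^2)$ pairs the algorithm inspects then yields correct same/different-cluster classification of every such pair with failure probability $O(1/n)$. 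This concentration argument is the real crux: a naive Hoeffding bound would only give deviation $\sqrt{kn\log n}$ and would inflate the required $\frac{p-q}{\sqrt{p}}$ by a factor of $p^{-1/4}$, so it is essential to exploit the small Bernoulli mean via the variance-based tail inequality, and verifying that the gap $n(p-q)^2$ truly beats the Bernstein noise under the stated threshold will be the main obstacle.

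Given correct pairwise classification, the three phases of \textsc{CommNBR} compose cleanly. I would use Chernoff to show each hidden cluster $V_i$ contributes $\Theta(\log n)$ vertices to the sample $S'$ of size $\Theta(k\log n)$; thresholding $X_{uv}$ on $\binom{S'}{2}$ partitions $S'$ into $k$ groups, one per $V_i$, from which one representative per cluster is extracted. In the second phase each $u\in S$ is placed in the correct sub-cluster by comparing $X_{uv}$ to the $k$ representatives. In the third phase each $v\in V\setminus S$ is assigned to the sub-cluster $S_i$ maximizing $|N(v)\cap S_i|$; since $\mathbb{E}[|N(v)\cap S_i|] = |S_i|p$ for $v\in V_i$ and $|S_i|q$ otherwise, a further Chernoff/union bound over the $O(k^2 n)$ comparisons gives correct assignment, provided $|S|$ is chosen large enough that $|S_i|(p-q)^2 = \Omega(p\log n)$, which I would verify is consistent with the overall sample budget implied by the hypothesis. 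Finally, I would bound the running time by counting: $O(k^2\log^2 n)$ pairwise common-neighbor computations in phases one and two, plus an $O(kn)$-vertex scan in phase three, each of cost proportional to the relevant neighborhood sizes, which after routine bookkeeping yields the claimed $O(k^2 n\log n/p)$ bound.
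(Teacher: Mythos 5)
Your overall strategy matches the paper's: compare expected common-neighbor counts for same- vs.\ different-cluster pairs (your two expectation formulas are exactly the paper's $s_1$ and $s_2$), use concentration to get high-probability separation, recover sub-clusters from a small sample by thresholding, and place the remaining vertices by edge counts, with a final Chernoff bound over the assignment step. The phase decomposition and the closing union bound are essentially identical to what the paper does. However, there are two genuine problems.

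First, the variance estimate you propose for $X_{uv}$ is too loose and, as written, the concentration step will \emph{not} close under the stated hypothesis. Each indicator ``$w$ is a common neighbor of $u$ and $v$'' has success probability at most $p^2$, not $p$, so $\mathrm{Var}(X_{uv}) \leq \mathbb{E}[X_{uv}] = O(np^2 + knq^2)$, and Bernstein or multiplicative Chernoff give deviation $O\!\left(\sqrt{\mathbb{E}[X_{uv}]\log n}\right) = O(p\sqrt{kn\log n})$. Your bound $O(\sqrt{knp\log n})$ is worse by a factor of $\sqrt{1/p}$. Concretely, with your bound you would need $n(p-q)^2 \gtrsim \sqrt{knp\log n}$, which after dividing through the hypothesis reduces to $kp \gtrsim \log n$; since the hypothesis only forces $p \geq \Omega(k/\sqrt{n})$, this fails whenever $k$ is small (e.g.\ $k = O(1)$ and $p = o(1)$). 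The paper's Chernoff argument, with deviation tied to $\sqrt{s_1\log n}$ rather than $\sqrt{knp\log n}$, exactly matches the required threshold, and you should replace your variance bound accordingly.

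Second, you propose to set the threshold $\Delta$ at the midpoint $\tfrac12(\mu_{\text{same}} + \mu_{\text{diff}})$, which requires knowing $p$ and $q$. The paper explicitly treats $p,q$ as unavailable to the algorithm and introduces a subroutine $\textsc{ComputeDEL}$ that estimates $\Delta$ empirically: sample $\Theta(k\log n)$ vertices, compute all pairwise common-neighbor counts, take the maximum $\Delta'$ (which with high probability comes from a same-cluster pair and concentrates near $s_1$), and set $\Delta = \Delta' - 9\sqrt{\Delta'\log n}$. Lemma~\ref{lem:com-neigh} then shows $\Delta$ lands in the gap $(s_1 - 16\sqrt{s_1\log n},\, s_1 - 6\sqrt{s_1\log n})$ strictly above $t_2$ and below $t_1$. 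Your proof omits this estimation step entirely; to match the paper you need to argue both that a valid threshold exists \emph{and} that the algorithm can compute one from the data without being handed $p,q$.

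Minor points: your time accounting (``$O(k^2\log^2 n)$ pairwise computations'') does not obviously yield the stated $O(k^2 n\log n/p)$; the paper sizes $S$ as $\Theta(k^2 n\log n / d) \approx \Theta(k\log n/p)$ via a degree-concentration corollary and charges $O(n|S|)$ total, which you would need to reproduce. Also note that the quantity the paper compares in the final assignment step is edge counts into equal-size truncations $\overline{C}_i$ of the sub-clusters, which is why the \textsc{EvenCluster}-style truncation appears; your sketch of the last phase is fine in spirit but should make the equal-size truncation explicit since it is what makes the comparison between $L(v,\overline{C}_i)$ and $L(v,\overline{C}_j)$ a clean Chernoff argument.
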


In our sequential algorithm \textsc{CommNBR}, we first randomly sample a set $S$ of~$\frac{21n k^2\log n}{d}$  nodes from $V$ such that each cluster has more than $\Theta({\log n})$ nodes with high probability where $d$ is the number of neighbors of an arbitrary node $u\in V$. Then, for each pair $u,v$, we count the number of their common neighbors in $G$, i.e., those vertices that are connected to both $u$ and $v$. If the number of common neighbors is above some threshold $\Delta$, then we put them into the same cluster. In this way, we can obtain $k$ \emph{sub-clusters} of $S$, $C_1,\dots,C_k$. That is, each $C_i\subseteq S$ and is a subset of some cluster, i.e., $C_i=V_{\pi(i)}\cap S$ for some permutation $\pi:\{1,\dots,k\}\to\{1,\dots,k\}$. %Once we have obtained these subclusters, % use the naive algorithm of comparing common neighbors \pan{in the original graph $G$ or in the graph $G[S]$ induced by $S$?}to obtain $k$ clusters and we mark each node with a label corresponding to its cluster. Based on these $\Theta(k\log n)$ marked nodes,
Let $L(v,C_i)$ denote the number of incident edges between a node $v$ and a cluster $C_i$. We can then cluster each remaining node $v\in V\setminus S$ by finding the index $j$ such that $L(v,C_j)$ is the greatest among all numbers $L(v,C_i), 1\leq i\leq k$.

For the intuition of the existence of such a  threshold $\Delta$, let us take the case $k=2$ as an example. In this case, for any two vertices $u,v$ belonging to the same cluster, the expected number of common neighbors is $p^2n+q^2n$; for any two vertices $u,v$ belonging to two different clusters, the expected number of common neighbors is $2npq$. Since $\frac{p-q}{\sqrt{p}} \geq \Omega\left(\frac{(k+1)^{1/2}}{n^{1/4}}\right)$, there exists a sufficiently large gap between these two numbers so that we can define a suitable threshold. However, the values of $p$ and $q$ are not provided. 
To address this issue, we propose an algorithm called $\textsc{ComputeDEL}$, which can be described as follows. 
We first sample a set $\mathcal{S}_{\Delta}$ of $\Theta(k\log n)$ nodes, and for each pair  $u,v \in \mathcal{S}_{\Delta}$, we compute a set $\mathcal{V}_{\Delta}$ of values of $|N(u)\cap N(v)|$. We let $\Delta' = \max \{ value \in \mathcal{V}_{\Delta}\} $, and then we have $\Delta = \Delta' - 9\sqrt{\Delta'\log n }$.

%%%%% From Appendix

\begin{algorithm}[h]
\caption{\textsc{CommNBR}($G,n,p,q,k$): A sequential algorithm based on counting common neighbors}
\label{alg:optsimple}
\KwInput{ A SBM graph $G$;}
\begin{algorithmic}[1]
   
   \STATE{Select an arbitrary node $u\in V$ and set $d = |N(u)|$}
   \STATE{Sample a set $S$ of $21n k^2\log n/d$  nodes in $V$}
 
  \STATE{Obtain $\Delta$ by \textsc{ComputeDEL}($G$)}
   \STATE Find $k$ clusters from $S$ by \textsc{CompcomNBR}($S,G,\Delta$)
   \STATE Let $C_1, C_2, \cdots, C_k$ be the obtained $k$ sub-clusters

     \STATE For each $i\leq k$, choose an arbitrary subset $\overline{C}_i\subseteq C_i$ such that $|\overline{C}_i|=20n k^2\log n/d$ 
   \FOR{$v \in V\setminus S$}
   \STATE Put $v$ in $C_i$ such that $i=\arg\max_j L(v,\overline{C}_j)$, where $L(v,\overline{C}_j)$ is \\ the number of edges between $v$ and $\overline{C}_j$ %\pan{$L(v,C_i)\geq L(v,C_j)$?}
   \ENDFOR   
   \STATE Output $k$ clusters.
\end{algorithmic}
\end{algorithm}

In Algorithm $\textsc{CompcomNBR}$, while $S$ is not empty, we execute the following procedure. First, we choose an arbitrary vertex $v$ in $S$ and put $v$ into a set $C_i$. For each node $u$ in $S$, if $|N_G(u) \cap N_G(v)| \geq \Delta$, we add $u$ to $C_i$. Then, we remove~$C_i$ from the set $S$. Finally, we return all the sub-clusters $C_i$ ($i\in[k]$).

\subsubsection{Analysis of \textsc{CommNBR}}

We first give the estimation of $\frac{21k^2n\log n}{d}$.

\begin{lemma}
Let $u \in V$ be a vertex of the random graph $SBM(n,p,q,k)$. We have that, for any~$\alpha\in (0,\sqrt{(n-1)p+n(k-1)q})$, 
$
    \mathrm{P_r} (|np+nq(k-1)-d(u)| \geq \alpha \sqrt{np+nq(k-1)]} ) \leq e^{-\alpha^2/3}.
$
\end{lemma}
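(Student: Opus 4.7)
The plan is to recognize $d(u)$ as a sum of independent Bernoulli random variables and then invoke the Chernoff bound stated in the preliminaries. Fix any vertex $u$ and let $V_i$ be the cluster containing $u$. By the definition of the SBM, we can write $d(u) = \sum_{w \in V_i \setminus \{u\}} X_w + \sum_{w \in V \setminus V_i} Y_w$, where the $n-1$ variables $X_w$ are i.i.d.\ Bernoulli$(p)$, the $n(k-1)$ variables $Y_w$ are i.i.d.\ Bernoulli$(q)$, and all of these are mutually independent since distinct potential edges in the SBM are sampled independently. Hence $d(u)$ is a sum of independent $\{0,1\}$ variables with true mean $\mu := (n-1)p + n(k-1)q$, which agrees with the centering value $\mu_0 := np + n(k-1)q$ used in the statement up to an additive term $p \leq 1$.

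Next I would set $\delta := \alpha/\sqrt{\mu_0}$, so that $\delta \mu_0 = \alpha \sqrt{\mu_0}$ and $\delta^2 \mu_0 = \alpha^2$. The hypothesis $\alpha < \sqrt{(n-1)p + n(k-1)q}$ guarantees $\delta \in (0,1)$, making both tails of the Chernoff bound from the preliminaries available: the upper tail gives $\Pr[d(u) \geq (1+\delta)\mu] \leq e^{-\delta^2\mu/3}$ and the lower tail gives $\Pr[d(u) \leq (1-\delta)\mu] \leq e^{-\delta^2\mu/2}$. A union bound combines these into a two-sided concentration inequality for $d(u)$ around $\mu$ with deviation $\delta \mu$, and since $\mu \leq \mu_0$ the deviation is bounded above by $\alpha\sqrt{\mu_0}$, as required.

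The only non-routine bookkeeping is re-centering from the true mean $\mu$ to the nominal mean $\mu_0$. I would handle this by the triangle inequality: the event $|d(u) - \mu_0| \geq \alpha\sqrt{\mu_0}$ implies $|d(u) - \mu| \geq \alpha\sqrt{\mu_0} - p$, and since $p \leq 1$ is negligible compared to $\alpha\sqrt{\mu_0}$ in the regime of interest, this loss can be absorbed into the constant. The tightness of the stated bound $e^{-\alpha^2/3}$ (as opposed to the $2e^{-\alpha^2/3}$ one would naively obtain from union-bounding both tails) is the only mildly delicate point; it follows either by taking the strictly tighter lower-tail exponent or by absorbing the factor of two into the constant in the denominator, both of which are standard.
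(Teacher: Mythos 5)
Your proof takes essentially the same route as the paper's: write $d(u)$ as a sum of $kn-1$ independent Bernoulli variables with true mean $(n-1)p+n(k-1)q$, apply the Chernoff bound stated in the preliminaries, and then re-center to the nominal mean $np+n(k-1)q$ at an additive cost of $p\leq 1$. The paper's own proof is if anything less careful than yours---it applies Chernoff, asserts both tails combine to $e^{-\alpha^2/3}$, and writes ``Thus'' for the re-centering step without accounting for the factor of $2$ from the union bound or the additive shift---so the bookkeeping slop you explicitly flagged is genuinely present and is simply absorbed silently in the paper (and in the way the lemma is invoked downstream, e.g.\ in Corollary~\ref{cor:deg}, where the constants leave room for it).
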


\begin{proof}
For any vertex $u\in V$, the expectation of the degree of $u$, i.e., $d(u)$ is $(n-1)p+n(k-1)q$. We can see that $d(u)$ is the sum of $kn-1$ independent variables, $x_1,x_2,\ldots, x_{kn-1}$ where $x_i$ indicates whether the $i$-th edge from $u$ exists or not. By Chernoff bound, for any $\alpha\in (0,\sqrt{(n-1)p+n(k-1)q})$, we have
$$
    \mathrm{P_r} (|(n-1)p+nq(k-1)-d(u)| \geq \alpha \sqrt{(n-1)p+nq(k-1)]} ) \leq e^{-\alpha^2/3}.$$
Thus,
$$
    \mathrm{P_r} (|np+nq(k-1)-d(u)| \geq \alpha \sqrt{np+nq(k-1)]} ) \leq e^{-\alpha^2/3}.
$$
\end{proof}

Therefore, $d(u) \in [np+nq(k-1)-\alpha \sqrt{np+nq(k-1)}, np+nq(k-1)+\alpha\sqrt{np+nq(k-1)}]$ with probability at least $1-e^{-\alpha^2/3}$. 
As $p <  p+(k-1)q < kp$, we have Corollary~\ref{cor:deg} by setting $\alpha = \sqrt{3\log n}$. 
\begin{corollary}\label{cor:deg}
For any vertex $u$ in a random graph generated by SBM($n,p,q,k$), with probability at least $1-\frac{1}{n^3}$, $\frac{21k^2n\log n}{d(u)} \in  ( \frac{20k\log n}{p},  \frac{22k^2\log n}{p})$.
\end{corollary}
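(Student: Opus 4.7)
The plan is to instantiate the preceding concentration bound with a value of $\alpha$ that drives the tail probability down to $n^{-3}$, and then translate the resulting two-sided bound on $d(u)$ into the claimed bounds on $\frac{21k^2n\log n}{d(u)}$ by inversion.

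First, I would set $\alpha = 3\sqrt{\log n}$ in the preceding lemma (note: a factor-of-$\sqrt{3}$ adjustment over the hint in the paragraph before the corollary, since $\sqrt{3\log n}$ only gives failure probability $n^{-1}$). This gives $e^{-\alpha^2/3} = n^{-3}$, so with probability at least $1-n^{-3}$,
\[
|d(u) - (np + nq(k-1))| \;\leq\; 3\sqrt{\log n}\cdot\sqrt{np + nq(k-1)}.
\]
Using the elementary bounds $np \leq np + nq(k-1) \leq nkp$ (which follow from $0 < q < p$), the deviation term is at most $3\sqrt{nkp\log n}$. Under the hypotheses of Theorem~\ref{thm:firstmpc}, one verifies that $np = \Omega(k\log n)$ with a sufficiently large hidden constant, so this deviation is a small constant fraction of $np + nq(k-1)$. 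Absorbing this deviation into slightly weaker constants yields
\[
\frac{21\,np}{22}\;\leq\;d(u)\;\leq\;\frac{21\,knp}{20}
\]
with probability at least $1-n^{-3}$.

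Next, I would take reciprocals and multiply both inequalities through by $21\,k^2\,n\log n$. The lower bound on $d(u)$ translates into $\frac{21 k^2 n \log n}{d(u)} \leq \frac{22 k^2 \log n}{p}$, and the upper bound translates into $\frac{21 k^2 n \log n}{d(u)} \geq \frac{20 k \log n}{p}$. Together these give exactly $\frac{21k^2 n\log n}{d(u)} \in \bigl(\frac{20k\log n}{p},\frac{22k^2\log n}{p}\bigr)$, as claimed.

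The main obstacle is the arithmetic bookkeeping in the middle step: verifying that the deviation $3\sqrt{\log n \cdot (np+nq(k-1))}$ is absorbed by the slack between $np + nq(k-1)$ and the endpoints $\frac{21np}{22}$ and $\frac{21knp}{20}$. This reduces to checking $np \geq C k\log n$ for a large enough constant $C$, which follows from the SBM parameter regime $\frac{p-q}{\sqrt{p}} \geq \Omega(k^{3/4} n^{-1/4} \log^{1/4} n)$ assumed in Theorem~\ref{thm:firstmpc} together with $k \leq n$; the remaining steps are purely mechanical inversions.
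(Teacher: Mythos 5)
Your proposal is correct and follows essentially the same approach as the paper: invoke the preceding concentration lemma with a suitable $\alpha$, use $np < np + nq(k-1) < knp$, absorb the $O(\sqrt{\mu\log n})$ deviation into the $21/22$ and $21/20$ slack factors (which requires $np \geq Ck\log n$, guaranteed under the SBM parameter regime being considered), and invert. In fact you have caught and fixed a small slip in the paper's proof: the paper sets $\alpha = \sqrt{3\log n}$, which yields failure probability $e^{-\alpha^2/3} = n^{-1}$, not the claimed $n^{-3}$; your choice $\alpha = 3\sqrt{\log n}$ is what the stated probability actually requires.
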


Next, we show the following lemma, which says counting the number of common neighbors is a good strategy to decide whether two vertices belong to the same cluster.  Let $C(v_i)$ denote the cluster that contains $v_i$ where $i\in[n]$.

\begin{lemma}\label{lem:com-neigh}
Given a graph $G$ that is generated from SBM$(n, p,q,k)$ such that $\frac{p-q}{\sqrt{p}} \geq \frac{6(k+1)^{1/2}(\log n)^{1/4}}{n^{1/4}}$ and $k$ is the number of hidden clusters. With probability at least $1-\frac{1}{n^2}$, for any two nodes $u$ and $v$, if $C(u) = C(v)$, then $|N(u)\cap N(v)| > \Delta $ ; Otherwise, $|N(u)\cap N(v)| < \Delta $. 
% Let $w$ be a node such that $C(w)\not =C(u)$.  If $p - q \leq \frac{1}{\sqrt{n}}$, then $|N(u)\cap N(v)| \leq |N(u)\cap N(w)|$ with constant probability.
\end{lemma}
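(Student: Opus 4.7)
The plan is to (i) compute the two conditional expectations $\mu_s := \mathbb{E}[|N(u)\cap N(v)|\mid C(u)=C(v)]$ and $\mu_d := \mathbb{E}[|N(u)\cap N(v)|\mid C(u)\ne C(v)]$ and show they are well separated, (ii) obtain uniform Chernoff concentration around these expectations across all pairs, and (iii) verify that the data-driven threshold $\Delta$ returned by \textsc{ComputeDEL} lies strictly inside the separation gap. First, for any pair $u,v$, each other vertex $w$ contributes independently to $|N(u)\cap N(v)|$ with probability $p^2$ if $w$ sits in the common cluster of $u,v$, probability $q^2$ if $w$ is in neither of their clusters, and probability $pq$ in the mixed case. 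Counting the vertices of each type gives
\[
\mu_s = (n-2)p^2 + (k-1)nq^2,\qquad \mu_d = 2(n-1)pq + (k-2)nq^2,
\]
so that $\mu_s-\mu_d = (n-2)(p-q)^2 - 2q(p-q) \geq \tfrac{1}{2}n(p-q)^2$ in the regime of the lemma, while $\mu_s\leq knp^2$, hence $\sqrt{\mu_s\log n}\leq p\sqrt{kn\log n}$.

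Next, because each common-neighbor count is a sum of $kn-2$ independent Bernoullis, the multiplicative Chernoff bound gives $\Pr\!\left[\bigl||N(u)\cap N(v)|-\mu\bigr|\geq t\sqrt{\mu}\right]\leq 2e^{-t^2/3}$. Setting $t=\Theta(\sqrt{\log n})$ with a sufficiently large hidden constant and union-bounding over the $\binom{kn}{2}$ pairs in $V$ (as well as the $O(k^2\log^2 n)$ pairs inside $\mathcal{S}_\Delta$) produces a good event of probability at least $1-n^{-2}$ on which $\bigl||N(u)\cap N(v)|-\mu\bigr|\leq C_1\sqrt{\mu\log n}$ holds simultaneously for every pair, for some absolute constant $C_1$.

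Finally, on this good event I would bound $\Delta$. A Chernoff argument on $\mathcal{S}_\Delta$, which consists of $\Theta(k\log n)$ uniform samples, shows that each hidden cluster is hit $\Theta(\log n)$ times with high probability, so $\mathcal{S}_\Delta$ contains many same-cluster pairs and $\Delta' = \max_{u,v\in\mathcal{S}_\Delta}|N(u)\cap N(v)|$ satisfies $\Delta' \in \mu_s \pm C_1\sqrt{\mu_s\log n}$; hence $\Delta = \Delta' - 9\sqrt{\Delta'\log n}$ lies in $\mu_s - (9\pm O(1))\sqrt{\mu_s\log n}$. It remains to compare the per-pair bounds: same-cluster values are at least $\mu_s - C_1\sqrt{\mu_s\log n} > \Delta$, while different-cluster values are at most $\mu_d + C_1\sqrt{\mu_d\log n}$; for the latter to be strictly less than $\Delta$ I invoke the hypothesis, which after squaring and multiplying by $n$ yields $n(p-q)^2 \geq 36(k+1)\,p\sqrt{n\log n}\geq C_2\sqrt{\mu_s\log n}$ with $C_2$ as large as desired, creating enough slack to dominate the $(9+O(1))$ offset separating $\Delta$ from $\mu_s$. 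The main obstacle is the adaptivity of $\Delta$: because $\Delta$ itself depends on the random graph, I cannot directly apply concentration with $\Delta$ on both sides. The cleanest workaround is to first condition on the combinatorial sample $\mathcal{S}_\Delta$ (independent of the edges of $G$) and then invoke the joint Chernoff bound from step~(ii) once, over all pairs in $V$ and all pairs inside $\mathcal{S}_\Delta$ simultaneously; after that, the rest of the argument is elementary bookkeeping of constants.
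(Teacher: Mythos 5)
Your proposal takes essentially the same approach as the paper: compute the two conditional expectations of common-neighbor counts, apply a multiplicative Chernoff bound (with a union bound over all pairs), argue that the data-driven $\Delta'$ from \textsc{ComputeDEL} concentrates near $\mu_s$ so that $\Delta=\Delta'-9\sqrt{\Delta'\log n}$ lands strictly between the same-cluster lower bound and the cross-cluster upper bound, and invoke the hypothesis $\frac{p-q}{\sqrt p}\geq 6(k+1)^{1/2}(\log n)^{1/4}n^{-1/4}$ to secure the separation. Your one genuine refinement is the explicit treatment of the adaptivity of $\Delta$: you note that it depends on the realized graph and propose conditioning on the vertex sample $\mathcal{S}_\Delta$ (which is independent of the edges) and then applying a single joint concentration event covering all pairs, including those inside $\mathcal{S}_\Delta$. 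The paper handles this only implicitly via the union bound over pairs; your phrasing makes the logic cleaner but the underlying argument is the same.
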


\begin{proof}
Consider three nodes $u$, $v$ and $w$ in $V$. Without loss of generality, we assume that $u$ and~$v$ are in the same cluster, and $w$ is in another cluster. Then, we have the following equalities. 
$$\mathbb{E}[|N(u)\cap N(v)|] = p^2(n-2)+q^2n(k-1)$$
$$\mathbb{E}[|N(u)\cap N(w)|] = 2(n-1)pq + q^2n(k-2) $$
$$\mathbb{E}[|N(u)\cap N(v)| - |N(u)\cap N(w)|] = n(p-q)^2 - 2p(p-q)$$

For convenience, let $s_1 = p^2(n-2)+q^2n(k-1)$ and $s_2 = (n-1)pq + q^2n(k-2)$.

% We first look at the necessary condition.
% Let $x = |N(u)\cap N(v)|$, $y = |N(u)\cap N(w)|$. Let $s_1 > x_{min}$,  $s_2 < y_{max}$ with high probability. To make $x - y \geq 1$ strictly with high probability, i.e., $ x_{min} - y_{max} \geq 1$, we have that 
% $s_1 - s_2 \geq 1$. Thus,
% $n(p-q)^2 - 2p(p-q) \geq 1$. So, we get that $p-q > \frac{1}{\sqrt{n}}$.

Next, let us see the sufficient condition.
By Chernoff Bound, with probability at least $1-\frac{1}{n^3}$, $$|N(u)\cap N(v)| \geq t_1=  s_1 - 6\sqrt{s_1\log n }. $$ 
Similarly, we have  $$|N(u)\cap N(w)| \leq t_2 = s_2 + 16\sqrt{s_2\log n }$$ with probability at least $1-\frac{1}{n^3}$.

Recall that by \textsc{ComputeDEL}, we set $\Delta$ to be the maximum value among $\mathcal{V}_{\Delta}$. We claim that~$\Delta \in (s_1-16\sqrt{s_1\log n}, s_1-6\sqrt{s_1\log n})$ with probability at least $1-\frac{1}{n^3}$. From the process in \textsc{ComputeDEL}, it is easy to see that with probability at least $1-\frac{1}{n^3}$,  $\Delta' = |N(u)\cap N(v)|$ for some $u,v \in \mathcal{S}_{\Delta}$ and $u, v$ are in the same cluster. By Chernoff Bound, we have 
$$\Delta' \in (s_1 - 3\sqrt{s_1\log n} , s_1 + 3\sqrt{s_1 \log n})$$
with probability at least $1-\frac{1}{n^3}$. Thus, we prove our claim by $\Delta = \Delta' - 9\sqrt{\Delta'\log n }$. We can see that $t_1 > \Delta$.

Next, we will show that $\Delta >  t_2$, given $\frac{p-q}{\sqrt{p}} \geq 6\frac{(k+1)^{1/2}(\log n)^{1/4}}{n^{1/4}}$. We have
$$\Delta - t_2 > (\sqrt{s_1}-\sqrt{s_2}-16\sqrt{\log n})(\sqrt{s_1}+\sqrt{s_2})$$

Since $\frac{p-q}{\sqrt{p}} \geq 6\frac{(k+1)^{1/2}(\log n)^{1/4}}{n^{1/4}}$, 
$$
 s_1-s_2=n(p-q)^2-2p(p-q)\geq32(k+1)p\sqrt{n\log n}+9\log n
$$

Then, we get that $s_1-s_2 > 32\sqrt{s_1\log n}+9\log n$, so 
$$
\sqrt{s_1}-\sqrt{s_2}-16\sqrt{\log n} > 0
$$

So by union bound, with probability at least $1-\frac{1}{n^2}$, $|N(u)\cap N(v)| > \Delta > |N(u)\cap N(w)|$ holds for any $u, v, w$ where $C(u) = C(v)$ and $C(u) \not = C(w)$. It means that we can cluster different clusters by the number of common neighbors and the threshold is $\Delta$. 
\end{proof}

Therefore, $|N(u)\cap N(v)| > |N(u)\cap N(w)|$ holds with high probability for any $u, v, w$ where $C(u) = C(v)$ and $C(u) \not = C(w)$. It means that we can cluster different clusters by the number of common neighbors and the threshold is $\Delta$. Recall that $L(v,C_i)$ is the number of edges between a node $v$ and a cluster $C_i$. Now we show the following lemma.

\begin{lemma}\label{lem:remaincluster}
Under the same condition as the one in Lemma \ref{lem:com-neigh}, for any vertex $v\in \overline{C}_i$, with probability at least $1-O(\frac{1}{n^2})$, it holds that $L(v,\overline{C}_i) > L(v,\overline{C}_j)$ for any $i$ and $j$ ($i\not =j$).
\end{lemma}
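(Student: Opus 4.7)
By Lemma~\ref{lem:com-neigh} (applied to the sub-clusters produced by \textsc{CompcomNBR}), with high probability there is a permutation $\pi:[k]\to[k]$ with $C_i=V_{\pi(i)}\cap S$; consequently $\overline{C}_i\subseteq V_{\pi(i)}$ for every $i$. Fix any $v\in \overline{C}_i$, so $v\in V_{\pi(i)}$. The edges incident to $v$ are independent Bernoulli variables, with parameter $p$ to the remaining vertices of $V_{\pi(i)}$ and with parameter $q$ to the vertices of $V_{\pi(j)}$ for $j\neq i$. Writing $\bar n=|\overline{C}_i|=20nk^2\log n/d$, the random variables $L(v,\overline{C}_i)$ and $L(v,\overline{C}_j)$ are therefore distributed as $\mathrm{Bin}(\bar n-1,p)$ and $\mathrm{Bin}(\bar n,q)$ respectively, with expected values $(\bar n-1)p$ and $\bar n q$.

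I would then apply the Chernoff bound stated in the preliminaries separately to each of these Binomials. This yields that, with probability at least $1-O(1/n^4)$,
\[
L(v,\overline{C}_i)\;\geq\; p\bar n - c_1\sqrt{p\bar n\log n},\qquad L(v,\overline{C}_j)\;\leq\; q\bar n + c_2\sqrt{p\bar n\log n},
\]
for absolute constants $c_1,c_2$. Thus $L(v,\overline{C}_i)>L(v,\overline{C}_j)$ reduces to the deterministic inequality $(p-q)\bar n>(c_1+c_2)\sqrt{p\bar n\log n}$, equivalently $(p-q)^2\bar n=\Omega(p\log n)$.

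To verify this inequality, I would invoke Corollary~\ref{cor:deg}, which, on an event of probability $\geq 1-O(1/n^3)$, provides $\bar n\geq \Omega(k\log n/p)$ together with the matching upper bound on $d$. Substituting the lower bound on $\bar n$ reduces the requirement to $(p-q)/\sqrt{p}\geq\Omega(\sqrt{p/k})$; combining this with the assumption $(p-q)/\sqrt{p}\geq 6\sqrt{k+1}(\log n)^{1/4}/n^{1/4}$ from Lemma~\ref{lem:com-neigh} and the upper bound on $d$ (which controls $\bar n$ from above and therefore $p$ from below relative to $k$) closes the argument. A union bound over the $n$ choices of $v$ and the at most $k-1\leq n$ choices of $j\neq i$, together with the high-probability events from Lemma~\ref{lem:com-neigh} and Corollary~\ref{cor:deg}, then yields the claimed failure probability $O(1/n^2)$.

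The main obstacle I anticipate is the parameter bookkeeping in the last step: because $\bar n$ and the fluctuation scale $\sqrt{p\bar n\log n}$ both depend non-trivially on $d=\Theta(n(p+(k-1)q))$, one has to use both the upper and lower bounds on $d$ from Corollary~\ref{cor:deg} and split the analysis according to whether $kq\lesssim p$ or $kq\gtrsim p$ so as to translate the inequality $(p-q)/\sqrt{p}\geq\Omega(\sqrt{p/k})$ into the assumed Theorem~\ref{thm:firstmain} condition. Once that algebraic reduction is handled, the rest is standard Chernoff-plus-union-bound, mirroring the style of Lemma~\ref{lem:com-neigh}.
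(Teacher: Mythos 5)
Your proposal follows essentially the same route as the paper's proof: lower-bound $L(v,\overline{C}_i)$ and upper-bound $L(v,\overline{C}_j)$ via Chernoff, bring in $\bar n=|\overline{C}_i|\geq\Omega(k\log n/p)$ from Corollary~\ref{cor:deg}, and then check that the mean gap $(p-q)\bar n$ dominates the combined fluctuations. Where you differ is only in how the fluctuation terms are treated: you coarsen both to $\sqrt{p\bar n\log n}$ and arrive at the requirement $(p-q)^2\bar n = \Omega(p\log n)$, and you flag, correctly, that verifying this from the stated hypothesis is the real content. The paper instead keeps $\sqrt{p\bar n\log n}$ and $\sqrt{q\bar n\log n}$ separately and \emph{factors} the resulting expression, claiming it reduces to the much weaker $\bar n > 9\log n/p$, which it then immediately verifies.

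That appearance is illusory, because the paper's factoring step contains a sign error. Expanding $qY_j\bigl(1+\sqrt{9\log n/(qY_j)}\bigr)$ gives $qY+\sqrt{9qY\log n}$, so the second display line should read $Y\bigl(p-\sqrt{9p\log n/Y}-(q+\sqrt{9q\log n/Y})\bigr)$, not $Y\bigl(p-\sqrt{9p\log n/Y}-(q-\sqrt{9q\log n/Y})\bigr)$. With the sign corrected the factoring is $Y(\sqrt p+\sqrt q)\bigl(\sqrt p-\sqrt q-\sqrt{9\log n/Y}\bigr)$, and positivity requires $\bar n>9\log n/(\sqrt p-\sqrt q)^2$ --- which, since $(p-q)^2=(\sqrt p-\sqrt q)^2(\sqrt p+\sqrt q)^2=\Theta\bigl((\sqrt p-\sqrt q)^2 p\bigr)$, is exactly your condition $(p-q)^2\bar n=\Omega(p\log n)$ up to constants. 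So both arguments face the identical unresolved bookkeeping: with $\bar n\approx 20k^2\log n/(p+(k-1)q)$ one needs roughly $(\sqrt p-\sqrt q)^2 k^2 \gtrsim p+(k-1)q$, and this does \emph{not} follow from the Lemma~\ref{lem:com-neigh} hypothesis $(p-q)/\sqrt p\geq 6\sqrt{k+1}(\log n/n)^{1/4}$ alone (consider $p,q$ constants close to each other, $k=2$, and $n$ large). The ``standard Chernoff plus union bound'' you anticipate being routine is precisely where the actual gap lies, and the paper's own proof does not close it --- it only appears to because of the sign slip.
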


\begin{proof}
Consider any two clusters $\overline{C}_i$ and $\overline{C}_j$. Let $Y_i$ be the number of nodes in $\overline{C}_i$. We have that $Y_i = Y_j= Y = \frac{20k^2n \log n}{d}  >  \frac{9\log n}{p}$ for any $i,j\in [k]$ with probability at least $1-\frac{1}{n^3}$. If $i\not = j$, by Chernoff bound, with probability $1-O(1/n^2)$ we get that  
\begin{align*}
&L(v,\overline{C}_i) - L(v,\overline{C}_j)\\
> &pY_i(1-\frac{\sqrt{9\log n}}{\sqrt{pY_i}}) - qY_j(1+\frac{\sqrt{9\log n}}{\sqrt{qY_j}})\\
= &Y(p - \sqrt{\frac{9p\log n}{Y}} - (q - \sqrt{\frac{9q\log n}{Y}}))\\
= &Y(\sqrt{p}-\sqrt{q})(\sqrt{p}+\sqrt{q}-\sqrt{\frac{9\log n}{Y}})\\
>& Y(\sqrt{p}-\sqrt{q})\sqrt{q}
\end{align*}

Then, $L(v,\overline{C}_i) - L(v,\overline{C}_j) > 0$ with probability $1-O(1/n^2)$.
\end{proof}

\begin{proof}[\textbf{Proof of Theorem \ref{thm:firstmain}}]
The number of sampling nodes is $|S| \geq \frac{10 k\log n}{p}>\Omega(k\log n)$, so each cluster will have at least $\frac{9\log n}{p}$ nodes with probability $1-O(1/n^3)$. By Lemma \ref{lem:com-neigh}, we can cluster nodes via comparing their common neighbors. In Algorithm~\ref{alg:optsimple}, we obtain $k$ clusters by comparing common neighbors sequentially which takes at most $n|S|$ time. Therefore, this step takes $O(k^2n\log n/p)$ time.

     By Lemma~\ref{lem:remaincluster}, we can cluster the remaining nodes correctly with probability $1-O(1/n)$ by taking the union bound of at most $n$ nodes. For each node $v\in V\setminus S$, we count the number of connected edges between $v$ and each cluster $\overline{C}_i$ where $i\in[k]$. For each such operation, we need $|S|$ time. Then, the total time of clustering the remaining nodes is $(n-|S|)|S|\leq O(\frac{k^2n\log n}{p})$. Putting all together, we prove that Algorithm~\ref{alg:optsimple} outputs $k$ clusters with probability $1-O(1/n)$ and the time complexity of Algorithm~\ref{alg:optsimple} is $O(\frac{k^2n\log n}{p})$.
\end{proof}

\subsection{Implementation in the $s$-space MPC model}\label{subsec:m-mpc-1}

Now we describe our MPC algorithm \textsc{MPC-CommNBR}, which is an implementation of \textsc{CommNBR},  
where the local memory is $s = \Omega(\log n)$, and prove Theorem \ref{thm:firstmpc}. 

Recall that in  \textsc{CommNBR}, there are two major steps. In the first step, we need to find $k$ clusters from a set $S$ of randomly sampled nodes. In the second step, based on the clustering on $S$, we cluster all nodes in~$V$. The major challenge lies in the simulation (in MPC model) of the first step which is to compare common neighbors between two nodes $u$ and~$v$. It is easy to see that computing~$|N(u)\cap N(v)|$ for $u,v\in V$ is exactly the task of finding common elements in two sets. For convenience, we use a set $\mathsf{S}_u$ to denote~$N(u)$ for a node $u\in V$. Then, we need to find the common elements between $\mathsf{S}_u$ and $\mathsf{S}_v$ by a method $\textsc{comm}(\mathsf{S}_u, \mathsf{S}_v)$. Note that we need to execute $\textsc{comm}(\mathsf{S}_u, \mathsf{S}_v)$ for different $u$ and $v$ for many times. 
Therefore, to compute $|N(u)\cap N(v)|$ for different $u,v \in V$ efficiently, we need to solve two problems. The first one is to implement $\textsc{comm}(\mathsf{S}_u, \mathsf{S}_v)$ efficiently in MPC model where each machine's memory is ~$s=\Omega(\log n)$. The second one is to execute the first one in parallel.  For the first one, we use a simple method to implement $\textsc{comm}(\mathsf{S}_u, \mathsf{S}_v)$ in MPC model. Let~$V'$ be the set of nodes in which for each $u\in V'$, $\textsf{S}_u$ will be compared. We make each byte of~$\textsf{S}_u(u\in V')$ aligned. Then, we can directly compute $|\textsf{S}_u \cap \textsf{S}_v|$. For the second one, we solve it by copying sets for enough times and then we let machines storing these copied sets execute the same algorithm in parallel. We use the MPC implementations of the basic graph operations in Section \ref{sec:mbasicmpc} to implement our clustering algorithm here.

\begin{algorithm}
\caption{\textsc{MPC-CommNBR}: An MPC algorithm based on counting common neighbors}
   \label{alg:mpc-1}
   \KwInput{ A SBM graph $G$;}
\begin{algorithmic}[1]
   \STATE{Let $d = |N(u)|$ where $u$ is an arbitrary node in $V$}
   \STATE {Apply \textsc{RandomSet} to obtain random node set $S'$ and $S$, where $|S'| = \Theta(k\log n)$ and $|S| = (21k^2n\log n)/{d}$}
   \STATE{Update $S'$  and obtain $\Delta$ by \textsc{ComputeREP}($S'$)}
     \STATE{Obtain $k$ sub-clusters  $S_1,\ldots,S_k$ by \textsc{ComputeSubcluster}($S,S', \Delta$)}
   \STATE{Obtain $k$ clusters of $V$ by \textsc{ComputeCluster($S_1,\ldots,S_k, V$)}}.
\end{algorithmic}
\end{algorithm}
%\vspace{-1em}

%Now, we describe \textsc{ComputeREP}($S'$) as follows.
\begin{algorithm}[h]\caption{\textsc{ComputeRep}: Compute representative for each cluster by common neighbors and $\Delta$}
\KwInput{Random vertex set $S'$;}\label{compR}
\begin{algorithmic}[1]
    %\STATE{\textsc{MPC-CommNBRStep}($S'$)}
    %\STATE{Obtain $\Delta$ by simulating $\textsc{ComputeDEL}$}
    %\STATE{Obtain $k$ nodes with representative labels by $\textsc{RepresentativeK}(S')$}
    %\RETURN{$S'$ and $\Delta$}
    % \STATE{$\Delta':=0$}
    % \FOR{each pair of vertices $u,v\in S'$}
    % \STATE{$\Delta'=\max(\Delta',|N(u)\cap N(v)|)$}
    % \ENDFOR
    % \STATE{$\Delta = \Delta' - 9\sqrt{\Delta'\log n }$}
    % \STATE{$S'_1,S'_2,\cdots,S'_k:=\emptyset$}
    % \FOR{each pair of vertices $u,v\in S'$}
    % \IF{$|N(u)\cap N(v)|>\Delta$}
    % \STATE{Put $u,v$ into the same cluster $S'_i$}
    % \ENDIF
    % \ENDFOR
    % \STATE{$S'=\emptyset$}
    % \FOR{each $S'_i$}
    % \STATE{Add any vertex of $S'_i$ to $S'$ as the representative of $S'_i$}
    % \ENDFOR

    \STATE Run \textsc{CompareINIT}($S'$);
    \STATE Obtain $k$ nodes with $k$ representative lables by $\textsc{RepresentativeK}(S')$\;
    \STATE Update $S'$ by only keeping $k$ nodes obtained from Step 2.
    \RETURN{$S'$ and $\Delta$}
\end{algorithmic}
\end{algorithm}

For the algorithm \textsc{CompareINIT}($S'$), we describe it as follows.
\begin{center}
\fbox{
\begin{minipage}{\linewidth-2em}
%\noindent\textbf{{ComputeREP}($S'$)}:
%\begin{itemize}
%    \item [1)] Execute RCC-Step $(S')$.
%    \item [2)] Sample a set $D$ of $\Theta(9k\log n)$ nodes uniformly and independently at random.
%    \item [3)] Obtain $\Delta$ by simulating $\textsc{ComputeDEL}$ from $D$ in the $s$-space MPC model \footnotetext{The process of simulating $\textsc{ComputeDEL}$ is first using RCC-step($D$) and then calculate $\Delta$.}.
%    \item [4)] Obtain $k$ nodes with representative labels by $\textsc{RepresentativeK}(S')$.
%    \item [5)] Output the updated $S'$ by only keeping $k$ nodes obtained from (4), and $\Delta$.
%\end{itemize}

\noindent\textsc{CompareINIT($S'$)}:
\begin{itemize}
\item [(I)] Reorganize neighbors of nodes in $S'$ by \textsc{ReorganizeNBR($S'$)}.
\item [(II)] Execute \textsc{CopyNBR($S'$)} to create~$|S'|$ copies of~$N(u)$ for each $u\in S'$.
\item [(III)] Based on copies of~$N(u)$ from (II), we directly compute ~$|N(u)\cap N(v)|$ in parallel where $u,v \in S'$. 
\item [(IV)] Execute \textsc{CompareGRP} to obtain final results by summing all partial results obtained from (III). 
\item [(V)] Compute $\Delta'$, i.e., the maximum  value of  $|N(u)\cap N(v)|$ for all pairs of $u,v \in S'$ and output $\Delta = \Delta' - 9\sqrt{\Delta'\log n}$.
\end{itemize}
\end{minipage}

}
\end{center}

In \textsc{ComputeSubcluster}($S',S, \Delta$), the process is similar to  \textsc{ComputeREP}($S'$). The major difference is that we only copy $N(u)$ for each $u \in S$ for 
$k$ times and copy $N(v)$ for each $v\in S'$ for $|S|$ times. By computing~$|N(u)\cap N(v)|$ where $u\in \textsc{copy-}S$, $v\in \textsc{copy-}S'$, and $\textsc{copy-}S$, $\textsc{copy-}S'$ are  the copies of $S$ and $S'$, we can obtain $k$ sub-clusters of $S$. The details of computing can refer to \textsc{ComputeREP}($S'$). 

In \textsc{ComputeCluster}($S_1,\ldots,S_k, V$), we first apply \textsc{EvenCluster}($S$) to output $k$ sub-clusters from $S$ such that each cluster has the same number of nodes. Then, we use \textsc{ComputeCut}($S, V$) to cluster $V$.

Next, we show the details of $\textsc{CompareGRP}$ used in the  procedure of \textsc{ComputeREP}($S'$). In the~$\textsc{CompareGRP}$ problem, the input is a set of groups of machines and the goal is to output the results by comparing groups of machines. Take two groups $A$, $B$ of machines as an example. Our goal is to output the common elements by comparing $A$ and $B$. We say that group $A$ compares to group $B$ which means that the $i$-th member machine of the group $A$ will compare to the $i$-th member machine of group $B$ ($i\in[n/s]$).

\begin{restatable}{lemma}{compmain}\label{lem: comp-main}
Given a set of consecutive groups each of which has $n/s$ member machines, there exists an algorithm that takes $O(\log_s n)$ rounds to obtain the results of comparing data between groups correspondingly. 
\end{restatable}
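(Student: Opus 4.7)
My plan is to reduce \textsc{CompareGRP} to a parallel reduction problem and solve it with an $s$-ary tournament tree. By the time \textsc{CompareGRP} is invoked (step (IV) of \textsc{CompareINIT}), each of the $n/s$ machines in a group holds a chunk of aligned bit-strings from two neighborhoods, so the $i$-th member of group $A$ and the $i$-th member of group $B$ can each locally produce a partial intersection count from their chunks in a single round. What remains is to aggregate the $n/s$ partial counts per group-pair into one total, in parallel across every group-pair being compared.

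My first step would be to label every machine by the tuple (group-pair identifier, position within group) derived from the deterministic layout produced by \textsc{ReorganizeNBR} and \textsc{CopyNBR}; this labelling is computable in $O(\log_s n)$ rounds via the \textsc{Index} primitive. Next I would run the aggregation: at each level, group the $n/s$ positions of a given group-pair into batches of $s$ consecutive machines, designate one machine per batch as the aggregator, and have the other $s-1$ machines send their current values to it. Since sum is separable and each machine has memory $s=\Omega(\log n)$, an aggregator can receive $s$ words and replace them with their sum in one round. After each round the number of live positions per group-pair shrinks by a factor of $s$, so after $O(\log_s(n/s)) = O(\log_s n)$ rounds exactly one aggregator per group-pair holds the final comparison value.

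Running this in parallel across all group-pair comparisons is safe because the routing is fully determined by the (group-pair, position) labels, so the destinations of messages from distinct group-pairs land on disjoint aggregator machines and no congestion arises. The only nontrivial bookkeeping is to reserve, inside the layout, a contiguous block of aggregator slots per group-pair that is distinct from the blocks of the neighborhoods themselves; by Lemma~\ref{lem: copy-main} the parameter $t$ in the preceding \textsc{CopyNBR} call is chosen so that $n|S|t \leq \tilde{O}(m)$, which leaves more than enough slack to allocate $O(n/s)$ aggregator slots per pair without overflowing the total-space budget.

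The main obstacle I expect is verifying this addressing step: I need to argue that the aggregator picked at each level of the tree can be located in the existing layout without collisions across distinct group-pairs and without requiring any global communication. Once that is in place, correctness follows from the separability of summation, and the $O(\log_s n)$ round bound follows because the depth of an $s$-ary reduction tree on $n/s$ leaves is $O(\log_s n)$, matching the stated complexity.
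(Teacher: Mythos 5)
Your proposal is correct and follows essentially the same approach as the paper: corresponding machines exchange aligned chunks and compute a partial intersection count, and then these $n/s$ partial sums per group-pair are aggregated in $O(\log_s n)$ rounds — the paper simply calls this step ``the converge-cast method,'' which is exactly the $s$-ary reduction tree you spell out, with the \textsc{Index}-based labelling as the implementation detail the paper leaves implicit.
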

\begin{proof}
 To compare data between groups, we let each machine send the whole data to the target machine. Then the target machine receives the data and has a partial result. The next step is to accumulate all these partial results. Using the converge-cast method, we finish this step within $O(\log_s n)$ rounds. 
\end{proof}

Now we are ready to prove Theorem~\ref{thm:firstmpc}.
\begin{proof}[\textbf{Proof of Theorem~\ref{thm:firstmpc}}]
The correctness of obtaining $k$ clusters based on counting common neighbors can be seen in Theorem~\ref{thm:firstmain}. By Lemma~\ref{lem:randomset}, we can create randomly sampled sets $S$ and $S'$ such that each machine $M$ knows indexes of nodes in $M\cap S$ and $M\cap S'$ within $O(\log_s n)$ rounds. 

Next, we first prove that by $\textsc{ComputeREP}(S')$, we can obtain $k$ sub-clusters within $O(\log_s n)$ rounds. By Lemma~\ref{lem: org-main}, it takes $O(\log_s n)$ rounds for $\textsc{ReorganizeNBR}(S')$. By Lemma~\ref{lem: copy-main}, we use~$O(\log_s n)$ rounds to finish $\textsc{CopyNBR}(S')$ for each $N(u)$ where $u\in S'$. By Lemma~\ref{lem: comp-main}, it takes~$O(\log_s n)$ to first get partial results and then obtain the complete results of~$|N(u)\cap N(v)|$ where~$u, v \in S'$. We can use $O(\log_s n)$ rounds to obtain $\Delta$ by simulating $\textsc{ComputeDEL}$ within~$O(\log_s n)$ rounds. Then by Lemma~\ref{lem: comp-main} and Lemma~\ref{lem:unik}, we can obtain $k$ sub-clusters from $S'$ in~$O(\log_s n)$ rounds in the MPC model and the total space is $\tilde O(m)$. Similarly, by $\textsc{ComputeSubcluster}(S', S, \Delta)$, we can prove that within $O(\log_s n)$ rounds, we can obtain $k$ clusters from $S$ in the MPC model and the total space used is $\tilde O(m)$.

Now, let us see the last step of obtaining $k$ clusters of $V$. By Lemma~\ref{lem: evencluster} and setting $|S^*|= \frac{20k^2n\log n}{d}$, we can output $S^* \subseteq S$ such that for any two labels $i, j \in [k]$, we have $\textsf{N}_{S^*}(i) = \textsf{N}_{S^*}(j)$ in~$O(\log_s n)$ rounds, where~$\textsf{N}_{S^*}(i)$ is the number of nodes in ${S^*}$ with label~$i$.  Finally, by Lemma \ref{lem:cmpcut}, we can decide all labels of $V$ within $O(\log_s n)$ rounds with high probability. The total space used in MPC model is $\tilde O(m)$. Thus, our proof is completed. 
\end{proof}

\section{The Algorithm Based on Power Iterations}\label{sec:msecond-mpc-alg}

%In Section~\ref{sec:mfirst-mpc-alg}, we propose an efficient MPC algorithm for a dense SBM graph. 
In this section, we give another MPC algorithm for a general SBM graph in the $s$-space model and prove Theorem \ref{thm:secondmpc}. 
Also, we first carefully design a sequential algorithm and then we implement it in the $s$-space MPC model. Our second sequential algorithm is based on power iterations which perform well in a recent algorithm in \cite{MZ22SBM}. The algorithm makes use of the adjacency matrix $A$ of the graph $G$, from which we define a matrix $B=A-q\cdot J$, where $J$ is the $n\times n$ all-$1$ matrix. Then it decides if two vertices $u,v$ are in the same cluster or not by checking the $\ell_2$-norm of the difference between $B_u^r$ and $B_v^r$, which are the rows corresponding to vertices $u,v$, respectively, in the matrix $B^r$ (the $r$-th power of matrix $B$).

We note that the algorithm in \cite{MZ22SBM} only considers the special case that $r=\log n$. Here, our sequential algorithm considers all possible $r\in \{1,\dots,O(\log n)\}$ and for each $r$ we choose a different threshold $\Delta$, which 
depends on the value of $p, q$ and $k$. To implement our sequential algorithm in the MPC model, we divide the process of matrix computation into different components each of which can be implemented efficiently in the $s$-space model.

\subsection{A Sequential Algorithm Based on Power Iterations}

In this section, for any matrix $M$, we use $M_i$ to denote the $i$-th row of $M$. Then, we let $\lVert M \rVert$ denote the $l_2$-norm of the matrix; $\lVert M_i \rVert$ denote the $l_2$-norm of the $i$-th row of $M$; $\lVert M \rVert_{\row}$ denote the maximum of the $l_2$-norm of the rows of $M$.

We now describe Algorithm \textsc{PowerIteration}. Let $A$ be an adjacent matrix of the input graph $G$ and $r$ where $r$ is a parameter. Let $\Delta = C\sqrt{k}\sqrt{p(1-q)}(\log{kn})^7(p-q)^{r-1}n^{r-1}$, where $C>0$ is some universal constant. We set $B = A - q\cdot J$ where $J = 1^{n\times n}$. Let $W = V$. We choose an arbitrary vertex $v$ in $W$ and put $v$ into a sub-cluster $C_i$ where $i\in[k]$. For each node $u$ in $W$, if $\lVert B_u^r - B_v^r\rVert \leq \Delta$, then we add $u$ to $C_i$. Next, we remove $C_i$ from $W$. Repeat the above process until $W$ is empty. Then we return all the clusters $C_i$'s. 
%For the sake of completeness, we give the pseudocode of \textsc{SeqCluster2} in Appendix E.

\begin{algorithm}[h]
   \caption{\textsc{PowerIteration}($G,n,p,q,k$): Detecting Clusters via $r$ Power Iterations}
   \label{alg:general}
   \KwInput{ A SBM graph $G$, $\Delta$}
\begin{algorithmic}[1]
   \STATE Let $A$ be an adjacent matrix of $G$
   \STATE Let $r$ be the parameter
   \STATE Let $\Delta = C\sqrt{k}\sqrt{p(1-q)}(\log{kn})^7(p-q)^{r-1}n^{r-1}$, where $C>0$ is some universal constant% ($C_2$ is defined in Lemma~\ref{lem:RtL})
   \STATE $B = A - q\cdot J$ where $J = 1^{n\times n}$
   \STATE{let $i=1$ and $W=V$}
\WHILE{$W$ is not empty}
\STATE{choose an arbitrary vertex $v\in W$}
\STATE{let $C_i=\{v\}$}
\FOR{each $u\in W$%\pan{define $B_u^r$}
}
   \IF{$\lVert B_u^r - B_v^r\rVert \leq \Delta$}
   \STATE{add $u$ to $C_i$}
   \ENDIF
   \ENDFOR
\STATE{$W\gets W\setminus C_i$}
\STATE{$i=i+1$}
\ENDWHILE
\STATE Return all the sub-clusters $C_i$'s.
\end{algorithmic}
\end{algorithm}

\begin{restatable}{theorem}{rpower}\label{them:r-power}
Let $p,q \leq 0.75$ be parameters such that $\max\{p(1-p),q(1-q)\} \geq C_0 \log n/n$ where $C_0>0$ is some constant. Suppose that $\frac{p-q}{\sqrt{p}} \geq (C_0^{2}+1) k^{\frac{1}{2}}n^{-\frac{1}{2}+\frac{1}{2(r-1)}}(\log kn)^7$. Let $G$ be a random graph generated from SBM($n,p,q,k$) and $r\in[3, O(\log n)]$, then with probability at least $1- O(n^{-1})$ the algorithm \textsc{PowerIteration} can output $k$ hidden clusters for suitable $\Delta$ and $r$.
\end{restatable}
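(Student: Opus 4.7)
The plan is to decompose $B = X + E$ where $X := \mathbb{E}[B]$ and $E := B - X$, analyse the deterministic ``signal'' carried by $X^r$, and then show with high probability that the stochastic perturbation $B^r - X^r$ is too small row-wise to erase this signal. First I would observe that $X = (p-q)\sum_{i=1}^{k}\mathbf{1}_{V_i}\mathbf{1}_{V_i}^T$, so the blocks are mutually orthogonal and $X^r = n^{r-1}(p-q)^{r-1}X$. Consequently $X^r_u = (p-q)^r n^{r-1}\mathbf{1}_{V(u)}$, which gives $\|X^r_u - X^r_v\| = 0$ when $u,v$ lie in the same cluster and $\|X^r_u - X^r_v\| = \sqrt{2n}\,(p-q)^{r}n^{r-1}$ otherwise. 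Under the assumed gap condition this separation is already a constant multiple of $\Delta$, so it suffices to upper-bound $\|(B^r-X^r)_u\|$ by a small fraction of $\Delta$ for every $u$ simultaneously.

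Next I would expand
\[
B^r - X^r \;=\; \sum_{w\in\{X,E\}^r \setminus \{X^r\}} w_1 w_2 \cdots w_r,
\]
and group the $2^r-1$ terms by the number $\ell \geq 1$ of $E$-factors they contain. For the ``leading'' terms with exactly one $E$, written as $X^{a}EX^{b}$ with $a+b=r-1$, I would compute the $u$-th row as $(p-q)^{a}n^{a-1}\,\mathbf{1}_{V(u)}^{T}E X^{b}$, use $\|X^b\|\leq n^b(p-q)^b$, and apply a Bernstein/Chernoff bound to the cluster-sum row-vector $\mathbf{1}_{V(u)}^{T}E$ to conclude that with probability at least $1-n^{-c}$ each coordinate of $\mathbf{1}_{V(u)}^{T}E$ has absolute value $O(\sqrt{np(1-q)\log n})$. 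This gives $\|(X^{a}EX^{b})_u\| = O((p-q)^{r-1}n^{r-1}\sqrt{p(1-q)\log n})$, which drives the choice of $\Delta$ (after inflating the $\log$ factor to absorb all lower-order error terms and a union bound over pairs).

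For the higher-order terms, which contain $\ell\geq 2$ factors of $E$, I would combine the spectral bound $\|E\|\leq O(\sqrt{np(1-q)}\cdot\mathrm{polylog}(n))$ (a standard concentration result for Bernoulli random matrices under the condition $\max\{p(1-p),q(1-q)\}\geq C_0\log n/n$) with the row-wise bound on $\mathbf{1}_{V(u)}^{T}E$ derived above. For a mixed word $w_1\cdots w_r$, the $u$-th row norm is at most $\|w_1\|_{\mathrm{row}}\cdot\prod_{i\geq 2}\|w_i\|_{\mathrm{op}}$ when $w_1=X$, and at most $\|E\|_{\mathrm{row},u}\cdot\prod_{i\geq 2}\|w_i\|_{\mathrm{op}}$ when $w_1=E$. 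Plugging in $\|X\|_{\mathrm{op}}=n(p-q)$ and $\|E\|_{\mathrm{op}}=\tilde O(\sqrt{np})$, each additional $E$ factor replaces a factor $n(p-q)$ by $\tilde O(\sqrt{np})$, i.e.\ shrinks the term by a factor of $\tilde O\bigl(\sqrt{p/n}/(p-q)\bigr)$. Under the assumed hypothesis $\tfrac{p-q}{\sqrt{p}} \geq \tilde\Omega(k^{1/2}n^{-1/2+1/(2(r-1))})$, this shrinking factor is strictly smaller than one by a polynomial gap, so summing a geometric series over all $\ell\geq 2$ and over all $\binom{r}{\ell}\leq 2^r \leq \mathrm{poly}(n)$ placements contributes only a lower-order correction to the $\ell=1$ bound.

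Finally, a union bound over the $\binom{kn}{2}$ vertex pairs and the cluster-sampling step shows that for the stated $\Delta$, simultaneously for all $u,v$ we have $\|B^r_u - B^r_v\| \leq \Delta$ iff $u,v$ are in the same cluster, so Algorithm~\ref{alg:general} correctly outputs the $k$ hidden partitions. The main obstacle I anticipate is the bookkeeping for the mixed $X/E$ words: the naive spectral-norm bound loses a factor of $\sqrt{n}$ per row and is not tight enough for small $r$, so one must argue that the first factor of each word can always be treated via the sharper row-wise concentration of $\mathbf{1}_{V(u)}^{T}E$ (or, equivalently, that the worst rows concentrate even though $\|E\|_{\mathrm{op}}$ only gives an average bound). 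Carrying this argument uniformly across all $r=O(\log n)$ while keeping the $(\log kn)^7$ factor intact is the technically delicate step.
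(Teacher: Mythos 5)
Your proposal and the paper's proof share the same high-level strategy (decompose $B$ into its expectation $L=\mathbb{E}[B]$ plus a mean-zero noise matrix $R$, show the signal in $L^r$ separates clusters, and control the perturbation $B^r-L^r$ row-wise), but the two decompositions are genuinely different, and the paper mostly delegates the hard estimates to citations from~\cite{MZ22SBM} rather than re-deriving them. The paper uses the \emph{telescoping} decomposition $B^r = L^r + M + M' + R^r$ with $M = L^{r-1}R + L^{r-2}RB + \cdots + LRB^{r-2}$ and $M' = RLB^{r-2} + R^2LB^{r-3} + \cdots + R^{r-1}L$, which has only $O(r)$ summands because each term bundles a full $B^j$ block; it then invokes Lemma~\ref{lem:Lij} for the signal bound on $\lVert L^r_i - L^r_j\rVert$ and Lemma~\ref{cor:maintool} for the row-norm bounds on $M$, $M'$, $R^r$, and plugs these into the triangle inequality. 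You instead expand all $2^r$ words in $\{X,E\}^r$, treat the $\ell=1$ words by an explicit row-wise Bernstein bound on $\mathbf{1}_{V(u)}^T E$, and absorb all $\ell\geq 2$ words via a geometric series using $\lVert E\rVert_{\mathrm{op}}$. Both routes lead to the same threshold $\Delta$, and your signal computation $X^r = n^{r-1}(p-q)^{r-1}X$ with separation $\sqrt{2n}(p-q)^r n^{r-1}$ matches Lemma~\ref{lem:Lij}. What the paper's telescoping buys is exactly the bookkeeping you flag as delicate: it avoids having to sum $\binom{r}{\ell}$ placements against a geometric ratio $\tilde O\bigl(\sqrt{p/n}/(p-q)\bigr) \leq \tilde O\bigl(k^{-1/2}n^{-1/(2(r-1))}\bigr)$, which for $r = \Theta(\log n)$ and constant $k$ is only a polylogarithmic factor below one, so the ``lower-order correction'' claim needs a careful constant-tracking argument that the $O(r)$-term decomposition sidesteps. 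Two small bookkeeping slips to fix if you pursue your route: your stated $\ell=1$ row bound drops a $\sqrt{k}$ (since $\mathbf{1}_{V(u)}^T E$ has $kn$ coordinates, its $\ell_2$-norm picks up $\sqrt{kn}$), and the formula $(X^a)_u = (p-q)^a n^{a-1}\mathbf{1}_{V(u)}^T$ fails at $a=0$, so words beginning with $E$ must uniformly use the $\lVert E_u\rVert\cdot\prod_i\lVert w_i\rVert_{\mathrm{op}}$ bound.
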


%\textbf{Proof of Theorem~\ref{them:r-power}}
The proof of Theorem~\ref{them:r-power} is built upon the work \cite{MZ22SBM}. We give a proof sketch here by making use of some of their subroutines. %first briefly introduce their techniques to analyze $\lVert B^r_i - B^r_i\rVert$. 
Recall that $A$ is the adjacency matrix of the graph $G$ obtained from $SBM(n,p,q,k)$ and $B = A - q\cdot 1^{n\times n}$. Following the notations in \cite{MZ22SBM}, we  decompose $B$ as $B = L + R$ such that $L = \mathbb{E}[B]$ is the``expectation part" of $B$, and $R$ is the ``random noise part". Note that the matrix $R$ is a symmetric matrix where each entry has mean $0$ and is independent of each other. 

Let $M = L^{-1}R + L^{r-1}RB + \cdots + LRB^{r-2}$ and $M' = RLB^{r-2}+R^2LB^{r-3}+ \cdots R^{r-1}L$. As noted in \cite{MZ22SBM}, it holds that 
\begin{equation}\label{eq1}
	B^r = (L+R)^r = L^r + M+ M' + R^r
\end{equation}

%, we can bound the contribution caused by $R$ in $B^r$ by . 
%Let $B = L + R$, where $L = \mathbb{E}[B]$. 
We use $i\sim j$ to denote that $i$ and $j$ are in the same cluster and use $i\nsim j$ to denote that they are not in the same cluster. In our setting, each cluster has the same size $n$. %\pan{change to $\Omega(n/k)$?}
The following results were shown in \cite{MZ22SBM}.
\begin{lemma}[\cite{MZ22SBM}]\label{lem:Lij}
Let $v_i\in V_1$. For every $i\sim j$, $\lVert L^r_i - L^r_j \rVert=0$. Otherwise, if $i\nsim j$, then $\lVert L^r_i - L^r_j\rVert \geq 2\Delta$.
\end{lemma}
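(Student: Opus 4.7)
The plan is to verify the lemma by computing $L^r$ in closed form using the block structure of $L$, and then reading off the two cases row by row. The first step is to write $L = (p-q)\Pi$, where $\Pi$ is the $kn \times kn$ cluster-indicator matrix that is block-diagonal with $k$ blocks each equal to the $n \times n$ all-ones matrix $J_n$; that is, $\Pi_{ij}=1$ if $i \sim j$ and $\Pi_{ij}=0$ otherwise. Any diagonal correction arising from the literal $\mathbb{E}[B]$ (which would contribute a $-pI$ term because $A_{ii}=0$) is absorbed into the noise matrix $R$, following the convention of \cite{MZ22SBM}; this is the right thing to do because we want $L$ to be exactly constant on each cluster, which is what makes the first half of the lemma an equality rather than an approximation.

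The second step is the key algebraic identity $\Pi^2 = n\Pi$, which follows block-wise from $J_n^2 = n J_n$. By an easy induction, $\Pi^r = n^{r-1} \Pi$ for every $r \geq 1$, and therefore
\[
L^r \;=\; (p-q)^r\, n^{r-1}\, \Pi.
\]
Thus each row $L^r_i$ is the scalar $(p-q)^r n^{r-1}$ times the $\{0,1\}$-indicator of the cluster containing $i$. From this closed form the two cases are immediate. If $i \sim j$, the cluster indicators coincide, giving $L^r_i - L^r_j = 0$. If $i \nsim j$, the indicators are supported on disjoint sets, each of size $n$, and a direct norm computation yields
\[
\|L^r_i - L^r_j\|^2 \;=\; 2n \cdot (p-q)^{2r}\, n^{2(r-1)} \;=\; 2\,(p-q)^{2r}\, n^{2r-1},
\]
so $\|L^r_i - L^r_j\| = \sqrt{2}\,(p-q)^r\, n^{r-1/2}$.

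The remaining (and only) work is to check that this quantity is at least $2\Delta = 2C\sqrt{k\,p(1-q)}\,(\log kn)^7 (p-q)^{r-1} n^{r-1}$. Cancelling the factor $(p-q)^{r-1} n^{r-1}$, the inequality reduces to $(p-q)\sqrt{n} \geq \sqrt{2}\,C\sqrt{k\,p(1-q)}\,(\log kn)^7$, i.e.\ $\frac{p-q}{\sqrt{p}} \geq \Omega\!\left(\sqrt{k/n}\,(\log kn)^7\right)$, which is implied by the standing hypothesis $\frac{p-q}{\sqrt{p}} \geq (C_0^2+1)\,k^{1/2} n^{-1/2 + 1/(2(r-1))} (\log kn)^7$ (the extra factor $n^{1/(2(r-1))}$ gives room to spare, and $\sqrt{1-q} = O(1)$ since $q \leq 0.75$). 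I do not expect a serious obstacle; the one place where one must be careful is precisely the bookkeeping of the diagonal correction, since using $L = (p-q)\Pi - pI$ instead would introduce a spurious $(-p)^r(e_i - e_j)$ in the $i \sim j$ case of norm $\sqrt{2}\,p^r$. That term is negligible compared to $\Delta$ in our parameter regime, but to match the clean ``$=0$'' statement one must adopt the convention above and push $-pI$ into $R$, whose spectral norm bound in \cite{MZ22SBM} easily absorbs the change.
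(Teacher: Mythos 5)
The paper does not prove this lemma; it is imported verbatim from \cite{MZ22SBM} (``The following results were shown in \cite{MZ22SBM}''), so there is no in-paper argument to compare against. Your direct computation is the natural way to verify it and is essentially correct: writing $L=(p-q)\Pi$ with $\Pi$ the block-diagonal cluster indicator, the identity $\Pi^r=n^{r-1}\Pi$ gives $L^r=(p-q)^r n^{r-1}\Pi$, the $i\sim j$ case vanishes exactly, and the $i\nsim j$ case gives $\lVert L^r_i-L^r_j\rVert=\sqrt{2}\,(p-q)^r n^{r-1/2}$, which exceeds $2\Delta$ under the standing hypothesis once the universal constant $C$ in the definition of $\Delta$ is taken small enough. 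You were also right to flag the diagonal convention as the one genuine subtlety: the paper's sentence ``$L=\mathbb{E}[B]$'' taken literally gives $L=(p-q)\Pi-pI$, which would leave a residual $(-p)^r(e_i-e_j)$ in the $i\sim j$ case and make the lemma's ``$=0$'' false as stated; the cited lemma implicitly uses the cleaner convention $L=(p-q)\Pi$ (equivalently, SBM with self-loops, or pushing $-pI$ into $R$), and since $\lVert pI\rVert=p\leq 0.75$ is negligible against the bound $\lVert R^r\rVert_{\row}\leq (C_0\sqrt{p(1-q)}\sqrt{kn})^r$ of Lemma~\ref{cor:maintool}, this relabeling is harmless for the downstream argument in Theorem~\ref{them:r-power}. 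In short: a correct self-contained verification of a black-box citation, with the only delicate bookkeeping point correctly identified and correctly resolved.
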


By setting $s^*=n$ in \cite{MZ22SBM}, we have the following lemma. 
\begin{lemma}[%\textbf{modify}  Corollary 3.6 in 
\cite{MZ22SBM}]
\label{cor:maintool}
With probability at least $1- O(n^{-1})$, we have 
\begin{itemize}
    \item $\lVert M \rVert_{row} \leq r(192\sqrt{p(1-q)}\sqrt{kn}\log {kn}) \cdot \sqrt{n}\cdot (p-q)^{r-1}n^{n-2}$
    \item $\lVert M' \rVert_{row} \leq 2C_2\sqrt{p(1-q)}(\log{kn})^7 \sqrt{kn}\sqrt{n}(p-q)^{r-1}n^{r-2}$
    \item $\lVert R^r \rVert_{row} \leq (C_0 \sqrt{p(1-q)}\sqrt{kn})^r$
\end{itemize}
\end{lemma}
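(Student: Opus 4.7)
The plan is to specialize the row-norm bounds from \cite{MZ22SBM} to the balanced-cluster case $s^{*}=n$ and organize the argument around the structural identity $L=(p-q)\sum_{j=1}^{k}\mathbf{1}_{V_{j}}\mathbf{1}_{V_{j}}^{\top}$, which immediately yields the deterministic quantities $\|L\|_{\mathrm{op}}=(p-q)n$ and $\|L\|_{\row}=(p-q)\sqrt{n-1}\leq(p-q)\sqrt{n}$. To kick things off I would apply a random-matrix spectral inequality (non-commutative Khintchine, or equivalently the Bandeira--van Handel bound) to the symmetric mean-zero matrix $R$, whose entries are bounded by $1$ in absolute value and have variance at most $\max\{p(1-p),q(1-q)\}\leq p(1-q)$; this gives, on an event $\mathcal{E}_{1}$ of probability $1-O(n^{-2})$, the spectral bound $\|R\|_{\mathrm{op}}\leq C_{0}\sqrt{p(1-q)\,kn}$. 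Together with the triangle inequality this also forces $\|B\|_{\mathrm{op}}=O((p-q)n)$ in the parameter regime of the lemma, since $(p-q)n$ dominates $\sqrt{p(1-q)\,kn}$ there.

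The bound on $\|R^{r}\|_{\row}$ is then immediate from $\|R^{r}\|_{\row}\leq\|R^{r}\|_{\mathrm{op}}\leq\|R\|_{\mathrm{op}}^{r}$ on $\mathcal{E}_{1}$. For $\|M\|_{\row}$, I would estimate each summand $L^{r-1-i}RB^{i}$ via the mixed sub-multiplicative inequality $\|XY\|_{\row}\leq\|X\|_{\row}\|Y\|_{\mathrm{op}}$ applied repeatedly, producing
\[
\bigl\|L^{r-1-i}RB^{i}\bigr\|_{\row}\leq\|L\|_{\row}\cdot\|L\|_{\mathrm{op}}^{r-2-i}\cdot\|R\|_{\mathrm{op}}\cdot\|B\|_{\mathrm{op}}^{i},
\]
which, after plugging in the values above and summing over $i=0,\ldots,r-2$, telescopes to the stated first inequality (the $\log(kn)$ factor and the constant $192$ come from the non-commutative Khintchine constant bundled into the spectral bound on $R$).

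The technically heaviest step is $\|M'\|_{\row}$, because the naive inequality $\|R^{i}L\|_{\row}\leq\|R\|_{\mathrm{op}}^{i}\|L\|_{\mathrm{op}}$ pays a factor of $\sqrt{k}$ for every copy of $R$ and is therefore too loose by roughly $k^{(i-1)/2}$. To avoid this loss I would exploit the rank-$k$ decomposition of $L$ to write
\[
R^{i}L=(p-q)\sum_{j=1}^{k}(R^{i}\mathbf{1}_{V_{j}})\,\mathbf{1}_{V_{j}}^{\top},
\]
so that the $u$-th row norm of $R^{i}L$ equals $(p-q)\sqrt{n}\,\bigl(\sum_{j}(R^{i}\mathbf{1}_{V_{j}})_{u}^{2}\bigr)^{1/2}$, and the problem reduces to an entrywise high-probability bound on the random multilinear form $(R^{i}\mathbf{1}_{V_{j}})_{u}=\sum_{w_{1},\ldots,w_{i-1}\in V,\,w_{i}\in V_{j}}R_{u w_{1}}R_{w_{1}w_{2}}\cdots R_{w_{i-1}w_{i}}$. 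Following \cite{MZ22SBM}, I would bound the $\ell$-th moment of this sum by a graph-counting / trace-method argument (tracking the per-edge variance $p(1-q)$ and the $n$ choices per summation index), then take $\ell=\Theta(\log(kn))$ and apply Markov with a union bound over $u\in V$ and $j\in[k]$, obtaining $|(R^{i}\mathbf{1}_{V_{j}})_{u}|\leq(\log(kn))^{O(1)}(\sqrt{p(1-q)\,n})^{i}$ on an event $\mathcal{E}_{2}$ of probability $1-O(n^{-2})$. Multiplying by $\|B^{r-1-i}\|_{\mathrm{op}}\leq O((p-q)n)^{r-1-i}$ and summing over $i=1,\ldots,r-1$ produces the second inequality, with the exponent $7$ in $(\log(kn))^{7}$ absorbing all logarithmic losses along the way.

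The main obstacle is precisely the multilinear-form estimate in step three: one must not lose a $\sqrt{k}$ per copy of $R$, and the combinatorial count inside the trace method has to be tuned to the balanced cluster structure (which is why setting $s^{*}=n$ in \cite{MZ22SBM} is the right specialization). Once $\mathcal{E}_{1}$ and $\mathcal{E}_{2}$ are established, all three inequalities hold simultaneously with probability $1-O(n^{-1})$ by a final union bound.
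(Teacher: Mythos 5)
This lemma is not proved in the paper at all: it is imported verbatim from \cite{MZ22SBM} with the one-line remark ``by setting $s^*=n$,'' so there is no internal proof to compare your argument against. What you have written is a reconstruction of how the bounds in the cited work would be derived, and as a high-level sketch it is sensible: deterministic row/operator norms for the low-rank matrix $L$, a spectral inequality for the mean-zero random symmetric matrix $R$, then submultiplicativity $\|XY\|_{\row}\le\|X\|_{\row}\|Y\|_{\mathrm{op}}$ for the terms of $M$ (which all begin with a power of $L$), and, for the harder $M'$ terms that begin with powers of $R$, exploiting $L=(p-q)\sum_j\mathbf 1_{V_j}\mathbf 1_{V_j}^\top$ to reduce to entrywise moment/trace estimates on multilinear forms in $R$. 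You correctly single out $M'$ as the place where the naive operator-norm chain is too lossy. That matches what one would expect the cited argument to do.

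One inconsistency worth flagging in your write-up: you attribute the $\log(kn)$ in the first bullet to ``the non-commutative Khintchine constant bundled into the spectral bound on $R$.'' But the third bullet, $\|R^r\|_{\row}\le(C_0\sqrt{p(1-q)kn})^r$, is obtained from the same $\|R\|_{\mathrm{op}}$ and carries no logarithmic factor per power of $R$. If the spectral bound on $R$ had a $\log$ in it, the third bullet would pick up $(\log kn)^r$, which it does not. Under the hypothesis $\max\{p(1-p),q(1-q)\}\ge C_0\log n/n$, the Bandeira--van Handel-type bound is in fact log-free, so the $\log(kn)$ in $\|M\|_{\row}$ must originate elsewhere (presumably a row-norm estimate for an intermediate product, or a union bound over rows), not from $\|R\|_{\mathrm{op}}$. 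This does not affect the overall plan, but your bookkeeping of where the logarithmic losses arise in the first bullet would need to be redone to reproduce the stated constants. The rest of the sketch, in particular the treatment of $M'$ via the rank-$k$ decomposition and a trace-method tail bound on $(R^i\mathbf 1_{V_j})_u$ with a union bound over $u$ and $j$, is exactly the kind of argument the cited lemma rests on and looks correct in outline, though of course the combinatorial moment count is left unexecuted.
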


Now we are ready to prove Theorem~\ref{them:r-power}, which we re-state below.

% \xianbin{Restated: remove some parameters, add $p,q \leq 0.75$ and modify the proof. The paper in Zhang et al. only says that $C_0 > 0$ and $p,q \leq 0.75$. It is better keep that.}
\rpower*
%\begin{theorem}\label{theom:A4.2}(Theorem~\ref{them:r-power})
%Let $p,q \leq 0.75$ be parameters such that $\max\{p(1-p),q(1-q)\} \geq C_0 \log{kn}/kn$ where $C_0>0$ is some constant. 
%Given $\frac{p-q}{\sqrt{p}} \geq (C_0^{2}+1) k^{\frac{1}{2}}n^{-\frac{1}{2}+\frac{1}{2(r-1)}}(\log kn)^7$, and $\Delta = C\sqrt{k}\sqrt{p(1-q)}(\log{kn})^7(p-q)^{r-1}n^{r-1}$ where $r \in [3,O(\log n)]$, Algorithm \textsc{SeqCluster2} can output $k$ hidden clusters with probability $1- O(n^{-1})$.

%\end{theorem}

\begin{proof}
From Equation (\ref{eq1}), we have
\begin{align*}
|\lVert B^r_i - B^r_j\rVert - \lVert L^r_i - L^r_j\rVert|
&\leq \lVert(B^r_i - L^r_i) - (B^r_j - L^r_j)\rVert
\\
&\leq \lVert M_i - M_j \rVert + \lVert M'_i - M'_j \rVert + \lVert R^r_i - R^r_j\rVert
\\
&\leq 2(\lVert M\rVert_{row} + \lVert M'\rVert_{row} + \lVert R^r\rVert_{row})
\end{align*}

By Lemma~\ref{cor:maintool}, with probability $1- O(n^{-1})$, we have the following
\[\lVert M \rVert_{row} \leq r\sqrt{kn}(192\sqrt{p(1-q)}\log n) \cdot \sqrt{n}\cdot (p-q)^{r-1}n^{n-2}\]
\[\lVert M' \rVert_{row} \leq 2C_2\sqrt{kn}\sqrt{p(1-q)}(\log{kn})^7 \sqrt{n}(p-q)^{r-1}n^{r-2} \]
\[\lVert R^r \rVert_{row} \leq (C_0 \sqrt{p(1-q)}\sqrt{kn})^r \]

Recall that $\Delta = C\sqrt{k}\sqrt{p(1-q)}(\log{kn})^7(p-q)^{r-1}n^{r-1}$. Then we have $\lVert M' \rVert_{row} \leq 0.1 \Delta$ and $\lVert M \rVert_{row} \leq \frac{1}{\log ^5 n} \Delta$.

By the condition that 
\[
\frac{p-q}{\sqrt{p}} \geq (C_0^{2}+1) k^{\frac{1}{2}}n^{-\frac{1}{2}+\frac{1}{2(r-1)}}(\log kn)^7
\]
where  $p,q \leq 0.75$ and $r\geq 3$,

we obtain that 
\[
\frac{p-q}{\sqrt{p(1-q)}} \geq (C_0^{2}+1) k^{\frac{1}{2}}n^{-\frac{1}{2}+\frac{1}{2(r-1)}}(\log kn)^7
\]

Thus, we can get that \[\lVert R^r \rVert_{row} \leq  0.1\Delta \]

By Lemma~\ref{lem:Lij}, we can see that for any $i\sim j$, with probability $1- O(n^{-1})$, $\lVert B^r_i - B^r_j\rVert \leq 0.3\Delta$. Otherwise, $i\nsim j$, $\lVert B^r_i - B^r_j\rVert \geq (2 - 0.3)\Delta = 1.7\Delta$. Therefore, with probability $1- O(n^{-1})$ we can use $\lVert  B^r_i - B^r_j \rVert$ determine whether any two nodes $v_i$ and $v_j$ are in the same cluster and Algorithm~\textsc{PowerIteration} can output $k$ clusters correctly. 
\end{proof}

\subsection{The MPC Algorithm}\label{isactive_mpc}

In this section, we show how to implement \textsc{PowerIteration} in the $s$-space  MPC model. The pseudocode is found in Algorithm~\ref{alg:mpc-2}. Given a matrix $A$, we use $A^{2r}_i$ to denote the $i$-th row of $A^{2r}$. We use $A^{2r}_{ \cdot j}$ to denote the $j$-th column of $A^{2r}$.

\begin{algorithm}[ht]
\caption{\textsc{MPC-PowerIteration}}
   \label{alg:mpc-2}
   \KwInput{ A SBM graph $G$, $\Delta$;}
\begin{algorithmic}[1]
 
\STATE Let $A$ be an adjacent matrix of $G$, $r$ be the parameter
% \STATE Let $r$ be the parameter
\STATE Let $\Delta = C\sqrt{k}\sqrt{p(1-q)}(\log{kn})^7(p-q)^{r-1}n^{r-1}$, where $C\in \mathbb{R}_*^{+}$
\STATE $B = A - q\cdot J$ where $J = 1^{n\times n}$
\STATE{Let $i=1$ and $W=V$}
\STATE{Initially, all nodes in $W$ are active}
\WHILE{$\textsc{IsActive}(W)$ is true}
\STATE{Choose an arbitrary vertex $v\in W$ and send it to all machines}
\STATE{Label $v$ with $i$, i.e., $C_i=\{v\}$}

%\pan{define $B_u^r$}
\FOR{each machine holding active vertex $u\in W$, we execute the following procedure in parallel
}
\STATE{\textsc{ComputeNorm($B,u,v,r$)}}
   \IF{$\lVert B_u^r - B_v^r\rVert \leq \Delta$}
   \STATE{Label $u$ with $i$}
   \ENDIF
   \ENDFOR
\STATE{Set nodes in $C_i$ inactive}
\STATE{$i=i+1$}
\ENDWHILE
\STATE Return all the sub-clusters $C_i$'s.
\end{algorithmic}
\end{algorithm}

We use $\textsc{IsActive}(W)$ to determine whether there are active nodes in $W$ or not. The details of implementation of $\textsc{IsActive}(W)$ in the $s$-space MPC model is as follows. We select one machine as the leader machine $M^*$. Let $\textsf{State}_M$ be the variable indicating that whether all nodes in the machine $M$ are all active or not. We use $\textsf{State}_M=1$ to indicate that all nodes in $M$ are active; $0$, otherwise. If $\textsf{State}_{M^*}=1$, the leader machine sends messages of $1$ to other machines. When a machine $M$ receives message, if $\textsf{State}_M=1$, then it forwards a message of $1$ to other machines; Otherwise, the machine $M$ sends a message of $0$ back to the sender. Therefore, after $O(\log_s n)$ rounds, the leader machine will know whether all nodes in $W$ are active or not.

%So now the major challenge is 
We then use another subroutine $\textsc{ComputeNorm}(B,i,j,r)$ to compute $\lVert B^r_i - B^r_j \rVert$. Notice that we can't directly calculate matrix multiplication, which will take lots of rounds, we notice some good properties of $\lVert B^r_i - B^r_j \rVert$ and have the following theorem.

\begin{restatable}{theorem}{mpcRvivj}\label{thm:mpcRvivj} 
For a fixed $i$ and $j\in[n]$ and any integer $r<O(\log n)$, the subrountine $\textsc{ComputeNorm}(B,i,j,r)$ for computing $\lVert B_i^r-B_j^r\rVert$ can be implemented in $ O(r\log_s n)$ rounds where each machine has memory $s=\Omega(\log n)$.
\end{restatable}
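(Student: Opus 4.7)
My plan is to reduce computing $\lVert B_i^r - B_j^r\rVert$ to evaluating four scalars $\mathbf{1}_x^T B^{2r}\mathbf{1}_y$ for $x,y\in\{i,j\}$, via the identity
\[
\lVert B_i^r - B_j^r\rVert^2 \;=\; \mathbf{1}_i^T B^{2r}\mathbf{1}_i - 2\,\mathbf{1}_i^T B^{2r}\mathbf{1}_j + \mathbf{1}_j^T B^{2r}\mathbf{1}_j,
\]
which holds because $B=A-qJ$ is symmetric, so $(B^r)^T=B^r$ and $B_i^r(B_j^r)^T=\mathbf{1}_i^T B^r(B^r)^T\mathbf{1}_j=\mathbf{1}_i^T B^{2r}\mathbf{1}_j$. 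Hence it suffices to compute the vector $B^{2r}\mathbf{1}_y$ for each $y\in\{i,j\}$ and then read off its $i$-th and $j$-th entries.

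The key observation that makes this tractable in low memory is that multiplication by the dense matrix $B$ splits into a sparse matrix-vector product plus a rank-one correction: writing $\vec{1}$ for the all-ones vector so that $J=\vec{1}\vec{1}^T$, we have $Bv=Av-q(\vec{1}^T v)\vec{1}$. Starting from $v^{(0)}=\mathbf{1}_y$, the iteration
\[
s_t := \vec{1}^T v^{(t)},\qquad v^{(t+1)} := Av^{(t)}-q\,s_t\,\vec{1}
\]
therefore produces $v^{(2r)}=B^{2r}\mathbf{1}_y$ after $2r$ steps. Equivalently, expanding $(A-qJ)^{2r}$ as a sum over $\{A,-qJ\}$-sequences and using $J=\vec{1}\vec{1}^T$, each term $\mathbf{1}_x^T A^{a_0}JA^{a_1}\cdots JA^{a_\ell}\mathbf{1}_y$ factors into a product of walk-count quantities $d_x^{(a_0)}\,W_{a_1}\cdots W_{a_{\ell-1}}\,d_y^{(a_\ell)}$, where $d^{(a)}:=A^a\vec{1}$ is the vector of length-$a$ walk counts and $W_a:=\vec{1}^T d^{(a)}$ is the total number of length-$a$ walks; this is the combinatorial interpretation advertised in the overview, and the iteration above computes this sum implicitly.

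I would implement each step of the iteration in the $s$-space MPC model in $O(\log_s n)$ rounds using the primitives developed in Section~\ref{sec:mbasicmpc}: (i) the scalar $s_t$ is obtained by a converge-cast summation over the $O(n/s)$ machines holding $v^{(t)}$ and then broadcast to all machines; (ii) the sparse product $Av^{(t)}$, whose $a$-th entry is $\sum_{b\in N(a)} v^{(t)}_b$, is realized as a sum-over-neighbors aggregation by routing-and-sorting, combining Lemma~\ref{lem:comm-sense} with the $O(\log_s n)$-round sorting of \cite{GoodrichSZ11}; (iii) the rank-one correction $-q s_t\vec{1}$ is applied locally once $s_t$ is known. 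The main obstacle is phase (ii), since neither $A$ (of size $m$) nor $v^{(t)}$ (of length $n$) fits in a single $s=\Omega(\log n)$-word machine, so for every edge $(a,b)$ stored on some machine we must deliver the coordinate $v^{(t)}_b$ there without causing congestion; I would handle this by replicating $v^{(t)}$ in the structured manner of \textsc{CopyNBR} and \textsc{ReorganizeNBR} so that every coordinate is available wherever it is needed, and then sort-aggregate the partial sums into the entries of $Av^{(t)}$. Running $2r$ iteration steps for each of $y\in\{i,j\}$ produces the four required scalars, and hence $\lVert B_i^r-B_j^r\rVert$, in $O(r\log_s n)$ rounds in total.
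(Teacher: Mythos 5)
Your proposal is correct and takes a genuinely different---and in fact cleaner---route than the paper. The paper proves this by fully expanding $(A-qJ)^{2r}$ into a binomial sum (Lemma~\ref{expanded_formula}), collapsing it with the identities $J^i = n^{i-1}J$ and $JA^iJ = C_iJ$ into $O(r^2)$ terms of the form $X_{i_1,i_t}\,A^{i_1}_x J A^{i_t}_{\cdot y}$ plus the special term $\mathbf{1}_x^T A^{2r}\mathbf{1}_y$, then separately computes the walk-count vectors $A^i_x\vec{\mathbf{j}}$ (Algorithm \textsc{AixSum}), the global walk counts $C_i$, and a dedicated procedure \textsc{ComputeArx} for $A^{2r}_x$, before recombining. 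You instead observe that $Bv = Av - q(\vec{1}^Tv)\vec{1}$ splits each multiplication by the dense matrix $B$ into a sparse matvec plus a scalar and a rank-one correction, and simply iterate $v^{(t+1)} = Av^{(t)} - q s_t \vec{1}$ for $2r$ steps from $v^{(0)}=\mathbf{1}_y$. This bypasses the entire combinatorial bookkeeping: there is no need for the $X_{i_1,i_t}$ coefficients, no separate handling of the $A^{2r}$ term, and no parallel graph-copying to handle $\poly(\log n)$ terms. Both approaches ultimately reduce to the same workhorse primitive---per-vertex aggregation of a length-$n$ vector over neighborhoods, costing $O(\log_s n)$ rounds---applied $O(r)$ times, so both yield $O(r\log_s n)$; but your reduction makes this structure transparent, whereas the paper distributes the same underlying work across several subroutines with a nontrivial recombination step.

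One small imprecision worth tightening: when you say you would ``replicate $v^{(t)}$ in the structured manner of \textsc{CopyNBR} and \textsc{ReorganizeNBR},'' note that those primitives are built for statically aligning the neighborhood bit-vectors of a \emph{small random set} $S$, not for delivering a freshly recomputed length-$n$ vector to the machines holding all $m$ edges at every step. The operation you actually need is the sort-based join that underlies Lemma~\ref{lem:neigh-mpc} (aggregating a separable function $f(\{x_b : b\in N(a)\})$ for every $a$): sort the pairs $(b, v^{(t)}_b)$ and the directed edges $(a,b)$ by $b$ to annotate each edge with $v^{(t)}_b$, then sort by $a$ and prefix-sum to form $(Av^{(t)})_a$. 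You do invoke sorting and Lemma~\ref{lem:comm-sense}, so the intent is right, but the correct reference is the neighbor-aggregation lemma rather than the neighborhood-alignment primitives. With that fixed, the argument is complete.
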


% \paragraph*{Proof of Theorem~\ref{thm:mpcRvivj-main}}

%\noindent \textbf{The Outline of Proof of Theorem~\ref{thm:mpcRvivj-main}}~~
%We introduce the details of how to compute $\lVert B_i^r-B_j^r\rVert$ in the $s$-space MPC model. 
Now we give the ideas of $\textsc{ComputeNorm}(B,i,j,r)$. We find that the expansion of $\lVert B^r_i - B^r_j \rVert$ has good properties such that we only need to compute the key terms for these $O(r^2)$ terms and the coefficients have good combinatorial explanations. Then we can use graph algorithms to calculate the results. First we note that 
\begin{eqnarray*}
	&&\vert\vert B_i^r-B_j^r\vert\vert^2=\vert\vert(\mathbf{1}_i^T- \mathbf{1}_j^T)(A-qJ)^{r}\vert\vert^2=(\mathbf{1}_i^T-\mathbf{1}_j^T)(A-qJ)^{2r}(\mathbf{1}_i-\mathbf{1}_j)\\
	~&=&\mathbf{1}_i^T(A-qJ)^{2r}\mathbf{1}_i-\mathbf{1}_i^T(A-qJ)^{2r}\mathbf{1}_j-\mathbf{1}_j^T(A-qJ)^{2r}\mathbf{1}_i+\mathbf{1}_j^T(A-qJ)^{2r}\mathbf{1}_j,
\end{eqnarray*}
so we only need to calculate $\mathbf{1}_x^T(A-qJ)^{2r}\mathbf{1}_y$ where $x,y\in \{i,j\}$.

Now we have the following lemma about the expanded formula.

\begin{restatable}{lemma}{expandlemma}\label{expanded_formula}
We have
$
    \mathbf{1}^T_x(A-qJ)^{2r}\mathbf{1}_y=  \mathbf{1}^T_xA^{2r}\mathbf{1}_y+\sum_{0\leq i_1\leq 2r-1}\sum_{0\leq i_t\leq 2r-1} X_{i_1,i_t}(A^{i_1}_x)J(A^{i_t}_{\cdot y}) 
$
where $X_{i_1,i_t}$ is coefficient only related to $n,q$ and $C_i$ and $C_i$ is the total number of different walks with length $i$ from $n$ vertices.
\end{restatable}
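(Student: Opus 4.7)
The plan is to expand $(A-qJ)^{2r}$ into a sum of non-commutative monomials in $A$ and $-qJ$, then use the identity $JA^kJ = C_k J$ (where $C_k=\sum_{i,j}(A^k)_{ij}$ is the total number of walks of length $k$) to collapse every ``middle'' stretch of $J$'s into a scalar multiple of a single $J$. Sandwiching the resulting expression between $\mathbf{1}_x^T$ and $\mathbf{1}_y$ will then yield exactly the claimed formula.

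First I would write, by the usual (non-commutative) multinomial expansion indexed by the positions of the $-qJ$ factors,
\[
(A-qJ)^{2r} \;=\; \sum_{s=0}^{2r}\ \sum_{\substack{i_0,j_1,\ldots,j_{s-1},i_t\ge 0\\ i_0+j_1+\cdots+j_{s-1}+i_t = 2r-s}} A^{i_0}(-qJ)A^{j_1}(-qJ)\cdots A^{j_{s-1}}(-qJ)A^{i_t},
\]
where $s$ is the number of $(-qJ)$ factors. The unique $s=0$ term is $A^{2r}$ and contributes $\mathbf{1}_x^T A^{2r}\mathbf{1}_y$ after sandwiching, which matches the first term in the claim.

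Second, for $s\ge 1$ I would simplify the inner block $(-qJ)A^{j_1}(-qJ)A^{j_2}\cdots A^{j_{s-1}}(-qJ)$ by induction on $s$, repeatedly applying $JA^kJ = C_k J$ from the outside in. This yields
\[
(-qJ)A^{j_1}(-qJ)\cdots A^{j_{s-1}}(-qJ) \;=\; (-q)^{s}\, C_{j_1}C_{j_2}\cdots C_{j_{s-1}}\,J,
\]
with the empty product covering $s=1$ uniformly. Each non-trivial monomial therefore collapses to $(-q)^s C_{j_1}\cdots C_{j_{s-1}}\,A^{i_0} J A^{i_t}$, and sandwiching gives $\mathbf{1}_x^T A^{i_0}JA^{i_t}\mathbf{1}_y = (A_x^{i_0})\,J\,(A^{i_t}_{\cdot y})$ by the definitions of the $x$-th row and $y$-th column.

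Third I would group the surviving contributions by the outer exponent pair $(i_0,i_t)$ (relabelled $(i_1,i_t)$ to match the statement), which yields the identity with the explicit coefficient
\[
X_{i_1,i_t} \;=\; \sum_{s\ge 1}(-q)^{s}\ \sum_{\substack{j_1,\ldots,j_{s-1}\ge 0\\ j_1+\cdots+j_{s-1}=2r-s-i_1-i_t}} C_{j_1}C_{j_2}\cdots C_{j_{s-1}},
\]
interpreted as $0$ whenever $i_1+i_t\ge 2r$ (the inner sum is then empty). This is manifestly a function only of $n,q$ and the walk counts $\{C_j\}_{j\le 2r-2}$, as required. The main obstacle is purely bookkeeping: tracking signs so that factoring $(-q)^s$ comes out cleanly, handling the boundary cases $s=1$ (empty middle product) and the extremes $i_0=0$ or $i_t=0$, and being careful that the collapse identity is applied in an order that respects non-commutativity of $A$ with the scalar-times-$J$ blocks. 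No analytic input is needed; once the expansion is written, the rest is a finite algebraic identity.
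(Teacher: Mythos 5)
Your proposal is correct and takes essentially the same route as the paper: expand $(A-qJ)^{2r}$ by the non-commutative multinomial theorem, collapse each interior block of $J$-factors using $JA^kJ = C_kJ$ (and $J^2=nJ$), and then sandwich between $\mathbf{1}_x^T$ and $\mathbf{1}_y$, grouping by the outer exponent pair. Your version is in fact a bit cleaner in its bookkeeping (you keep $s$, the number of $(-qJ)$ factors, rather than separately tracking runs $J^{j_l}$ and reinserting extra $J$'s as the paper does), and you make the coefficient $X_{i_1,i_t}$ fully explicit; the underlying argument is the same.
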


\begin{proof}
Let us expand $\lVert B_i^r-B_j^r\rVert^2$ first. 
\begin{eqnarray*}
	&&\vert\vert B_i^r-B_j^r\vert\vert^2\\
    ~&=&\vert\vert(\mathbf{1}_i^T- \mathbf{1}_j^T)(A-qJ)^{r}\vert\vert^2\\
	~&=&(\mathbf{1}_i^T-\mathbf{1}_j^T)(A-qJ)^{2r}(\mathbf{1}_i-\mathbf{1}_j)\\
	~&=&\mathbf{1}_i^T(A-qJ)^{2r}\mathbf{1}_i-\mathbf{1}_i^T(A-qJ)^{2r}\mathbf{1}_j-\mathbf{1}_j^T(A-qJ)^{2r}\mathbf{1}_i+\mathbf{1}_j^T(A-qJ)^{2r}\mathbf{1}_j
\end{eqnarray*}
So we only need to calculate $\mathbf{1}_x^T(A-qJ)^{2r}\mathbf{1}_y$ where $x,y\in \{i,j\}$.

We expand $\mathbf{1}^T_x(A-qJ)^{2r}\mathbf{1}_y$ into $2^{2r}$ terms. 
\[
\mathbf{1}^T_x(A-qJ)^{2r}\mathbf{1}_y = \mathbf{1}^T_xA^{2r}\mathbf{1}_y + \cdots + (-q)^{\sum_{l=1}^{t-1}j_{l}}\mathbf{1}_x^TA^{i_1}J^{j_1}\dots J^{j_{t-1}}A^{i_t}\mathbf{1}_y^T + \cdots + (-q)^{2r}\mathbf{1}^T_x J^{2r}\mathbf{1}_y
\]

 To avoid redundancy, we here only show how to deal with the key term $\mathrm{K}_t$, i.e., the term $(-q)^{\sum_{l=1}^{t-1}j_{l}}\mathbf{1}_x^TA^{i_1}J^{j_1}\dots J^{j_{t-1}}A^{i_t}\mathbf{1}_y^T$ of $\mathbf{1}_x^T(A-qJ)^{2r}\mathbf{1}_y$. For other terms like $A^{i_1}J,A^2r,JA^{i_t}$ which can be considered substitutions of this term and we have similar conversions and almost the same graph algorithms to deal with them.

Let $\textsc{c}^i_j$ denote the number of walks from the vertex $v_j$ to all vertices with length $i$. Then, we define $C_i = \Sigma_{j\in V} \textsc{c}^i_j$ which means the total number of walks with length $i$ from $n$ vertices to $n$ vertices. Then the key terms
\begin{eqnarray}
	&&(-q)^{\sum_{l=1}^{t-1}j_{l}}\mathbf{1}_x^TA^{i_1}J^{j_1}\dots J^{j_{t-1}}A^{i_t}\mathbf{1}_y \nonumber\\
	~&=& (-q)^{\sum_{l=1}^{t-1}j_{l}}n^{\sum_{l=1}^{t-1}(j_l-1)}\mathbf{1}_x^TA^{i_1}J\dots JA^{i_t}\mathbf{1}_y\nonumber\\
	~&=& (-q)^{\sum_{l=1}^{t-1}j_l}n^{-(t-1)+\sum_{l=1}^{t-1}(j_l-1)}\mathbf{1}_x^TA^{i_1}J(JA^{i_2}J)\dots (JA^{i_{t-1}}J)JA^{i_t}\mathbf{1}_y\nonumber\\
	~&=&(-q)^{\sum_{l=1}^{t-1}j_l}n^{-(t-1)+\sum_{l=1}^{t-1}(j_l-1)}\mathbf{1}_x^TA^{i_1}J(C_{i_2}J)\dots (C_{i_{t-1}}J)JA^{i_t}\mathbf{1}_y \nonumber\\
	~&=&(-q)^{\sum_{l=1}^{t-1}j_l}n^{1+\sum_{l=1}^{t-1}(j_l-1)}\left(\prod_{l=2}^{t-1}C_{i_l}\right) (\mathbf{1}_x^TA^{i_1})J(A^{i_t}\mathbf{1}_y) \nonumber\\
	~&=&(-q)^{\sum_{l=1}^{t-1}j_l}n^{1+\sum_{l=1}^{t-1}(j_l-1)}\left(\prod_{l=2}^{t-1}C_{i_l}\right) (A^{i_1}_x J A^{i_t}_{\cdot y}) \nonumber\\
	\nonumber
\end{eqnarray}
 Here we use $J^i=n^{i-1}J$ and $JA^{i}J=C_iJ$. 

 Notice that for the computation of $(A^{i_1}_x J A^{i_t}_{\cdot y})$, we only need to calculate the multiplication of the sum of elements in $A^{i_1}$ and the sum of elements in $A^{i_t}$.

Therefore,
\begin{eqnarray*}
    \mathbf{1}^T_x(A-qJ)^{2r}\mathbf{1}_y=  \mathbf{1}^T_xA^{2r}\mathbf{1}_y+\sum_{0\leq i_1\leq (2r-1)}\sum_{0\leq i_t\leq (2r-1)} X_{i_1,i_t}(A^{i_1}_x)J(A^{i_t}_{\cdot y}) 
\end{eqnarray*}
Here $X_{i_1,i_t}$ is the sum of coefficients of all terms contain $(A^{i_1}_x)J(A^{i_t}_{\cdot y})$. 
\end{proof}

Since $r=O(\log n)$, there are $\poly(\log n)$ terms in the right hand side. So we can store all coefficients in $\tilde O(n)$ space. Notice that $\mathbf{1}^T_xA^{2r}\mathbf{1}_y$ for all $y\in[n]$ is the $i^{th}$ row of $A^{2r}$, i.e., $A^{2r}_x$. To compute $\mathbf{1}^T_x(A-qJ)^{2r}\mathbf{1}_y$, we split it into computing $C_i$, $A^{i_1}_x J A^{i_t}_{\cdot y}$, and $A^{2r}_x$.

\bigskip

\noindent \textbf{Compute $C_i$ and $A^{i_1}_x J A^{i_t}_{\cdot y}$.}~~We show how to compute the value of any $C_i$ and $A^{i_1}_x J A^{i_t}_{\cdot y}$. Let $\vec{\mathbf{j}}$ be all ones vector, which is a column of $J$.
To compute $A^i_x \vec{\mathbf{j}}$, we propose a simple algorithm, i.e., Algorithm~\ref{alg:4} that is described as follows. 

% We can compute $A^i_x \vec{\mathbf{j}}$ by Algorithm~\ref{alg:4} that is described as follows. 

\begin{algorithm}[h]
 \SetAlgoLined
	\caption{\textsc{AixSum}($G(n),r$): Calculating $A^{i}_x \vec{\mathbf{j}}$ for all $x\in[n], i\in[2r]$}
	\label{alg:4}
 \KwInput{A graph $G(n)$,$r$}
	\begin{algorithmic}[1]
		\FOR{each node $u$ in $G$}
		\STATE{$A_{u,0}=1$}
		\ENDFOR
		\STATE{let $i=1$}
		\WHILE{$i\leq 2r$}
		\FOR{each node $u$ in $G$}
		\STATE{$sum_u=0$}
		\FOR{each neighbor $v$ of $u$}
		\STATE{$sum_u+=A_{v,i-1}$}
		\ENDFOR
		\STATE{$A_{v,i}=sum_u$}
		\ENDFOR
		\STATE{$i=i+1$}
		\ENDWHILE
		\STATE Return $A$, $A_{x,i}$ is $A^i_x \vec{\mathbf{j}}$
	\end{algorithmic}
\end{algorithm}

\begin{restatable}{lemma}{computeCi}\label{lem:computeCi}
For all $x\in[n], i\in[2r]$, Algorithm~\ref{alg:4} outputs $A^{i}_x \vec{\mathbf{j}}$.
\end{restatable}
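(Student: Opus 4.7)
The plan is to prove the claim by induction on $i$, using the matrix identity $A^i = A \cdot A^{i-1}$ together with the combinatorial interpretation that $A^i_x \vec{\mathbf{j}}$ is the total number of walks of length $i$ starting at $x$. Algorithm~\ref{alg:4} is, at its heart, the natural layer-by-layer dynamic program that counts such walks.

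For the base case $i = 0$, the initialization sets $A_{u,0} = 1$ for every vertex $u$, which matches $A^0_u \vec{\mathbf{j}}$ since $A^0 = I$, so its $u$-th row is the standard basis vector $\mathbf{1}_u^T$ and $\mathbf{1}_u^T \vec{\mathbf{j}} = 1$. (Combinatorially, there is exactly one length-$0$ walk from $u$.)

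For the inductive step, assume that after iteration $i-1$ of the outer loop the stored value $A_{v,i-1}$ equals $A^{i-1}_v \vec{\mathbf{j}}$ for every vertex $v$. In iteration $i$, the algorithm computes
\begin{equation*}
A_{u,i} \;=\; \sum_{v \in N(u)} A_{v,i-1} \;=\; \sum_{v \in N(u)} A^{i-1}_v \vec{\mathbf{j}} \;=\; \sum_{v=1}^{n} A_{u,v}\,\bigl(A^{i-1}\vec{\mathbf{j}}\bigr)_v,
\end{equation*}
where the last equality uses the fact that the adjacency-matrix entry $A_{u,v}$ equals $1$ precisely when $v \in N(u)$ and $0$ otherwise. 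The rightmost expression is the $u$-th coordinate of the matrix-vector product $A \cdot (A^{i-1}\vec{\mathbf{j}}) = A^i \vec{\mathbf{j}}$, so $A_{u,i} = A^i_u \vec{\mathbf{j}}$ as required. Iterating the argument up through $i = 2r$ completes the proof for all $x \in [n]$ and $i \in [2r]$.

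There is no substantive mathematical obstacle; the only operational subtlety is that the update performed at iteration $i$ must read values from iteration $i-1$ without overwriting them mid-pass. This is guaranteed by indexing the stored values by both a vertex label and an iteration number, as Algorithm~\ref{alg:4} does, so the induction hypothesis is applied to a consistent snapshot of the array at level $i-1$.
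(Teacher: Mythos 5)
Your proof is correct and follows essentially the same inductive structure as the paper's: both arguments track the recursion $A_{u,i} = \sum_{v\in N(u)} A_{v,i-1}$ and argue by induction on $i$. The one notable difference is that the paper's proof first interprets $A_{u,i}$ combinatorially as the number of walks of length $i$ \emph{ending} at $u$, and then separately argues (implicitly using the symmetry of $A$ for an undirected graph) that this equals the number of walks of length $i$ \emph{starting} at $u$, which is $A^i_u\vec{\mathbf{j}}$; your proof bypasses that detour by directly rewriting $\sum_{v\in N(u)} A^{i-1}_v\vec{\mathbf{j}}$ as $\sum_v A_{u,v}\,(A^{i-1}\vec{\mathbf{j}})_v = (A^i\vec{\mathbf{j}})_u$. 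This is a cleaner, purely linear-algebraic route that does not rely on $A$ being symmetric, so it is slightly more general, though for the present undirected setting both arguments are valid and yield the same conclusion.
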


\begin{proof}
The pseudo-code of our algorithm is  Algorithm~\ref{alg:4}. Now, we show the correctness of Algorithm~\ref{alg:4} by induction. We first define that at the end of the $i$-th phase, for each node $u$, the value of $A_{u,i}$ is the number of walks ending at the node $u\in V$ with step size $i$. In the first phase, for a node $u\in V$, $A_{u,1} = |N(u)|$. Obviously, $|N(u)|$ is the number of walks ending at $u$ with step size one. Suppose it is true for the $k$-th phase, for a node $u\in V$, $A_{u,k}$ is the number of walks ending at $u$ with step size $k$. In the $k+1$ phase, as each node will receive values from its neighbors, $A_{u,k+1} = \Sigma_{v \in N(u)} A_{v,k}$ and we can see that the value of $A_{u,k+1}$ is the number of walks ending at $v$ after $k+1$ steps. Inductively, we prove our claim. Next, we prove that $A^i_x \vec{\mathbf{j}} = A_{x,i}$. We can see that $A^i_x \vec{\mathbf{j}}$ means that the number of walks from $x$ after walks with step size $i$ and $A_{x,i}$ means that the number of walks ends at $x$ after walks with step size $i$. Notice that here the walks mean all possible walks. Since we consider all possibilities, after walks with step size $i$, the number of walks ending at $x$ is the number of walks starting from $x$. Therefore,  $A^i_x \vec{\mathbf{j}} = A_{x,i}$.
%Next, let us see the time complexity of our algorithm. Since each node $u$ only executes operations on its neighbors, the time complexity is $i\cdot \Sigma_{u\in V}|N(v)|=2mi = O(mr)$. 
\end{proof}

To compute $C_i$ for any $i\in [n]$, we only need to sum up $A^{i}_x \vec{\mathbf{j}}$ for all $x,i\in[n]$.

Now, let us see how to implement Algorithm~\ref{alg:4} in the $s$-space MPC model. Notice that in default, we use the fact that in the $s$-space MPC model, each vertex can visit its neighbors in $O(\log_s n)$ rounds where each machine has memory of $O(n^{\delta})$ by Lemma~\ref{lem:neigh-mpc}.

\begin{restatable}{lemma}{computeAix}\label{lem:computeAixmpc}
    In $s$-space MPC model, for all $i\in [2r]$ and $x\in[n]$, there exists an algorithm that can compute all $A^{i}_x \vec{\mathbf{j}}$ in $O(r\log_s n)$ rounds,where each machine has memory $s=\Omega(\log_s n)$.
\end{restatable}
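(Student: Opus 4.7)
The plan is to implement the sequential Algorithm~\ref{alg:4} phase by phase in the $s$-space MPC model and argue that each of the $2r$ phases costs only $O(\log_s n)$ rounds, yielding a total of $O(r\log_s n)$. In the initialization step, I would assign each node $u$ a ``home'' machine $M_u$ and store $A_{u,0}=1$ at $M_u$. The key structural observation is that every update $A_{u,i}\gets \sum_{v\in N(u)} A_{v,i-1}$ is a separable aggregation over the neighborhood of $u$, so each phase amounts to a single round of neighborhood-sum computation for every node simultaneously.

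For phase $i$, I would perform three sub-steps. First, broadcast $A_{u,i-1}$ from $M_u$ to all machines storing an edge incident to $u$, so that any machine holding the edge $(u,v)$ now knows both $A_{u,i-1}$ and $A_{v,i-1}$. Second, for every node $u$, converge-cast the values $\{A_{v,i-1}:v\in N(u)\}$ currently sitting on the machines holding $u$'s incidences, summing them along the way. Third, deposit the resulting $A_{u,i}$ back at $M_u$ for use in the next phase. Each of these is an instance of the ``visit every neighbor and aggregate a separable function'' primitive that Lemma~\ref{lem:comm-sense} provides in $O(\log_s n)$ rounds in the $s$-space model, so one phase runs in $O(\log_s n)$ rounds.

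The main obstacle is handling high-degree vertices: a node $u$ with $d(u)\gg s$ has its adjacency list split across $\Theta(d(u)/s)$ machines, so neither the broadcast nor the converge-cast can be funnelled through $M_u$ directly without blowing up the round complexity. The standard remedy, already implicit in Lemma~\ref{lem:comm-sense} and the sorting and prefix-sum primitives cited from \cite{GoodrichSZ11}, is to build a balanced communication tree of branching factor $\Theta(s)$ over the machines storing $u$'s edges, rooted at $M_u$. This tree has depth $O(\log_s n)$, so the broadcast of $A_{u,i-1}$ down the tree and the converge-cast of the sum up the tree both complete in $O(\log_s n)$ rounds. Because the constructions for different vertices use disjoint edge data, all $n$ trees can be operated in parallel without congestion.

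It remains to check the space bound: the auxiliary array $\{A_{u,i}:u\in V, 0\le i\le 2r\}$ contains only $O(rn)=\tilde O(n)$ scalars, and the per-phase intermediate traffic is proportional to the total number of edge-incidences, i.e.\ $O(m)$. Hence the total space across the $2r=O(\log n)$ phases is $\tilde O(m)$, well within the budget of the ambient algorithm. Summing the round costs over phases yields the claimed $O(r\log_s n)$ rounds and completes the proof.
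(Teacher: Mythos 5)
Your proof is correct and follows essentially the same strategy as the paper's: simulate Algorithm~\ref{alg:4} phase by phase, using neighborhood broadcast/converge-cast (as provided by Lemma~\ref{lem:comm-sense} and the sorting/prefix-sum primitives) to carry out each separable-aggregation update in $O(\log_s n)$ rounds, then multiply by the $O(r)$ phases. Your explicit discussion of the balanced $\Theta(s)$-ary communication tree for high-degree vertices fills in a detail the paper leaves implicit, but it is the same underlying mechanism.
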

\begin{proof}
We transform Algorithm~\ref{alg:4} as follows. In each machine $M_i$, each node $u\in M_i$ has a counter variable~$u_\phi$. ~$u_\phi$ here maintains the value of $A_{u,i}$ in round $i$. Initially~$u_\phi = 1$ for all nodes in $V$. Then, each node sends $u_\phi$ to its neighbors. Notice that in MPC model, each machine stores edges and vertices. Therefore, the process happens within each single machine. We use $u_\phi(i)$ to denote the part of value of $u_\phi$ in the $i$-th machine ($i\in[p]$ where $p$ is the number of machines). Then, we need to accumulate all partial results, i.e.,~$u_\phi(i)$ ($i\in[p]$) into a complete~$u_\phi$. It means that we need to find all neighbors of any $u\in V$. As we mentioned before, in sublinear MPC model, we can finish it in $O(\log_s n)$ rounds. Also, we can return $u_\phi$ to each~$u_\phi(i)$ ($i\in[p]$) within~$O(\log_s n)$ rounds in reverse. Therefore, it takes $O(\log_s n)$ rounds to finish one phase. As there are $r$ phases together, the round complexity is $O(r\log_s n)$.
\end{proof}

Notice that~$A^{i_1}_x JA^{i_t}_{\cdot y}=A_x^{i_1}\vec{\mathbf{j}}(A_y^{i_t}\vec{\mathbf{j}})$, $C_i=\sum_{x\in [n]}A_x^i\vec{\mathbf{j}}$ and we have computed $A^{i}_x \vec{\mathbf{j}}$ for all $i\in[2r]$ and $x\in[n]$, we can obtain $A^{i_1}_x J A^{i_t}_{\cdot y}$ in constant rounds. So the main round complexity is only about the calculation of $A^{i}_x \vec{\mathbf{j}}$ and we have the following corollary.

\begin{restatable}{corollary}{computeCimpc}\label{cor:computeCimpc}
    In $s$-space MPC model, there exists an algorithm that can compute $A^{i_1}_x J A^{i_t}_{\cdot y}$ and $C_i$ in~$O(r\log_s n)$ rounds, where each machine has memory $s=\Omega(\log n)$.
\end{restatable}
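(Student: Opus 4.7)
The plan is to reduce the computation of $C_i$ and $A^{i_1}_x J A^{i_t}_{\cdot y}$ to invocations of Lemma~\ref{lem:computeAixmpc} combined with cheap aggregation and broadcast primitives. The key algebraic observations, which I would verify first, are: (i) since $\vec{\mathbf{j}}$ is a column of $J$ and $A$ is symmetric, $A^{i_1}_x J A^{i_t}_{\cdot y} = (A^{i_1}_x \vec{\mathbf{j}})(A^{i_t}_y \vec{\mathbf{j}})$, i.e., the quantity is just the scalar product of two already-computed numbers; and (ii) $C_i = \sum_{x\in[n]} A^i_x \vec{\mathbf{j}}$, a straightforward sum over $n$ scalars for each fixed $i$.

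The first step is to invoke Lemma~\ref{lem:computeAixmpc} to compute $A^i_x \vec{\mathbf{j}}$ for all $x\in[n]$ and $i\in[2r]$; this takes $O(r\log_s n)$ rounds and stores the scalar $A^i_x\vec{\mathbf{j}}$ with the machine(s) holding node $x$. The second step is to compute each $C_i$ by aggregating these $n$ scalars; using the standard summation / prefix-sum tree on an $s$-space MPC with $s=\Omega(\log n)$, this takes $O(\log_s n)$ rounds per $i$, and since all $2r$ sums can be computed in parallel (the values are laid out in separate ``slots''), the total cost is $O(\log_s n)$ rounds. The third step is to compute $A^{i_1}_x J A^{i_t}_{\cdot y}$: using the identity above, the required output for a fixed $(x,y)$ pair and choice of $(i_1,i_t)$ is simply the product of two precomputed scalars, so it suffices to broadcast the scalar $A^{i_t}_y \vec{\mathbf{j}}$ (and analogously $A^{i_1}_x\vec{\mathbf{j}}$) to the machine that needs it, then multiply locally.

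In the setting of Algorithm~\ref{alg:mpc-2} we only need these products for a pair $(x,y)=(i,j)$ currently being compared, so a single broadcast of the $O(r)$ scalars $\{A^{i_t}_j\vec{\mathbf{j}}\}_{i_t\in[2r]}$ to all machines (done in $O(\log_s n)$ rounds using a broadcast tree) lets every machine holding some candidate $x$ evaluate all $O(r^2)$ products $(A^{i_1}_x\vec{\mathbf{j}})(A^{i_t}_j\vec{\mathbf{j}})$ in $O(1)$ local rounds. Summing up, the dominant cost is the call to Lemma~\ref{lem:computeAixmpc}, giving the claimed $O(r\log_s n)$ round complexity.

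The only potentially non-trivial step is ensuring that broadcasts and aggregations stay within the $s=\Omega(\log n)$ memory budget per machine while still handling all $O(r)$ values simultaneously; however, since each individual scalar fits in $O(\log n)$ bits and there are only $\poly(\log n)$ of them per round, a standard balanced broadcast/reduction tree of height $O(\log_s n)$ suffices, so no new machinery is needed beyond what is already available from Theorem~\ref{them:basics} and Lemma~\ref{lem:computeAixmpc}.
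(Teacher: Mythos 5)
Your proposal is essentially identical to the paper's argument: both reduce the corollary to the observations $A^{i_1}_x J A^{i_t}_{\cdot y}=(A^{i_1}_x\vec{\mathbf{j}})(A^{i_t}_y\vec{\mathbf{j}})$ and $C_i=\sum_{x\in[n]}A^i_x\vec{\mathbf{j}}$, then charge the dominant $O(r\log_s n)$ cost to Lemma~\ref{lem:computeAixmpc}. You are slightly more careful than the paper in charging $O(\log_s n)$ rounds (rather than a constant) for the aggregation and broadcast steps in the $s$-space model, and you explicitly note that the identity relies on symmetry of $A$, but these are refinements of the same proof.
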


\noindent \textbf{Compute $A^{2r}_x$ .}
Note that each entry $a^{2r}_{x,y}$ in $A^{2r}_x$ is exactly the number of walks with step size $r$ from $v_x$ to $v_y$. The naive algorithm of computing $A^{2r}$ is to compute the matrix, but it is resources-consuming. Another idea is to compute~$a^{2r}_{x,y}$ for any $x$ and $y$, respectively. If each machine can store all vertices within radius $r$, then we can directly compute all~$a^{2r}_{x,y}$ ($x,y\in[n]$). Now, we consider the $s$-space MPC model, i.e., single machines cannot store vertices within radius $r$.

We use procedure $\textsc{ComputeArx}$ to solve this problem of computing $A^{2r}_x$. Take $a^{2r}_{x,y}$ for example. The goal is to compute the number of walks from $v_x$ to $v_y$ after walks with step size $r$. \\

\noindent\textbf{Step 1:} Initially, we compute the value the size of $N(v_x)$ by sending messages to all machines. Let $|N_i(v_x)|$ be the returned value from the $i$-th machine that represents the number of neighbors of $v_x$ in the $i$-th machine. We know that $\Sigma{|N_i(v_x)|} = |N(v_x)|$. \\

\noindent\textbf{Step 2:} Next, we assign $|N(v_x)|$ tokens to the vertex $v_x$. Each token has a time variable $\Phi$ that represents the value of the current round. Each token also has a value $\mathcal{V}$. Before the start of algorithm, the value of $\mathcal{V}$ for each token is 1. After the first round, the value of $\mathcal{V}$ of tokens holding in a vertex $v$ is equal to the sum of values of tokens that $v$ receives. We use $\mathcal{V}(v)$ to denote the value $\mathcal{V}$ of tokens hold by the vertex $v$. Therefore, the values of $\mathcal{V}$ for tokens are updated for each round. We let $v_x$ send $|N_i(v_x)|$ tokens to corresponding machines respectively. Each neighbor of $v_x$ receive one token and the vertex receiving a token increases the value of $\Phi$ by one.

We repeat step~1 and step~2 until there exist tokens with $\Phi = r$. Then, we count the number of tokens at the vertex $v_j$. 

\begin{restatable}{lemma}{mpcAr}
Let $a^{2r}_{x,y}$ be the number of walks from the vertex $x$ to the vertex $y$ after walks with step size $2r$. There exists a procedure $\textsc{ComputeArx}$ that can find $a^{2r}_{x,y}$ after $O(r\log_s n)$ rounds where each machine has memory $s=\Omega(\log n)$.
\end{restatable}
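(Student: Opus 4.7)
The plan is to interpret $a^{2r}_{x,y}$ combinatorially as the number of walks of length $2r$ from $x$ to $y$ and compute, for every $v\in V$ and every $i=0,1,\dots,2r$, the value $f_i(v)$ equal to the number of length-$i$ walks from $x$ to $v$. With the initialization $f_0(x)=1$ and $f_0(v)=0$ for $v\ne x$, the recurrence
\[
f_{i+1}(v)\;=\;\sum_{u\in N(v)} f_i(u)
\]
gives $f_{2r}(y)=a^{2r}_{x,y}$. The token-passing description in the paper is exactly this recurrence: a token with value $\mathcal V(u)=f_i(u)$ sends a copy along each incident edge, and each receiving vertex replaces $\mathcal V(v)$ by the sum of values it receives. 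So the task reduces to implementing one propagation step $f_i\mapsto f_{i+1}$ in $O(\log_s n)$ rounds and iterating $2r$ times.

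To implement a single step, I would not collect $N(v)$ on one machine, since $|N(v)|$ may far exceed $s$. Instead, each edge $(u,v)$ is the natural unit of work. First, using the broadcast-style primitive underlying Lemma~\ref{lem:comm-sense}, replicate the scalar $f_i(u)$ to every machine that stores an edge incident to $u$; this is an $O(\log_s n)$-round operation because it is a converge/broadcast on each vertex's edge list. Second, on each machine, for every stored edge $(u,v)$ form the key-value pair $(v,f_i(u))$. Third, invoke \textsc{Sorting} from \cite{GoodrichSZ11} on these pairs with key $v$, placing equal-keyed pairs on contiguous machines in $O(\log_s n)$ rounds. Fourth, run a segmented prefix-sum / converge-cast over each contiguous block, which produces $f_{i+1}(v)$ on a single machine in $O(\log_s n)$ additional rounds by the same tree-of-depth-$\log_s n$ pattern used in Lemma~\ref{lem: comp-main}.

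The main obstacle is exactly the aggregation step when a single $v$ has $|N(v)|\gg s$: no single machine can absorb all contributions in one round. This is handled by the segmented reduction on the sorted array: after sorting, all $|N(v)|$ contributions destined for $v$ lie in a contiguous range of memory cells, and a balanced binary-tree converge-cast of depth $O(\log_s n)$ combines them $s$-at-a-time until one total per $v$ remains. Writing the result back to the machine that originally held $v$ is a symmetric $O(\log_s n)$-round broadcast, which also gives the value $f_{i+1}(y)$ needed to read off $a^{2r}_{x,y}$ at the final iteration.

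Putting it together, each of the $2r$ propagation steps uses $O(\log_s n)$ rounds, yielding total round complexity $2r\cdot O(\log_s n)=O(r\log_s n)$. At every round the data per edge is an $O(\log n)$-bit scalar, so the total space remains $\tilde O(m)$ and each machine's local memory stays at the assumed $s=\Omega(\log n)$. This proves the claim for a fixed pair $(x,y)$; the same procedure, run once from source $x$, simultaneously delivers $a^{2r}_{x,\cdot}$ (i.e.\ the entire row $A^{2r}_x$ needed in Section~\ref{sec:msecond-mpc-alg}) within the same round budget.
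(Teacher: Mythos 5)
Your proof is correct and takes essentially the same approach as the paper: both iterate the recurrence $f_{i+1}(v)=\sum_{u\in N(v)}f_i(u)$ (the paper phrases it as token passing) for $2r$ steps, with each step implemented by an $O(\log_s n)$-round neighborhood aggregation, giving $O(r\log_s n)$ total rounds. You simply spell out the sort-and-segmented-reduction implementation of a single step more explicitly than the paper, which instead appeals directly to Lemma~\ref{lem:comm-sense}.
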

\begin{proof}  
First, we prove the correctness of procedure $\textsc{ComputeArx}$. We use $W_s(v)$ to denote the set of nodes that can be reached by the vertex $v$ after walks with step size $s$. The proof is by induction on $s$. The base $s = 1$ is immediate. Now, we assume that after $s-1$ rounds,  the number of walks with step size $s-1$ from $v_i$ is  $ a^{s-1}_{i,j}$. In the $s$-th round, The number of $s$ walks from $v_i$ is sum of walks from those nodes which are in $W_{s-1}(v)$ walk one step. Therefore, $\Sigma_{u\in N(v_j)} \mathcal{V}(u) = \Sigma_{u\in N(v_j)} a^{s-1}_{i,u}=a^{s}_{i,j}$. Inductively, we prove the correctness. 

Now, let us see the round complexity. In each round, each node has to access its neighbors, which can be done in $O(\log_s n)$ rounds. Since we need to repeat accessing neighbors for $r$ times due to $r$ walks, the total number of round complexity is $O(r\log_s n)$. 
\end{proof}

 By taking the union of different vertices, we can get the following corollary. 
\begin{restatable}{corollary}{Armpc}\label{cor:Armpc}
Let $A^{2r}_{i}$ be set of the numbers of walks from the vertex $i$ to the vertex $j$ where $j\in[1,n]$ after walks with step size $r$. The procedure $\textsc{ComputeArx}$ can find $A^{2r}_{i}$ after $O(r\log_s n)$ rounds where each machine has memory $s=\Omega(\log n)$.
\end{restatable}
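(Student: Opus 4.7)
The plan is to observe that the token-propagation procedure $\textsc{ComputeArx}$ analyzed in the previous lemma is inherently a broadcast-and-accumulate computation from the source vertex $v_i$, and therefore it already produces, in parallel, the correct count at \emph{every} target $v_j$ rather than just at one fixed $y$. Concretely, I would re-read the inductive argument in the previous lemma with $j$ universally quantified: starting with $\mathcal V(v_i) = 1$ (and $\mathcal V(v_\ell) = 0$ for $\ell \neq i$), the update rule $\mathcal V(v_j) \leftarrow \sum_{u \in N(v_j)} \mathcal V(u)$ yields, by induction on $s$, $\mathcal V(v_j) = a^{s}_{i,j}$ for every $j \in [n]$ after $s$ propagation steps. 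Thus a single invocation from source $v_i$ produces the entire row $A^{2r}_i$ after $2r$ steps, and the ``union of different vertices'' in the statement is simply the observation that we do not need to re-run the algorithm $n$ times for the $n$ different targets.

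Next I would implement each propagation step in the $s$-space MPC model. One step asks every vertex $v_j$ to compute the sum $\sum_{u \in N(v_j)} \mathcal V(u)$; this is a separable function over $N(v_j)$, so by Lemma~\ref{lem:comm-sense} combined with the converge-cast aggregation that was already used in Lemma~\ref{lem: comp-main} and in the MPC implementation of Algorithm~\ref{alg:4} (cf.\ Lemma~\ref{lem:computeAixmpc}), each step costs $O(\log_s n)$ MPC rounds. Carrying out $2r$ such steps yields $2r \cdot O(\log_s n) = O(r \log_s n)$ total rounds, matching the claim. The resulting vector of counts $(\mathcal V(v_1),\dots,\mathcal V(v_n))$ is the desired row $A^{2r}_i$.

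The main obstacle I anticipate is the bit-complexity of the entries $a^{2r}_{i,j}$, not the round count. Because $r = O(\log n)$ and the max degree is at most $kn$, each count is bounded by $(kn)^{O(\log n)}$ and therefore fits in $O(\log^2 n)$ bits, i.e.\ $O(\log n)$ machine words. Since $s = \Omega(\log n)$, a single count still occupies only polylogarithmic local memory, so the $s$-space assumption is not violated; the per-round message size grows by at most a polylogarithmic factor, which is absorbed into the $\tilde O(\cdot)$ notation and does not affect the $O(r \log_s n)$ round count. Storing one count per vertex uses $\tilde O(n)$ total space, well within the $\tilde O(m)$ budget. Once this word-size accounting is checked, the corollary follows directly from the previous lemma applied with the source-side universal quantification described above.
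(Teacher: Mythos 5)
Your proposal matches the paper's intent: the paper proves the corollary by the one-line remark that the same token-propagation procedure already produces the count at every target $v_j$ simultaneously, and your re-reading of the induction with $j$ universally quantified is precisely what that remark relies on. Your additional check that each walk count fits in $O(\log n)$ machine words (and hence within $s=\Omega(\log n)$ local memory) is a welcome extra that the paper's own proof leaves implicit, but it does not change the argument.
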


\noindent \textbf{Compute $\lVert B_i^r-B_j^r\rVert$.}~~After obtaining $A^i_x$ and $A^i_{\cdot y}$, the value of $A^i_x J A^i_{\cdot y}$ is the multiplication of the sums of terms in each of two vectors. And the coefficient of each term is the multiplication of $C_i$, $n$ and $q$. 
Now, we can prove Theorem~\ref{thm:mpcRvivj}.
\begin{proof}[\textbf{Proof of Theorem~\ref{thm:mpcRvivj}}]
By Lemma~\ref{expanded_formula}, $\mathbf{1}^T_x (A-qJ)^{2r}\mathbf{1}_y$ which consists of at most $ O(r^2)$ terms with $C_i$, $A^{i_1}_x J A^{i_t}_{\cdot y}$, and $A^{2r}_x$. Let us see the round complexity of computing it. We take the round complexity of computing one key term as an example. We only need to look at the round complexity of computing~$\left(\prod_{l=2}^{t-1}C_{i_l}\right) (A^{i_1}_x)J(A^{i_t}_{\cdot y})$. By Corollary~\ref{cor:computeCimpc}, we need $O(i\log_s n)$ rounds to compute any $C_i$ where $i\in[2r]$. We can finish calculating~$\left(\prod_{l=2}^{t-1}C_{i_l}\right)$ in $O(r\log_s n)$ rounds. By Lemma \ref{lem:computeAixmpc}, we can obtain the result of $(A^{i_1}_x)J(A^{i_t}_{\cdot y})$ within~$O(r\log_s n)$ rounds.  Notice that there is a special term $\mathbf{1}^{T}_x A^{2r} \mathbf{1}^T_y$. By Corollary \ref{cor:Armpc}, we can find it within~$O(r\log_s n)$ rounds. There are some other similar computations, which also take $O(r\log_s n)$ rounds. Recall that there are $O(r^2)$ terms, we deal with it by copying the whole graph for $\text{poly}(\log n)$ times and then put these results together. Therefore, it takes $ O(r\log_s n)$ rounds to calculate~$\mathbf{1}^T_x (A-qJ)^{2r}\mathbf{1}_y$. Therefore, we need  $ O(r\log_s n)$ rounds to finish the calculation of $\lVert B_i^r-B_j^r\rVert$.
\end{proof}

By Theorem~\ref{thm:mpcRvivj}, we can finish the proof of Theorem~\ref{thm:secondmpc}.

\begin{proof}[\textbf{Proof of Theorem~\ref{thm:secondmpc}}]
  The main idea of \textsc{MPC-PowerIteration}(Algorithm~\ref{alg:mpc-2}) is to fix a node $v_i$ first and calculate~$\lVert B^r_i - B^r_j \rVert$ for any other node $v_j$ in the same cluster to obtain all nodes in the same cluster. We need to store all simplified coefficients  in each round which uses $O(nr^2)$ space. For other operations in the algorithms, $O(m)$ space is enough. So the total space complexity is $\tilde O(m+nr^2) = \tilde O(m)$.
  
%In Appendix~\ref{isactive_mpc}, 
Recall that we  implement $\textsc{IsActive}(W)$ in $O(\log_s n)$ rounds. By Theorem~\ref{thm:mpcRvivj}, we can finish $\lVert B^r_i - B^r_j \rVert$ within $O(r\log_s n)$ rounds.  Therefore, we need $O(r\log_s n)$ rounds to find a cluster and all its nodes. There are $k$ hidden clusters and we execute the above procedure sequentially, so the round complexity is $O(kr\log_s n)$.
\end{proof}

Now we show how to use more space to trade off round complexity and give  the proof of Theorem~\ref{thm:mpc-2}. 

\begin{proof}[\textbf{Proof of Theorem~\ref{thm:mpc-2}}]
Recall that in \textsc{MPC-PowerIteration}(Algorithm~\ref{alg:mpc-2}), we sequentially find $k$ clusters, that is the reason why there is a factor $k$ in the round complexity. Now, we execute the process in parallel. First, we randomly sample a set $S_k$ of $\Theta(k\log n)$ nodes. With high probability, for each hidden cluster, we sample $\Theta(\log n)$ nodes in $S_k$. Then, for each node $u \in S_k$, we execute $\textsc{ComputeMatrix}(B,u,v,r)$ for each $v\in V$. If $\lVert B^r_u - B^r_v\rVert \leq \Delta$, we put $u$ and $v$ in the same cluster. The space for this step is $\tilde O(km+knr^2\log n)$. Then, we will have $k$ clusters with $\Theta(k\log n)$ labels. We remove duplicated labels by keeping the label with the minimum value among all received labels to get one label vertex for each cluster. Then by using these $k$ label vertices, we can use $\tilde O(km+knr^2\log n) = \tilde O(km)$ space to cluster all vertices. So, we can find all $k$ clusters in $O(r\log_s n)$ rounds with high probability. 
%Because we find $k$ labels and do the same calculation, the total space is $O(km+knr^2\log n)=\tilde O(km)$.
\end{proof}

\bibliographystyle{plainurl}% the mandatory bibstyle
\bibliography{ref}

\appendix

\section{Some Previous MPC Operations}\label{sec:appendix}
In this section, we give definitions and theorems of some previous basic and frequently used operations in MPC model.

\subsection*{Indexing}
In the index problem, a set $S = \{x_1,x_2,\ldots,x_n\}$ of $n$ items are stored in machines. The output is 
\[
S' = \{(x,y)| x\in S, y-1 = n(x) \}
\]
where $n(x)$ is the number of items before $x$. 
\subsection*{Prefix Sum}
In the prefix sum problem, a set $S = \{(x_1,y_1),(x_2,y_2),\ldots,(x_n,y_n)\}$ of $n$ pairs are stored on machines. The output is 

\[ S' = \bigg\{(x,y') |(x,y)\in S, y'-y = \sum_{(x',y')<(x,y)} y' \bigg\},\]

where $(x',y') < (x,y)$ means that $(x',y')$ is held by an input machine with a smaller index or $(x',y')$ and $(x,y)$ are in the same machine but $(x',y')$ has a smaller local memory address.

\subsection*{Copies of Sets}

Suppose we have $k$ sets $S_1,S_2,\ldots,S_k$ stored in machines. Let $s_1,s_2,\ldots,s_k \in \mathbb{Z}$. Each machine knows the value of $s_i$ if holding an element $x\in S_i$. The goal is to create sets $S_{1,1},S_{1,2},\ldots,S_{1,s_1},\ldots,S_{k,s_k}$ in machines where $S_{i,j}$ is the $j$-th copy of $S_i$.

\begin{theorem}[\cite{andoni2018parallel,GoodrichSZ11}]\label{them:basics}
    Indexing/prefix sum/copies of sets can be solved in $O(\log_s n)$ rounds in the $s$-space MPC model.
\end{theorem}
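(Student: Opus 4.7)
The plan is to prove Theorem~\ref{them:basics} by reducing all three primitives to a common $s$-ary tree schedule over the $O((m+n)/s)$ machines, using the already-available $O(\log_s n)$-round sorting primitive of \cite{GoodrichSZ11} as a black-box router. My target is to show that each primitive requires only $O(1)$ rounds per level of an $s$-ary tree of depth $O(\log_s n)$.

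For \emph{prefix sum}, I would lay the $n$ input pairs on $O(n/s)$ consecutive machines and build a conceptual balanced $s$-ary tree whose leaves are these machines. In the bottom-up sweep, each internal node collects the at most $s$ partial sums from its children, which fits within a single machine of size $s$, and forwards the aggregated sum to its parent; this takes $O(1)$ rounds per level. In the top-down sweep, each node receives from its parent the aggregate sum of all elements lying in earlier subtrees, then sends each child the appropriate offset by sweeping once over its list of children's partial sums (again $O(1)$ rounds per level). After the root-to-leaf traversal each leaf has the offset corresponding to all preceding elements and finishes the prefix sum locally. \emph{Indexing} is then the special case $y_i=1$ for all $i$: the prefix-sum value at $x$ equals $1+n(x)$, so the output pair $(x,y)$ with $y-1=n(x)$ is produced directly.

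For \emph{copies of sets}, my plan is first to use prefix sum to compute, for every element $x\in S_i$, the contiguous range of $s_i$ ``virtual addresses'' reserved for its copies inside a linear layout of total length $\sum_i |S_i|\cdot s_i$. To materialise the copies, I would use an $s$-ary broadcast tree: each source machine sends its element to up to $s$ relay machines, each of which forwards to up to $s$ more, until $s_i$ copies exist; this takes $O(\log_s n)$ rounds since $s_i\leq n$. Finally, I route each tagged (copy, virtual-address) pair to the machine that owns its virtual address via the sorting primitive, which runs in $O(\log_s n)$ rounds.

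The main obstacle is the heterogeneity of the $s_i$'s in the copies routine: a uniform fan-out schedule would either over- or under-produce copies. My approach handles this by decoupling duplication from placement, first computing the destination slots with prefix sum and then using sorting as a universal router, both of which absorb the irregularity with no extra rounds. Correctness at each stage is immediate from the tree structure and from the correctness of the sorting primitive, while the total round complexity is the sum of $O(\log_s n)$ contributions from prefix sum, broadcast, and sorting, yielding the claimed $O(\log_s n)$ bound for all three primitives.
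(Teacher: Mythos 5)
The paper does not supply a proof of this theorem; it is stated purely as a citation to \cite{andoni2018parallel,GoodrichSZ11} in the appendix, so your argument is a reconstruction from scratch. Your $s$-ary tree sweeps for prefix sum, and the reduction of indexing to the all-ones prefix sum, are correct and standard.

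The copies-of-sets argument, however, has a gap at the ``materialise the copies'' step. You assert that an $s$-ary broadcast tree multiplies the number of copies by a factor of $s$ per round. But each machine can send only $O(s)$ words per round, and a source machine in the initial layout can hold $\Theta(s)$ distinct elements that all need replicating; the same bottleneck recurs whenever a relay machine accumulates $\Theta(s)$ distinct elements. Under this bandwidth constraint each copy can spawn only $O(1)$ new copies per round, so the unstructured fan-out doubles rather than $s$-tuplicates the copy count, giving $O(\log n) = \Theta(\log s \cdot \log_s n)$ rounds, not $O(\log_s n)$. The standard remedy inverts your last two steps: first route (by sorting) each element $x$ to the single virtual address $a_x$ in the output layout, and then perform a segmented fill --- a prefix ``last-seen-value'' scan --- that replicates $x$ into the slots $a_x+1,\dots,b_x-1$. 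This fill is a tree pass exactly like your prefix sum, forwarding one word per parent--child link in each direction, so it always respects the $O(s)$ per-machine bandwidth and runs in $O(\log_s n)$ rounds regardless of the heterogeneity of the $s_i$'s. Your toolkit (prefix sum, tree sweep, sorting) is the right one; the missing idea is that the replication must be a fill keyed to the already-sorted layout rather than an address-agnostic fan-out.
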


\end{document}